\newcommand{\blue}[1]{\textcolor{black}{#1}}
\newcommand{\spara}[1]{\smallskip\noindent{\bf #1}}
\newtheorem{proposition}{Proposition}
\newtheorem{theorem}{Theorem}
\newtheorem{claim}{Claim}
\newtheorem{example}{Example}
\newtheorem{lemma}{Lemma}
\newtheorem{problem}{Problem}
\newcommand{\LL}[1]{\textcolor{black}{#1}} 
\newcommand{\CA}[1]{\textcolor{black}{#1}}
\newcommand{\WL}[1]{\textcolor{black}{#1}}
\newcommand{\I}{\ensuremath{\mathcal{I}}}
\DeclareMathOperator*{\argmin}{arg\,min}
\DeclareMathOperator*{\argmax}{arg\,max}
\newcommand{\NPhard}{NP-hard\xspace}
\newcommand{\SPhard}{\#P-hard\xspace}
\newcommand{\MRA}{\textsc{Regret-Minimization}\xspace}
\newcommand{\defaultprob}{\delta}
\newcommand{\cals}{\mbox {${\cal S}$}}
\newcommand{\cala}{\mbox {${\cal A}$}}
\newcommand{\calr}{\mbox {${\cal R}$}}
\newcommand{\cali}{${\cal I}$\xspace}
\newcommand{\calj}{${\cal J}$\xspace}
\newcommand{\calb}{\mbox {${\cal B}$}}
\newcommand{\sigic}{\mbox {$\sigma^{ic}$}} 
\newcommand{\sigicctp}{\mbox {$\sigma^{icctp}$}}
\newcommand{\regret}{\mathcal{R}}
\newcommand{\E}{\mathrm{E}}
\newcommand{\RR}{\mathbf{R}}
\newcommand{\RQ}{\mathbf{Q}}
\newcommand{\fastAlgorithm}{Two-phase Iterative Regret Minimization\xspace}
\newcommand{\fastAlgo}{\textsc{Tirm}\xspace}
\newcommand{\eat}[1]{}
\newcommand{\squishlist}{
 \begin{list}{$\bullet$}
  {  \setlength{\itemsep}{0pt}
     \setlength{\parsep}{3pt}
     \setlength{\topsep}{3pt}
     \setlength{\partopsep}{0pt}
     \setlength{\leftmargin}{2em}
     \setlength{\labelwidth}{1.5em}
     \setlength{\labelsep}{0.5em}
} }
\newcommand{\squishlisttight}{
 \begin{list}{$\bullet$}
  { \setlength{\itemsep}{0pt}
    \setlength{\parsep}{0pt}
    \setlength{\topsep}{0pt}
    \setlength{\partopsep}{0pt}
    \setlength{\leftmargin}{2em}
    \setlength{\labelwidth}{1.5em}
    \setlength{\labelsep}{0.5em}
} }
\newcommand{\squishdesc}{
 \begin{list}{}
  {  \setlength{\itemsep}{0pt}
     \setlength{\parsep}{3pt}
     \setlength{\topsep}{3pt}
     \setlength{\partopsep}{0pt}
     \setlength{\leftmargin}{1em}
     \setlength{\labelwidth}{1.5em}
     \setlength{\labelsep}{0.5em}
} }
\newcommand{\squishend}{
  \end{list}
}
\newcommand{\flix}{\textsc{Flixster}\xspace}
\newcommand{\livej}{\textsc{LiveJournal}\xspace}
\newcommand{\irie}{\textsc{Greedy-Irie}\xspace}
\newcommand{\myopicOne}{\textsc{Myopic}\xspace}
\newcommand{\myopicTwo}{\textsc{Myopic+}\xspace}
\newcommand{\epi}{\textsc{Epinions}\xspace}
\newcommand{\dblp}{\textsc{DBLP}\xspace}
\begin{document}
\title{Viral Marketing Meets Social Advertising:\\ Ad Allocation with Minimum Regret}

\author{\begin{tabular}{ccccc}
Cigdem Aslay$^{1,2}$ & Wei Lu$^3$ & Francesco Bonchi$^2$ & Amit Goyal$^4$ & Laks V.S. Lakshmanan$^3$
\end{tabular}
\\
\\
\begin{tabular}{cccc}
  $^1$Univ. Pompeu Fabra &  $^2$Yahoo Labs &  $^3$Univ. of British Columbia &  $^4$Twitter \\
 Barcelona, Spain & Barcelona, Spain & Vancouver, Canada & San Francisco, CA \\
 {\sf aslayci@acm.org} & {\sf bonchi@yahoo-inc.com} & {\sf \{welu,laks\}@cs.ubc.ca} & {\sf agoyal@twitter.com}
\end{tabular}
}

\maketitle

\enlargethispage*{\baselineskip}
\begin{abstract}
Social advertisement is one of the fastest growing sectors in the digital advertisement landscape: ads in the form of promoted posts are shown in the feed of users of a social networking platform, along with normal social posts; if a user clicks on a promoted post, the host (social network owner) is paid a fixed amount from the advertiser. In this context, allocating ads to users is typically performed by maximizing click-through-rate, i.e., the likelihood that the user will click on the ad. However, this simple strategy fails to leverage the fact the ads can propagate virally through the network, from endorsing users to their followers.

In this paper, we study the problem of allocating ads to users through the viral-marketing lens. Advertisers approach the host with a budget in return for the marketing campaign service provided by the host. We show that allocation that takes into account the propensity of ads for viral propagation can achieve significantly better performance. However, uncontrolled virality could be undesirable for the host as it creates room for exploitation by the advertisers: hoping to tap uncontrolled virality, an advertiser might declare a lower budget for its marketing campaign, aiming at the same large outcome with a smaller cost.

This creates a challenging trade-off: on the one hand, the host aims at leveraging virality and the network effect to improve advertising efficacy, while on the other hand the host wants to avoid giving away free service due to uncontrolled virality.
We formalize this as the problem of ad allocation with minimum regret, which we show is \NPhard and inapproximable w.r.t. any factor. However, we devise an algorithm that provides approximation guarantees w.r.t. the total budget of all advertisers. We develop a scalable version of our approximation algorithm, which we extensively test on four real-world data sets, confirming that our algorithm delivers high quality solutions, is scalable, and significantly outperforms several natural baselines.
\end{abstract}

\section{Introduction}
\label{sec:intro}
\enlargethispage*{\baselineskip}

Advertising on social networking and microblogging platforms is one of
the fastest growing sectors in digital advertising, further fueled by the
explosion of investments in mobile ads.  Social ads are
typically implemented by platforms such as \texttt{Twitter},
\texttt{Tumblr}, and \texttt{Facebook} through the mechanism of
\emph{promoted posts} shown in the ``timeline'' (or feed) of their
users. A promoted post can be a video, an image, or simply a textual
post containing an advertising message.  Similar to organic (non-promoted) posts,
promoted posts can propagate from user to user in the network by means
of social actions such as \emph{``likes''}, \emph{``shares''}, or
\emph{``reposts''}.\footnote{\small \texttt{Tumblr}'s CEO David Karp reported (CES 2014) that a normal post is reposted
on average 14 times, while promoted posts are on average reposted more
than 10\,000 times:
\url{http://yhoo.it/1vFfIAc}.}
Below, we blur the distinction between these different types of action,
and generically refer to them all as \emph{clicks}. These actions have
two important aspects in common: (1) they can be seen as an
explicit form of acceptance or endorsement of the advertising message;
(2) they allow the promoted posts to propagate, so that they might be visible to
the \emph{``friends''} or \emph{``followers''} of the endorsing (i.e.,
clicking) users. In particular, the \LL{system} may supplement the ads with
\emph{social proofs} such as \emph{``X, Y, and 3 other friends clicked
on it''}, which may further increase the chance that a user will click~\cite{bakshy12,tucker12}.

This type of advertisements are usually associated with a \emph{cost per
engagement} (CPE) model. The \emph{advertiser} enters into an
agreement with the platform owner, called the \emph{host}: the advertiser agrees to pay
the host an amount $cpe(i)$ for each click received by  its ad $i$. The clicks
may come not only from the users who saw $i$ as a promoted ad post, but
also their (transitive) followers, who saw it because of viral
propagation.
The agreement also specifies a budget $B_i$, that is, the
advertiser $a_i$ will pay the host the total cost of all the clicks
received by $i$, {\sl up to a maximum of $B_i$}. Naturally, posts from
different advertisers may be promoted by the host concurrently.

Given that promoted posts are inserted in the timeline of the users,
they compete with organic social posts and with one another for a user's
attention. A large number of promoted posts (ads) pushed to a user by
the system would disrupt user experience, leading to disengagement
and eventually abandonment of the platform. To mitigate this, the host
limits the number of promoted posts that it shows to a user within a
fixed time window, e.g., a maximum of 5 ads per day per user: we call
this bound the \emph{user-attention bound}, $\kappa_u$, which may be
user specific~\cite{linHWY14}.

\eat{ 
Assuming an average attention bound of $\kappa$ for all users,
and considering a number $n$ of active users in the social networking
platform, the host has $n\kappa$ daily slots to allocate. 
However, the host might have a much larger demand from advertisers. Suppose that
in a day the host has to serve a set $ \{a_1,\cdots,a_h\}$ of $h$
advertisers, each one having a specific cost-per-engagement and budget.
Then the total number of potential clicks that the host might profit
from is $\sum_{i=1}^h \frac{B_i}{cpe(i)}$, which could be much larger than
$n\kappa \cdot ctr$, where $ctr$ is the average click through rate on ads, which
is known to be low (around 1\% -- 3\%) and can be improved by social proofs~\cite{bakshy12,tucker12}. 
} 

A subtle point here 
is that ads directly promoted by the host count
against user attention bound. On the contrary, an ad $i$ that flows from a
user $u$ to her follower $v$ should not count toward $v$'s attention
bound. In fact,  $v$ is receiving ad $i$ from user $u$, whom she is
voluntarily following: as such, it cannot be considered ``promoted''.

A na\"{\i}ve ad allocation\footnote{In the rest of the paper we use the form ``allocating ads to users'' as well as ``allocating users to ads''  interchangeably.} would match
each ad with the users most likely to click on
the ad. However, the above strategy
fails to leverage the possibility of  ads propagating virally from endorsing
users to their followers. 
We next illustrate the gains achieved by an allocation that takes viral ad propagation into
account. 

\spara{Viral ad propagation: why it matters.} \enlargethispage*{\baselineskip}
For our example we use the toy social network in Fig.~\ref{fig:viral-ad-ex}.
We assume that each time a user clicks on a promoted post, the system produces a social proof  for such engagement action, thanks to which her followers might be influenced to click as well.

In order to model the propagation of (promoted) posts in the network, we can
borrow from the rich body of work done in diffusion of information and
innovations in social networks.
In particular, the
\emph{Independent Cascade} (IC) model~\cite{kempe03}, adapted to our setting, says that once a user $u$
clicks on an ad, she has one independent attempt to try to influence
each of her neighbors $v$. Each attempt succeeds with a probability $p_{u,v}^i$
which depends on the topics of the specific ad $i$ and the influence exerted by $u$ on her neighbor $v$.
The propagation stops when no new users get influenced.
Similarly, we model  the \emph{intrinsic relevance} of a promoted post $i$ to a user $u$,
as the probability $\defaultprob(u,i)$ that $u$ will click on ad
$i$, based on the content of the ad and her own interest profile, i.e.,  the prior probability that the user will click on
a promoted post in the absence of any social proof.

Since the model is probabilistic, we focus on the number of clicks that an ad receives in \emph{expectation}.
Formal details of the propagation model, the topic model, and the definition
of expected revenue are deferred to \textsection~\ref{sec:formalprob}.

\begin{figure}[t!]
\vspace{-2mm}

\begin{framed}
\begin{small}
\begin{center}
\begin{itemize} 
\item Ads = $\{a, b, c, d\}$ 
\item $p_{uv}^i$ (on the edges) are the same  $\forall i \in \{a,b,c,d\}$
\item $ \forall u\in \{v_1,\ldots,v_6\}: \;\; \defaultprob(u,a) = 0.9, \; \; \defaultprob(u,b) = 0.8,$ \\ $\defaultprob(u,c) = 0.7, \; \; \defaultprob(u,d) = 0.6 $
\item $B_a=4, B_b=2, B_c=2, B_d=1$
\item $\kappa_u = 1\;\; \forall u\in \{v_1,\ldots,v_6\}$
  \end{itemize}

\includegraphics[width=.7\textwidth]{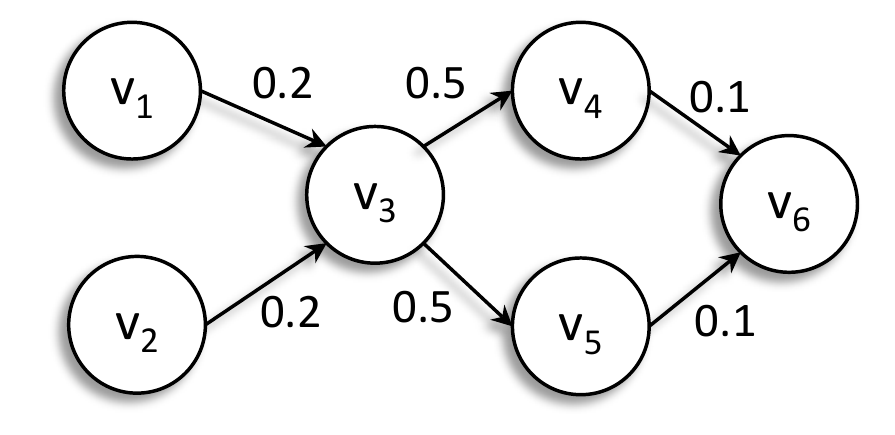}
\end{center}

\textbf{Allocation $\mathcal{A}$}: \emph{maximizing} $\defaultprob(u,i)$\\
$\langle v_1,a\rangle, \langle v_2,a\rangle, \langle v_3,a\rangle, \langle v_4,a\rangle, \langle v_5,a\rangle, \langle v_6,a\rangle$ \\
\hrule
\smallskip

$Pr^{{\cal A}}(click(v_1,a))  = Pr^{{\cal A}}(click(v_2,a)) = 0.9$\\
$Pr^{{\cal A}}(click(v_3,a)) = 1 - (1-0.9\cdot 0.2)^2(1-0.9) = 0.93$\\
$Pr^{{\cal A}}(click(v_4,a)) = Pr^{{\cal A}}(click(v_5,a)) = 1 - (1-0.93\cdot 0.5)(1-0.9) = 0.95$\\
$Pr^{{\cal A}}(click(v_6,a)) = 1 - (1-0.95\cdot 0.1)^2(1-0.9) = 0.92$\\
\vspace{-2mm}

\textbf{Expected number of clicks} = $2\times 0.9 + 0.93 + 2 \times 0.95 +
0.92 = {\bf 5.55}$\\

\hrule
\smallskip
\smallskip

\textbf{Allocation $\mathcal{B}$}: \emph{leveraging virality}\\
$\langle v_1,a\rangle, \langle v_2,a\rangle, \langle v_3,b\rangle, \langle v_4,c\rangle, \langle v_5,c\rangle, \langle v_6,d\rangle$\\
\hrule

\smallskip

$Pr^{{\cal B}}(click(v_1,a))  = Pr^{{\cal B}}(click(v_2,a)) = 0.9$\\
$Pr^{{\cal B}}(click(v_3,a)) = 1 - (1-0.9\cdot 0.2)^2 = 0.33$\\
$Pr^{{\cal B}}(click(v_4,a)) = Pr^{{\cal B}}(click(v_5,a))  = 0.33\cdot 0.5 = 0.16$\\
$Pr^{{\cal B}}(click(v_6,a)) = 1 - (1 - 0.16\cdot 0.1)^2 = 0.03$\\
$Pr^{{\cal B}}(click(v_3,b))  = 0.8$\\
$Pr^{{\cal B}}(click(v_4,b))  = Pr^{{\cal B}}(click(v_5,b)) =  0.8\cdot 0.5 = 0.4$\\
$Pr^{{\cal B}}(click(v_6,b)) = 1 - (1 - 0.8\cdot 0.5\cdot 0.1)^2 = 0.08$ \\
$Pr^{{\cal B}}(click(v_4,c))  = Pr^{{\cal B}}(click(v_5,c)) = 0.7$ \\
$Pr^{{\cal B}}(click(v_6,c))  = 1 - (1-0.7\cdot 0.1)^2 = 0.14$ \\
$Pr^{{\cal B}}(click(v_6,d))  =  0.6$ \\
\vspace{-2mm}

\textbf{Expected number of clicks} = $2\cdot 0.9 + 0.33 + 2\cdot 0.16 + 0.03 + 0.8 + 2\cdot 0.4 + 0.08 + 2\cdot 0.7 + 0.14 + 0.6 = {\bf 6.3}$.
\end{small}
\end{framed}
\vspace{-4mm}

  \caption{Illustrating viral ad propagation.  For simplicity, we round all numbers to the second decimal. \label{fig:viral-ad-ex}}

\vspace{-3mm}
\end{figure}

Consider the example in Fig.~\ref{fig:viral-ad-ex}, where we assume peer influence probabilities (on edges) are equal for all the four ads $\{a,b,c,d\}$.
The figure also reports $\defaultprob(u,i)$ and advertiser budgets. For each advertiser, CPE is 1  and the attention bound for every user is $1$, i.e., no user wants
more than one ad promoted to her by the host.
The expected revenue for an allocation is the same as the resulting expected number of clicks, as the CPE is $1$.
Below, for simplicity, we round all numbers to the second decimal \emph{after} calculating them all.

Let us consider two ways of allocating users to ads by the host. In
allocation $\cala$, the host matches each user to her top preference(s) based
on $\defaultprob(u,i)$, subject to not violating the attention bound.
This results in ad $a$ being assigned to all six users, since it has the
highest engagement probability for every user. No further ads may be
promoted without violating the attention bound. In allocation $\calb$, the
host recognizes viral propagation of ads and thus assigns $a$ to $v_1$
and $v_2$, $b$ to $v_3$, $c$ to $v_4$ and $v_5$, and $d$ to $v_6$.

Under allocation $\cala$, clicks on $a$ may come from all six users: $v_1, v_2$
click on $a$ with probability $0.9$. However, $v_3$ clicks on $a$ w.p. $(1 -
(1-0.9\cdot 0.2)^2(1-0.9)) = 0.93$.
This is obtained by combining three factors: $v_3$'s engagement probability of $0.9$ with ad $a$, and probability $0.9\cdot 0.2$ with which each of $v_1, v_2$ clicks on $a$ and influences $v_3$ to click on $a$.  In a similar
way one can derive the probability of clicking on $a$ for $v_4$, $v_5$, and $v_6$ (reported in the figure).
The overall expected revenue for allocation $\cala$ is the sum of all clicking probabilities: $2\times 0.9 + 0.93 + 2 \times 0.95 +
0.92 = {\bf 5.55}$.

Under allocation $\calb$, the ad $a$ is promoted to only $v_1$ and $v_2$ (which click on it w.p. $0.9$). Every other user that clicks on $a$ does so solely based on social influence. Thus, $v_3$ clicks on $a$ w.p. $1 - (1-0.9\cdot 0.2)^2 = 0.33$. Similarly one can derive the probability of clicking on $a$ for $v_4$, $v_5$, and $v_6$ (reported in the figure).
Contributions to the clicks on $b$ can only come from nodes $v_3, v_4, v_5, v_6$. They click on $b$, respectively, w.p. $0.8$, $0.8\cdot 0.5 = 0.4$, $0.8\cdot 0.5 = 0.4$, and $1 - (1 -
0.8\cdot 0.5\cdot 0.1)^2 = 0.08$.

Finally, it can be verified that the expected number of clicks on ad $c$ is $0.7 + 0.7 + (1 - (1-0.7\cdot 0.1)^2)$, while on  $d$ is just $0.6$. The overall number of expected clicks under allocation $\calb$ is  {\bf 6.3}.

\smallskip
\noindent
{\bf Observations}: (1) Careful allocation of users to ads that takes viral ad propagation into account can outperform an allocation that merely focuses on immediate clicking likelihood based on the content relevance of the ad to a user's interest profile. It is easy to construct instances where the gap between the two can be arbitrarily high by just replicating the gadget in Fig.~\ref{fig:viral-ad-ex}. 

(2) Even though allocation $\cala$ ignores the effect of viral ad propagation, it still benefits from the latter, as shown in the calculations. This naturally motivates finding allocations that expressly exploit such propagation in order to maximize the expected revenue.

\eat{
(3) The benefit of allocation $\calb$ over $\cala$ carries over to any attention bounds $\kappa_u$, as long as $\kappa_u \ll$ the number of ads available, which is typically the case in a real-world scenario.
}


In this context, we study the problem of {\sl how to strategically allocate users to the advertisers, leveraging social influence and the propensity of ads to propagate}.
The major challenges in solving this problem are as follows. Firstly, the host needs to strike a balance between assigning ads to users who are likely to click and assigning them to ``influential'' users who are likely to boost further propagation of the ads. Moreover, influence may well depend on the ``topic'' of the ad. E.g., $u$ may influence its neighbor $v$ to different extents on cameras versus health-related products. Therefore, ads which are close in a topic space \emph{will naturally compete} for 
users that are influential in the same area of the topic space. Summarizing, a good allocation strategy needs to take into account the different CPEs and budgets for different advertisers, users'  attention bound and interests, and ads' topical distributions.

 An even more complex challenge is brought in by the fact that uncontrolled virality could be undesirable for the host, as it creates room for exploitation by the advertisers: hoping to tap uncontrolled virality, an advertiser might declare a lower budget for its marketing campaign, aiming at the same large outcome with a smaller cost. Thus, from the host perspective, it is important to make sure the expected revenue from an advertiser is as close to the budget as possible:
both undershooting and overshooting the budget results in a \emph{regret} for the host, as illustrated in the following example.

\vspace*{-1ex}
\begin{example}\label{ex2}
Consider again our example in Fig.~\ref{fig:viral-ad-ex}, but this time along with the budgets specified the four advertisers are $B_a = 4, B_b = 2, B_c = 2, B_d = 1$. Then, rounding to the first decimal, allocation $\cala$ leads to an  overall regret of 
$|4-5.6| + |2-0| + |2-0| + |1-0| = {\bf 6.6}$: the expected revenue exceeds the budget for advertiser $a$ by $1.6$ and falls short of other advertiser budgets by $2, 2, 1$ respectively. Similarly, for allocation $\calb$, the regret is 
$|4-2.5| + |2-1.7| + |2-1.5| + |1-0.6| = {\bf 2.7}$.
\qed
\end{example}

\vspace*{-1ex}
The host knows it will not be paid beyond the budget of each advertiser, so that any excess above the budget is essentially ``free service'' given away by the host, which causes regret, and any shortfall w.r.t. the budget is a lost revenue opportunity which causes regret as well.
This creates a challenging trade-off: on the one hand, the host aims at leveraging virality and the network effect to improve advertising efficacy, while on the other hand the host wants to avoid giving away free service due to uncontrolled virality.

\spara{Contributions and roadmap.}
In this paper we make the following major contributions:
\squishlist
\item We propose a novel problem domain of allocating users to advertisers for promoting advertisement posts, taking advantage of network effect, while paying attention to important practical factors like relevance of ad, effect of social proof, user's attention bound, and limited advertiser budgets  (\textsection~\ref{sec:formalprob}).
\item We formally define the problem of \emph{minimizing regret} in allocating users to ads (\textsection~\ref{sec:formalprob}), and  show that it is \NPhard and is \NPhard to approximate within any factor (\textsection~\ref{sec:theory}).
\item We develop a simple greedy algorithm and establish an upper bound on the regret it achieves as a function of advertisers' total budget (\textsection~\ref{sec:greedy}).
\item We then devise a scalable instantiation of the greedy algorithm by leveraging the notion of \emph{random reverse-reachable sets}~\cite{borgs14,tang14} (\textsection~\ref{sec:algo}).
\item Our extensive experimentation on  four real datasets confirms that our algorithm is scalable and it delivers high quality solutions,  significantly outperforming  natural baselines (\textsection~\ref{sec:exp}).
\squishend

{\sl To the best of our knowledge, regret minimization in the context of promoting multiple ads in a social network, subject to budget and attention bounds has not been studied before.}
Related work is discussed in \textsection~\ref{sec:related},
\eat{while in \textsection~\ref{sec:formalprob} we introduce the needed concepts and we formalize the problem studied.  \textsection~\ref{sec:theory} presents our theoretical results at the basis of our algorithm, while \textsection~\ref{sec:algo} focuses on how to make our results scalable to real-world settings.
\textsection~\ref{sec:exp} contains the experimental assessment of our proposal,}
while \textsection~\ref{sec:concl} concludes the paper discussing extensions and future work.

\section{Related Work}
\label{sec:related}
\enlargethispage*{\baselineskip}
Substantial work has been done on viral marketing, which mainly focuses
on a key algorithmic problem -- \emph{influence maximization}~\cite{
kempe03, ChenWW10, goyal12}.
Kempe et al. \cite{kempe03} formulated influence maximization as
a discrete optimization problem: given a social graph and a number $k$, find a set $S$ of $k$ nodes, such that by activating them one maximizes the expected
spread of influence $\sigma(S)$ under a certain propagation model, e.g., the {\em Independent Cascade} (IC)
 model. Influence maximization is \NPhard, but the function $\sigma(S)$ is
\emph{monotone}\footnote{$\sigma(S) \leq \sigma(T)$ whenever $S \subseteq T$.}  and \emph{submodular}\footnote{$\sigma(S \cup \{w\}) - \sigma(S) \geq \sigma(T \cup \{w\}) -
\sigma(T)$  whenever $S \subseteq T$.}~\cite{kempe03}.
Exploiting these properties, the simple greedy algorithm that at each step extends the seed set with  the node providing the largest marginal gain, provides a $(1 - 1/e)$-approximation to the optimum \cite{submodular}.
The greedy algorithm is computationally prohibitive, since selecting the node with the largest marginal gain is \SPhard~\cite{ChenWW10}, and is typically approximated by numerous Monte Carlo simulations~\cite{kempe03}.
However, running many such simulations is extremely costly, and thus considerable effort has been devoted to developing efficient and scalable influence maximization algorithms: in \textsection\ref{sec:algo} we will review some of the latest advances in this area which help us devise our algorithms.

Datta et al.~\cite{datta2010viral} study influence maximization with multiple items, under a user attention constraint. However, as in classical influence maximization, their objective is to maximize the overall influence spread, and the budget is w.r.t. the size of the seed set, so without any CPE model. Their diffusion model is the (topic-blind) IC model, which also doesn't model the competition among similar items. They propose a simple greedy approximation algorithm and a heuristic algorithm for fair allocation of seeds with no guarantees.
Du et al.~\cite{du2014multiple} study influence maximization over multiple non-competing products subject to user attention constraints and product budget (knapsack) constraints, and develop approximation algorithms in a continuous time setting. 
A noteworthy feature of our work is that, as will be shown in \textsection \ref{sec:exp}, the budgets we use are such that thousands of seeds are required to minimize regret. Scalability of algorithms for selecting thousands of seeds over large networks has not been demonstrated before. Lin et al.~\cite{linHWY14} study the problem of maximizing influence spread from a website's perspective: how to dynamically push items to users based on user preference and social influence.
The push mechanism is also subject to user attention bounds.
Their framework is based on Markov Decision Processes (MDPs).

Our work departs from the body of work in this field by looking at the possibility of
integrating viral marketing into existing social advertising models and by
studying a fundamentally different objective: \emph{minimize host's regret}.

While social advertising is still in its infancy, it fits in the more general (and mature) area of computational advertising that has attracted a lot of interest during the last decade. The central problem of computational advertising is to find the ``best match'' between a given user in a given context and a suitable advertisement. The context could be a user entering a query in a search engine (``sponsored search"), reading a web page (``content match" and ``display ads"), or watching a movie on a portable device, etc.

The most typical example is sponsored search:
search engines show ads deemed relevant to user-issued queries,
in the hope of maximizing click-through rates and in turn, revenue.
Revenue maximization in this context is formalized as
the well-known {\em Adwords} problem~\cite{adwords}.
We are given a set $Q$ of keywords and $N$ bidders with their daily budgets and bids for each keyword in $Q$.
During a day, a sequence of words (all from $Q$) would arrive online and the task is to assign each word to one bidder {\sl upon its arrival}, with the objective of maximizing revenue for the given day while respecting the budgets of all bidders.
This can be seen as a generalized online bipartite matching problem, and by using linear programming techniques, a $(1-1/e)$ competitive ratio is achieved~\cite{adwords}.
Considerable work has been done in sponsored search and display ads~\cite{goel08, feldman09,feldman10,mirrokni12,devanur12}.
For a comprehensive treatment, see a recent survey~\cite{mehta13}.
Our work fundamentally differs from this as we are concerned with the {\em virality} of ads when making allocations: this concept is still largely unexplored in computational advertising.
\enlargethispage*{\baselineskip}

Recently,
Tucker \cite{tucker12} and Bakshy et al.~\cite{bakshy12} conducted field experiments on Facebook and demonstrated that adding social proofs to sponsored posts in Facebook's News Feed significantly increased the click-through rate.
Their findings empirically confirm the benefits of social influence, paving the way for the application of viral marketing in social advertising, as we do in our work.

\section{Problem statement}
\label{sec:formalprob}
\smallskip\noindent
{\bf The Ingredients.}
The computational problem studied in this paper is from the host perspective. The host 
owns: (i) a \emph{directed social graph} $G=(V,E)$, where an arc $(u,v)$ means that $v$ follows $u$, thus $v$ can see $u$'s posts and can be influenced by $u$; 
(ii) a \emph{topic model} for ads and users' interest, defined on  a space of $K$ topics; 
(iii) a \emph{topic-aware influence propagation model} defined on the social graph $G$ and the topic model.

The key idea behind the topic modeling
is to introduce a hidden variable $Z$ that can
range among $K$ states. Each topic (i.e., state of the latent
variable) represents an abstract interest/pattern and intuitively
models the underlying cause for each data observation (a user clicking on an ad).
In our setting the host owns a precomputed probabilistic topic model. The actual method used for producing the model is not important at this stage: it could be, e.g.,
 the popular \textit{Latent Dirichlet Allocation} (LDA)~\cite{blei:lda}, or any other method. What is relevant is that the topic model  maps each ad $i$ to a topic
distribution $\vec{\gamma_i}$ over the latent topic space, formally:
$\gamma_i^z = Pr(Z=z|i) \mbox{ with } \Sigma_{z = 1}^K\gamma_i^z = 1.$

\smallskip\noindent
{\bf Propagation Model.}
The propagation model governs the way that ads propagate in the social network driven by social influence. In this work, we extend a simple topic-aware propagation model introduced by Barbieri et al.~\cite{BarbieriBM12}, with Click-Through Probabilities (CTPs) for seeds: we refer to the set of users $S_i$ that receive ad $i$ directly as a promoted post from the host as the \emph{seed set} for ad $i$.
In the \emph{Topic-aware Independent Cascade} model (TIC) of~\cite{BarbieriBM12}, the propagation proceeds as follows: when a node $u$ first clicks an ad $i$, it has one chance of influencing each inactive neighbor $v$, independently of the history thus far. This succeeds with a probability that is the weighted average of the arc probability w.r.t. the topic distribution of the ad $i$:
\begin{equation}\label{eq:tic}
	p^i_{u,v} = \sum\nolimits_{z = 1}^K\gamma_i^z \cdot p_{u,v}^z.
\end{equation}

For each topic $z$ and for a seed node $u$, the probability $p_{H,u}^z$  represents the likelihood of $u$ clicking on a promoted post for topic $z$.
Thus the CTP $\defaultprob(u,i)$ that $u$ clicks on the promoted post $i$ in absence of any social proof,  is the weighted average (as in Eq.~\eqref{eq:tic}) of the probabilities $p_{H,u}^z$ w.r.t. the topic distribution of $i$.
In our extended TIC-CTP model, each $u \in S_i$ accepts to be a seed, i.e., clicks on ad $i$, with probability $\delta(u,i)$ when targeted.
The rest of the propagation process remains the same as in TIC.

Following the literature on influence maximization we denote with  $\sigma_i(S_i)$ the \emph{expected number of clicks} (according to the TIC-CTP model) for ad $i$ when the seed set is $S_i$. The corresponding expected revenue is
$\Pi_i(S_i)=\sigma_i(S_i)\cdot cpe(i)$, where $cpe(i)$ is the
cost-per-engagement that $a_i$ and the host have agreed on.

We observe that for a fixed ad $i$, with topic distribution  $\vec{\gamma_i}$, the TIC-CTP model boils down to the standard \emph{Independent Cascade} (IC) model \cite{kempe03} with CTPs, where again, a seed may activate with a probability.
We next expose the relationship between the expected spread a $\sigic(S)$ for the classical IC model without CTPs, and the expected spread  under the TIC-CTP model for a given ad $i$.
\def\lemmaICCTP{
Given an instance of the TIC-CTP model, and a fixed ad $i$, with topic distribution  $\vec{\gamma_i}$, build an instance of IC by setting the probability over each edge $(u,v)$ as in Eq.~\ref{eq:tic}. Now,  consider any node $u$,  and any set $S$ of nodes.
Let $\delta(u,i)$ be the CTP for $u$ clicking on the promoted post $i$.
Then we have
\begin{align}
\delta(u,i) [ \sigic(S \cup \{u\}) - \sigic(S) ]
= \sigma_i(S \cup \{u\}) - \sigma_i(S).
\end{align}
}

\begin{lemma}\label{lem:icctp}
{\lemmaICCTP}
\end{lemma}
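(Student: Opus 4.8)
The plan is to work in the possible-worlds (live-edge) representation of the Independent Cascade process and then to peel off the single new seed $u$ by conditioning on whether it clicks. Recall the standard coupling for the IC instance built from ad $i$ (arc probabilities $p^i_{x,y}$ as in \eqref{eq:tic}): draw a random subgraph $g$ by retaining each arc $(x,y)$ independently with probability $p^i_{x,y}$, and for a vertex set $A$ let $\rho_g(A)$ be the set of vertices reachable from $A$ in $g$, so that $\sigic(A)=\E_g[\,|\rho_g(A)|\,]$ and, for each fixed $g$, the map $A\mapsto|\rho_g(A)|$ is a monotone submodular coverage function. The TIC-CTP spread $\sigma_i$ has the same form with one extra, independent layer of randomness on top: each targeted seed $v$ accepts (clicks) independently with probability $\delta(v,i)$, and conditioned on the realized set of accepting seeds what remains is exactly an IC cascade from that set on the same live-edge graph $g$.

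The core step is to condition on the acceptance coin $c_u$ of the node $u$, which is independent of $g$ and of the coins of all other nodes. If $u\in S$ there is nothing to prove (both sides vanish), so assume $u\notin S$. Couple the TIC-CTP process run with seed set $S$ and the one run with seed set $S\cup\{u\}$ so that they share $g$ and share the acceptance coins of every node of $S$: then, pointwise in this common randomness, the two runs produce the same clicks whenever $c_u=0$, whereas when $c_u=1$ the run with $S\cup\{u\}$ produces, in addition, precisely the vertices reachable from $u$ in $g$ that were not already reached. Hence the per-sample difference in the number of clicks equals $c_u$ times the reachability marginal of $u$. Taking expectations, using $\E[c_u]=\delta(u,i)$ and its independence, factoring $\delta(u,i)$ out, and re-reading the two surviving expectations as IC spreads gives
\[
\sigma_i(S\cup\{u\})-\sigma_i(S)=\delta(u,i)\bigl(\sigic(S\cup\{u\})-\sigic(S)\bigr),
\]
as claimed.

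Two points deserve care. First, the coupling must be made a genuine pointwise statement and not just an equality in distribution: one should exhibit an explicit joint sample space on which the cascades seeded by $S$ and by $S\cup\{u\}$ share $g$ and every coin other than $c_u$, and verify that when $c_u=0$ the extra seed $u$ is inert so the two realizations literally coincide. Second -- and this is the real crux -- after factoring out $\delta(u,i)$ one must check that the residual expectation is exactly $\sigic(S\cup\{u\})-\sigic(S)$; this is the step that uses how the seeds of $S$ enter the IC baseline, and it is the only place where real verification (rather than routine bookkeeping) is needed. The degenerate cases ($u\in S$, or $\delta(u,i)\in\{0,1\}$) are immediate. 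Beyond this accounting the proof is just linearity of expectation and the independence of $u$'s click event, threaded through the live-edge coupling; as a byproduct, the same representation of $\sigma_i$ as an average of coverage functions also yields its monotonicity and submodularity.
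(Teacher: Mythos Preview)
Your live-edge coupling and conditioning on $u$'s acceptance coin $c_u$ is exactly the paper's approach: the paper also works in edge-sampled possible worlds $X$, writes the IC marginal as $\sum_X\Pr[X]\cdot|\{w:\mathbb{I}_X(\{u\},w)=1\wedge\mathbb{I}_X(S,w)=0\}|$, and then asserts in one line that the TIC-CTP marginal equals $\delta(u,i)$ times the same sum.

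You go further and flag the final identification as ``the real crux'' --- and you are right to, because as you have set it up it does not close. After factoring out $\delta(u,i)$, your residual is $\E_{g,A}\bigl[\,|\rho_g(A\cup\{u\})|-|\rho_g(A)|\,\bigr]$ where $A\subseteq S$ is the \emph{random} accepting subset of $S$ (each $v\in S$ kept independently with probability $\delta(v,i)$). That is not $\sigic(S\cup\{u\})-\sigic(S)$, which averages over $g$ alone with the \emph{full} $S$ seeding deterministically. A two-node instance separates them: take $v_1\to v_2$ with edge probability $1$, $\delta(v_1,i)=\delta(v_2,i)=\tfrac12$, $S=\{v_1\}$, $u=v_2$; then $\sigma_i(S\cup\{u\})-\sigma_i(S)=1.25-1=0.25$, while $\delta(u,i)\bigl(\sigic(S\cup\{u\})-\sigic(S)\bigr)=\tfrac12\cdot(2-2)=0$. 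The paper's proof makes the same silent jump (its $\mathbb{I}_X(S,w)$ is reachability from all of $S$, ignoring the CTPs on $S$), so the gap is inherited rather than introduced by you. The identity is correct if the seeds already in $S$ are taken to activate with probability $1$ and only the fresh seed $u$ carries a CTP; under that reading your argument is complete, and that reading suffices for the downstream monotonicity/submodularity corollary and for Theorem~\ref{thm:ctps}.
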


\begin{proof}
The proof relies on the possible world semantics.
For the IC model~\cite{kempe03}, consider a graph $G=(V,E)$ with influence probability $p_{u,v}$ on each edge $(u,v) \in E$.
A possible world, denoted $X$, is a deterministic graph generated as follows.
For each edge $(u,v) \in E$, we flip a biased coin: with probability $p_{u,v}$, the edge is declared ``live'', and with probability $1-p_{u,v}$, it is declared ``blocked''.

Define an indicator function $\mathbb{I}_X(S,v)$, which takes on $1$ if $v$ is reachable by $S$ via a path consisting entirely of live edges in $X$, and $0$ otherwise.
In the IC model,
\begin{align*}
&\qquad \sigic(S \cup \{u\}) - \sigic(S)  \\
&= \sum_X \Pr[X] \cdot ( |\{w : \mathbb{I}_X(S \cup \{u\}, w) = 1\}| -|\{w : \mathbb{I}_X(S, w) = 1\}| ) \\
&= \sum_X \Pr[X] \cdot |\{w: \mathbb{I}_X(S \cup \{u\}, w) = 1 \, \wedge \, \mathbb{I}_X(S, w) = 0 \}| \\
&= \sum_X \Pr[X] \cdot |\{w: \mathbb{I}_X(\{u\}, w) = 1 \, \wedge \, \mathbb{I}_X(S, w) = 0 \}|.
\end{align*}

Notice that for a node to be active in a possible world, it must be reachable from a seed. In each of the possible worlds, node $u$ has probability $\delta(u,i)$ to accept to become a seed. 
Thus, in the TIC-CTP model, we have:
\begin{align*}
&\qquad \sigma_i(S \cup \{u\}) - \sigma_i(S) \\
&= \delta(u, i) \cdot \sum_X \Pr[X] \cdot |\{w: \mathbb{I}_X(\{u\}, w) = 1 \, \wedge \, \mathbb{I}_X(S, w) = 0 \}|.
\end{align*}

This directly leads to
$$
\delta(u, i) (\sigic(S \cup \{u\}) - \sigic(S)) = \sigma_i(S \cup \{u\}) - \sigma_i(S),
$$
which was to be shown.
\end{proof}

A corollary of the above lemma is that for a fixed $\vec{\gamma_i}$, the expected spread  $\sigma_i(\cdot)$ function under the TIC-CTP model, inherits the properties of monotonicity and submodularity from the IC model (see Sec.~\ref{sec:related} and \cite{kempe03,BarbieriBM12}). In turn, $\Pi_i(S_i) = cpe(i) \cdot \sigma_i(S_i)$ is also monotone and submodular, being a non-negative linear combination of monotone submodular functions.


\smallskip\noindent
{\bf Budget and Regret.} As in any other advertisement model, we assume that
each advertiser $a_i$ has a finite budget $B_i$ for a campaign on ad $i$, which limits the maximum amount  that $a_i$ will
pay the host. The host needs to allocate seeds to each of the ads that
it has agreed to promote, resulting in an allocation $\cals = (S_1, ...,
S_h)$. 
The expected revenue from the campaign may fall short of the budget (i.e.,
$\Pi_i(S_i) < B_i$) or overshoot it (i.e., $\Pi_i(S_i) > B_i$). An
advertiser's natural goal is to make its expected revenue as close to
$B_i$ as possible: the former situation is lost opportunity to make
money whereas the latter amounts to ``free service'' by the host to the
advertiser. Both are undesirable. Thus, one option to define the host's
regret for seed set allocation $S_i$ for advertiser $a_i$ is as $\left|B_i
- \Pi_i(S_i) \right|$.

Note that this definition of regret has the drawback that it does not discriminate between small and large seed sets: given two
seed sets $S_1$ and $S_2$ with the same regret as defined above, and with $|S_1| \ll |S_2|$,
this definition does not prefer one over the other. In practice, it is desirable to achieve a low regret with a small number of seeds. By drawing on the inspiration from the optimization literature~\cite{Boyd:2004:CO}, where an additional penalty corresponding to the complexity of the solution is added to the
error function to discourage overfitting, we propose to add a similar penalty term to discourage the use of large seed sets. Hence we define the \emph{overall regret} as
\begin{equation}\label{eq:reg}
	\calr_i(S_i) = \left|B_i - \Pi_i(S_i) \right| + \lambda \cdot |S_i|.
\end{equation}

Here, $\lambda\cdot|S_i|$ can be seen as a penalty for the use of a seed set: the larger its size, the greater the penalty. This discourages the choice of a large number of poor quality seeds to exhaust the budget. When $\lambda=0$, no penalty is levied and the ``raw'' regret corresponding to the budget alone is measured. We assume w.l.o.g. that the scalar $\lambda$ encapsulates CPE such that the term $\lambda|S_i|$ is in the same monetary unit as $B_i$. How small/large should $\lambda$ be? We will address this question in the next section.

The overall regret from an allocation $\cals = (S_1, ..., S_h)$ to all
advertisers is
\begin{align}\label{eq:total_regret}
	\calr(\cals) = \sum\nolimits_{i=1}^h \calr_i(S_i).
\end{align}

\begin{example}\label{ex3}
In Example~\ref{ex2}, the regrets reported for allocations $\cala$ ($6.6$) and $\calb$ ($2.7$) correspond to $\lambda=0$. When $\lambda=0.1$, the regrets change to $6.6 + 0.1\times 6 = 7.2$ for $\cala$ and to $2.7+0.1\times 6 = 3.3$ for $\calb$. \qed
\end{example}

%

As noted in the introduction, in practice, the number of ads that can be
promoted to a user may be limited. The host can even personalize this
number depending on users' activity.
We model this using an attention bound $\kappa_u$ for user $u$.
An allocation $\cals = (S_1, ...,
S_h)$ is called \emph{valid} provided for every user $u\in V$, $|\{S_i \in
\cals \mid u\in S_i\}| \le \kappa_u$, i.e., no more than $\kappa_u$ ads
are promoted to $u$ by the allocation. We are now ready to formally state
the problem we study.

\begin{problem}[\MRA]\label{pr:noSeedCosts}
We are given $h$ advertisers $a_1, \ldots, a_h$, where each $a_i$
has an ad $i$ described by topic-distribution $\vec{\gamma_i}$, a
budget $B_i$, and a cost-per-engagement $cpe(i)$.
Also given is a social graph $G=(V,E)$ with a
probability $p_{u,v}^z$ for each edge $(u,v) \in E$ and each topic $z
\in [1,K]$, an attention bound $\kappa_u$, $\forall u\in V$, and a penalty
parameter $\lambda \ge 0$. The task is to compute a valid
allocation $\cals = (S_1, \ldots, S_h)$ that minimizes the
overall regret:
$$
\cals = \argmin_{\substack{\mathcal{T}=(T_1,\ldots,T_h):T_i \subseteq V \\ \mathcal{T} \mbox{ is valid}}} \calr(\mathcal{T}).
$$
\end{problem}


\noindent
\textbf{Discussion.}
Note that $\Pi_i(S_i)$ denotes the expected revenue from advertiser
$a_i$. In reality, the actual revenue depends on the number of
engagements the ad \emph{actually} receives. Thus, the uncertainty in
$\Pi_i(S_i)$ may result in a loss of revenue.
Another concern could be that regret on the positive side ($\Pi_i(S_i) >
B_i$) is more acceptable than on the negative side ($\Pi_i(S_i) <
B_i$), as one can argue that maximizing revenue is a more critical goal
even if it comes at the expense of a small and reasonable amount of free
service.
Our framework can accommodate such concerns and can easily address them.
For instance, instead of defining raw regret as $|B_i -
\Pi_i(S_i)|$, we can define it as $|B_i' - \Pi_i(S_i)|$, where $B_i' =
(1+\beta) \cdot B_i$. The idea is to artificially boost the budget $B_i$
with parameter $\beta$ allowing maximization of revenue while keeping
the free service within a modest limit. This small change has no
impact on the validity of our results and algorithms.
\eat{For instance,
Theorem~\ref{thm:greedyOneThird} shows that our algorithm achieves a regret of at most $\sum_i B_i/3$. If we decide
to use boosted budgets $B_i'$ instead of $B_i$, the upper bound on regret
would then be $\sum_i B_i'/3$. The other theorems can be extended in a similar fashion.
}
Theorem~\ref{thm:full-regret} provides an upper bound on the regret achieved by our allocation algorithm (\textsection~\ref{sec:greedy}). The bound remains intact except that in place of the original budget $B_i$, we should use the boosted budget $B_i'$. This remark applies to all our results.
We henceforth study the problem as defined in Problem
\ref{pr:noSeedCosts}.

\section{Theoretical Analysis}
\label{sec:theory}
\enlargethispage*{\baselineskip}
We first show that  \MRA\ is
not only \NPhard to solve optimally, but is also \NPhard to approximate
within any factor (Theorem~\ref{thm:mra-hard}).
On the positive side, we propose a greedy algorithm and conduct a careful
analysis to establish a bound on the regret it can achieve as a function of the budget (Theorems~\ref{thm:full-regret}-\ref{thm:greedyBetter}).

\eat{In particular, we show that our greedy algorithm guarantees a
regret of no more than $\frac{1}{3}\sum_{i} B_i$
(Theorem~\ref{thm:greedyOneThird}), i.e., within a third of the total
budget. We also show that our algorithm in fact provides better
guarantees, depending on input instances
(Theorem~\ref{thm:greedyBetter}). Click-through-probabilities are orthogonal to the results of this section and to the greedy algorithm presented here. Therefore, in this section, for simplicity, we will assume all CTPs are $1$ without loss of generality. They are addressed carefully in Section~\ref{sec:algo}.
We start with the following hardness result. }

\begin{theorem}\label{thm:mra-hard}\label{thm:mra-inapprox}
\MRA is \NPhard and is \NPhard to approximate within any factor.
\end{theorem}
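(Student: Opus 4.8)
My plan is to prove a statement stronger than the bare claim that exact optimization is NP-hard: namely, that it is NP-hard even to \emph{decide whether the optimum regret of an \MRA\ instance equals $0$}. This single fact yields both halves of the theorem at once. Suppose some polynomial-time algorithm always returned a valid allocation whose regret is at most $\alpha$ times the optimum, for some factor $\alpha\ge 1$ (even one allowed to grow with the input size). On an instance whose optimum is $0$ it would be forced to output regret $\le\alpha\cdot 0=0$, i.e.\ exactly $0$, since regret is always nonnegative; on an instance whose optimum is strictly positive it would output something strictly positive. It could therefore decide ``is the optimum zero?'' in polynomial time, which is impossible unless $\mathrm{P}=\mathrm{NP}$. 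Hence both the NP-hardness of exact optimization and the nonexistence of any multiplicative approximation follow once the zero-regret decision problem is shown NP-hard, and I would devote the rest of the proof to that reduction.

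\textbf{The reduction.} I would reduce from \textsc{Subset Sum}: given positive integers $a_1,\dots,a_n$ and a target $t$, is there $J\subseteq\{1,\dots,n\}$ with $\sum_{j\in J}a_j=t$? Set $M=\sum_j a_j$ and build an \MRA\ instance with a single advertiser $a_1$ (so that $\calr(\cals)=|B_1-\Pi_1(S_1)|+\lambda|S_1|$), a single topic (so the topic model is degenerate and $p^1_{u,v}$ is simply the given arc probability), $cpe(1)=1$, penalty $\lambda=0$, and budget $B_1=t/M$. Let the social graph $G$ consist of $n$ isolated vertices $u_1,\dots,u_n$ with attention bounds $\kappa_{u_j}=1$ and click-through probabilities $\delta(u_j,1)=a_j/M\in(0,1]$. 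Since $G$ has no arcs, every $S\subseteq\{u_1,\dots,u_n\}$ is a valid allocation and its expected number of clicks is $\sigma_1(S)=\sum_{u_j\in S}\delta(u_j,1)=\tfrac1M\sum_{u_j\in S}a_j$, so its regret is $\tfrac1M\bigl|\,t-\sum_{u_j\in S}a_j\,\bigr|$. Therefore the instance admits a zero-regret allocation iff some subset of the $a_j$ sums to $t$, while in every ``no'' instance the optimum is at least $1/M>0$; the reduction is plainly polynomial, as each number written has $O(\log M)$ bits. (If one prefers a reduction that actually exercises the diffusion, each $u_j$ can be replaced by a star with centre $c_j$, $a_j-1$ leaves, all arc probabilities and all CTPs equal to $1$, $\kappa_{c_j}=1$, and $\kappa_{\mathrm{leaf}}=0$, so that seeding $c_j$ deterministically activates exactly $a_j$ vertices while leaves can never be seeded; this variant is polynomial-sized whenever the weights are polynomially bounded, e.g.\ when reducing from the strongly NP-complete \textsc{3-Partition} with one advertiser per target triple and attention bounds coupling the $3m$ centres.)

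\textbf{Where the difficulty lies.} The arithmetic is routine; the two things I would be careful about are (i) keeping the reduction polynomial-sized when the \textsc{Subset Sum} weights are large --- which is exactly why I encode each weight $a_j$ as the \emph{fractional} click-through probability $a_j/M$ rather than as a block of $a_j$ vertices --- and (ii) producing a fixed positive lower bound ($1/M$ here) on the optimum over ``no'' instances, since it is precisely the gap ``optimum $=0$ versus optimum $\ge 1/M$'' that upgrades the conclusion from ``no constant-factor approximation'' to ``no approximation within any factor''. The remaining work is just to check gadget faithfulness: that $\lambda=0$ collapses the regret to $|B_1-\Pi_1(S_1)|$, and that the attention bounds leave the host no way to cheat (in the star variant, no way to seed a leaf), so that zero regret is attainable exactly when the \textsc{Subset Sum} instance is a yes-instance.
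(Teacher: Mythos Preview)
Your proof is correct and shares with the paper the key inapproximability trick: show that deciding whether the optimum regret equals $0$ is \NPhard, so that any $\alpha$-approximation would be forced to output regret $\le\alpha\cdot 0=0$ on YES-instances and thereby decide the source problem. Where you diverge is in the reduction itself. You reduce from \textsc{Subset Sum} with a \emph{single} advertiser and an \emph{edgeless} graph, encoding each weight $a_j$ as the fractional click-through probability $\delta(u_j,1)=a_j/M$; the paper instead reduces from \textsc{3-Partition}, uses $h=m$ advertisers (one per target triple), and encodes each weight $x_i$ as a star on $x_i$ vertices with all influence probabilities equal to $1$, relying on the strong \NP-hardness of \textsc{3-Partition} to keep the total vertex count polynomial. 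Your route is shorter, needs only one advertiser, and sidesteps the polynomial-size issue by pushing the arithmetic into the CTPs rather than into vertex counts. The paper's route, on the other hand, yields a slightly sharper statement: hardness holds even when every CTP is $1$ and every influence probability lies in $\{0,1\}$, so the intractability is not an artifact of allowing arbitrary rational click-through probabilities. Your parenthetical star-plus-\textsc{3-Partition} variant is essentially the paper's construction (the paper sets all attention bounds to $1$ rather than forcing $\kappa_{\mathrm{leaf}}=0$, and then argues that leaf nodes contribute spread $1$ and so cannot help reach $C/m$; your choice of $\kappa_{\mathrm{leaf}}=0$ is cleaner).
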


\begin{proof}
We prove hardness for the special case where $\lambda = 0$, using a
reduction from \textsc{3-Partition}~\cite{Garey1979}.

Given a set $X=\{x_1, ..., x_{3m}\}$ of positive integers whose sum is
$C$,  with $x_i \in (C/4m, C/2m)$, $\forall i$, \textsc{3-Partition}
asks whether $X$ can be partitioned into $m$ disjoint 3-element subsets,
such that the sum of elements in each partition is the same (= $C/m$).
This problem is known to be strongly \NPhard, i.e., it remains
\NPhard even if the integers $x_i$ are bounded above by a polynomial in
$m$~\cite{Garey1979}. Thus, we may assume that $C$ is bounded by a polynomial
in $m$.

Given an instance \cali\ of \textsc{3-Partition}, we construct an instance \calj\ of \MRA
as follows. First, we set the number of advertisers $h=m$ and let the
cost-per-engagement (CPE) be $1$ for all advertisers. Then, we construct a
directed bipartite graph $G=(U \cup V, E)$: for each number $x_i$, $G$
has one node $u_i\in U$ with $x_i-1$ outneighbors in $V$, with all
influence probabilities set to $1$. We refer to members of $U$ (resp.,
$V$) as ``$U$'' nodes (resp., ``$V$'' nodes) below. Set all advertiser
budgets to $B_i = C/m$, $1\le i\le m$ and the attention bound of every
user to $1$. This will result in a total of $C$ nodes in the instance of
\MRA. Since $C$ is bounded by a polynomial in $m$, the reduction is
achieved in polynomial time.

We next show that if \MRA can be solved in polynomial time, so can
\textsc{3-Partition}, implying hardness.  To that end, assume there
exists an algorithm ${\bf A}$ that solves \MRA optimally. We can use ${\bf A}$
to distinguish between YES- and NO-instances of \textsc{3-Partition} as
follows. Run ${\bf A}$ on \calj\ to yield a seed set allocation $\cals =
(S_1, ..., S_m)$. We claim that \cali\ is a YES-instance of
\textsc{3-Partition} iff $\calr(\cals) = 0$, i.e., the total regret of
the allocation \cals\ is zero.

\noindent
($\Longrightarrow$): Suppose $\calr(\cals) = 0$. This implies the regret of
every advertiser must be zero, i.e., $\Pi_i(S_i) = B_i = C/m$. We shall
show that in this case, each $S_i$ must consist of 3 ``$U$'' nodes whose
spread sums to $C/m$. From this, it follows that the 3-element subsets
$X_i := \{x_j \in X \mid u_j \in S_i\}$ witness the fact that \I\ is a
YES-instance. Suppose $|S_i| \ne 3$ for some $i$. It is trivial to see
that each seed set $S_i$ can contain only the ``$U$'' nodes, for the
spread of any ``$V$'' node is just $1$. If $|S_i| \ne 3$, then
$\Pi_i(S_i) = \sum_{u_j\in S_i} x_j \ne C/m$, since all numbers are in
the open interval $(C/4m, C/2m)$. This shows that every seed set $S_i$
in the above allocation must have size $3$, which was to be shown.

\noindent
($\Longleftarrow$): Suppose $X_1, ..., X_m$ are disjoint 3-element subsets
of $X$ that each sum to $C/m$. By choosing the corresponding
``$U$''-nodes we get a seed set allocation whose total regret is zero.

We just proved that \MRA is \NPhard. To see hardness of approximation,
suppose ${\bf B}$ is an algorithm that approximates \MRA within a factor of
$\alpha$. That is, the regret achieved by algorithm ${\bf B}$ on any
instance of \MRA is $\le \alpha\cdot OPT$, where $OPT$ is the optimal
(least) regret. Using the same reduction as above, we can see that the
optimal regret on the reduced instance \calj\ above is $0$. On this
instance, the regret achieved by algorithm ${\bf B}$ is $\le \alpha\cdot 0
= 0$, i.e., algorithm ${\bf B}$ can solve \MRA optimally in polynomial
time, which is shown above to be impossible unless $P = NP$.
\end{proof}

%

\subsection{A Greedy Algorithm}
\label{sec:greedy} 

Due to the hardness of approximation of Problem 1, no polynomial
algorithm can provide any theoretical guarantees w.r.t.\ optimal
overall regret.
Still, instead of jumping to heuristics without any guarantee,
we present an intuitive greedy algorithm (pseudo-code in
Algorithm~\ref{alg:rmVanilla}) with theoretical guarantees in terms of the total budget.
It is worth noting that
analyzing regret w.r.t.\ the total budget has real-world relevance, as budget is a concrete monetary and known quantity (unlike optimal value of regret)
which makes it easy to understand regret from a business perspective.

\IncMargin{1em}
\begin{algorithm}[t!]
\caption{Greedy Algorithm}
\label{alg:rmVanilla}
\Indm
{\small
\SetKwInOut{Input}{Input}
\SetKwInOut{Output}{Output}
\SetKwComment{tcp}{//}{}
\Input{$G=(V,E)$; $\lambda$; attention bounds $\kappa_u, \forall u\in V$; items $\vec{\gamma}_i$ with $cpe(i)$
\& budget $B_i$, $i = 1,\ldots,h$; $\delta(u,i), \forall u \forall i$}
\Output{$S_1, \ldots, S_h$}
}
\Indp
{
$S_i \gets \emptyset$, $\forall i = 1,\ldots,h$\ \label{line:rm1} \\
\While{true} {
  {\small
	  $(u, a_i) \leftarrow \argmax_{v,a_j} \mathcal{R}_j(S_j) -
	  \mathcal{R}_j(S_j \cup \{v\}) $, \hspace*{12ex} subject to: \ $|\{S_{\ell} | v \in S_{\ell}\}| < \kappa_v$  \textbf{and}
	  \hspace*{20ex} $\mathcal{R}_j(S_j \cup \{v\}) \le
                                \mathcal{R}_j(S_j))$ \label{line:rm3}
  }

\textbf{if} $(u, a_i)$ is \textbf{null then} return
\textbf{else} $S_i \leftarrow S_i \cup \{u\}$
}
}
\end{algorithm}
\DecMargin{1em}

\eat{
We have shown in the previous section that, while the objective of the \MRA problem is quite different than the influence maximization problem, the greedy approach to both problems share the same operational principle of selecting the nodes with the maximum marginal gain at each iteration of the greedy algorithm. This follows from the fact that per-advertiser regret $\mathcal{R}_i(S_i)$, being a non-monotonic function of absolute differences, shows monotonically decreasing submodular behavior while $\Pi_i(S_i) < B_i$. Thus, identification of the node $v$ that results in the maximum reduction in $\mathcal{R}_i(S_i)$, for each advertiser $i$ in an iteration of the greedy algorithm, is the same as selecting the node $u$ that provides the maximum marginal gain in spread on $G^i=(V,E,p^i)$, hence the maximum marginal gain in the revenue $\Pi_i(S_i)$.
}


The algorithm starts by initializing all seed sets to be empty
(line~\ref{line:rm1}). It keeps selecting and allocating seeds until
regret can no longer be minimized.  In each iteration, it finds a
user-advertiser pair $(u, a_i)$ such that $u$'s attention bound is not
reached (that is, ~$|\{S_i | u \in S_i\}| < \kappa_u$) and adding $u$ to $S_i$
(the seed set of $a_i$) yields the largest decrease in regret among all
valid pairs. Clearly, we want to ensure that regret does not increase in
an iteration (that is, $\mathcal{R}_i(S_i \cup \{u\}) <
\mathcal{R}_i(S_i)$) (line 3). The user $u$ is then added to $S_i$. If
no such pair can be found, that is, regret cannot be reduced further,
the algorithm terminates (line 4).


Before stating our results on bounding the overall regret achieved by the greedy algorithm, we identify extreme (and
unrealistic) situations where no such guarantees may be possible.

\smallskip\noindent
{\bf Practical considerations.}
Consider a network with $n$ users, one advertiser with a CPE of $1$ and
a budget $B \gg n$.
Assume CTPs are all $1$.
Clearly, even if all $n$ users are allocated
to the advertiser, the regret approaches $100\%$ of $B$, as most of the
budget cannot be tapped. At another extreme, consider a dense network with $n$
users (e.g., clique), one advertiser with a cpe of $1$ and a budget $B \ll n$.
Suppose the network has high influence probabilities, with the result
that assigning \emph{any} one seed $u$ to the advertiser will result in
an expected revenue $\Pi(\{u\}) \gg B$. In this case, the allocation with
the least regret is the empty allocation (!) and the regret is exactly
$B$! In many practical settings, the budgets are large enough that the
marginal gain of any one node is a small fraction of the budget and
small enough compared to the network size, in that there are enough
nodes in the network to allocate to each advertiser in order to exhaust
or exceed the budget.


%
%

\subsection{The General Case}

In this subsection, we establish an upper bound on the regret achieved by Algorithm~\ref{alg:rmVanilla}, when every candidate seed has essentially an unlimited attention bound.
For convenience, we refer to the first term in the definition of regret (cf. Eq. \ref{eq:reg}) as \emph{budget-regret} and the second term as \emph{seed-regret}. The first one reflects the regret arising from undershooting or overshooting the budget and the second arises from utilizing seeds which are the host's resources. For a seed set $S_i$ for ad $i$, the \emph{marginal gain} of a node $x\in V\setminus S_i$ is defined as $MG_i(x|S_i) := \Pi_i(S_i\cup\{x\}) - \Pi_i(S_i)$.
By submodularity, the marginal gain of any node is the greatest w.r.t.\ the empty seed set, i.e., $MG_i(x|\emptyset) = \Pi_i(\{x\})$.
Let $p_i$ be the maximum marginal gain of any node w.r.t. ad $i$, as a fraction of its budget $B_i$, i.e., $p_i := \mbox{max}_{x\in V} \, \Pi_i(\{x\})/B_i$. As discussed at the end of the previous subsection, we assume that the network and the budgets are such that $p_i \in (0,1)$, for all ads $i$. In practice, $p_i$ tends to be a small fraction of the budget $B_i$.
Finally, we define $p_{max} := \max_{i=1}^h p_i$ to be the maximum $p_i$ among all advertisers.

\begin{theorem}\label{thm:full-regret}
Suppose that for every node $u$, the attention bound  $\kappa_u \ge h$, the number of advertisers, and that $\lambda \le \delta(u,i)\cdot cpe(i)$, $\forall$ user $u$ and ad $i$.
Then the regret incurred by Algorithm~\ref{alg:rmVanilla} upon termination is at most
\[
\sum_{i=1}^h \frac{p_i B_i  + \lambda}{2} \,+\, \lambda \cdot \sum_{i=1}^h \left( 1 + s^i_{opt} \lceil\ln\frac{1}{p_i/2 - \lambda/2B_i}\rceil \right),
\]
where $s^i_{opt}$ is the smallest number of seeds required for reaching or exceeding the budget $B_i$ for ad $i$.
\end{theorem}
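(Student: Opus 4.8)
The plan is to use the slack in the attention bound to reduce to a per-advertiser analysis, then run a submodular-greedy style argument on each advertiser separately, and finally sum.

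\textbf{Step 1: decoupling.} Since every user may serve as a seed for all $h$ ads, the constraint $|\{S_\ell \mid v \in S_\ell\}| < \kappa_v$ in Algorithm~\ref{alg:rmVanilla} is never binding. Moreover $\calr_j(S_j)$ depends only on $S_j$, so the subsequence of moves that Algorithm~\ref{alg:rmVanilla} performs for advertiser $j$ is itself a valid execution of the greedy restricted to $j$, and the algorithm halts only once no advertiser admits a regret-reducing move. Hence the final seed set $S_j$ coincides with the output of the single-advertiser greedy on $j$, and $\calr(\cals)=\sum_{j=1}^h \calr_j(S_j)$. It therefore suffices to bound the regret of the single-advertiser greedy on each $a_i$ and add.

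\textbf{Step 2: geometric shrinkage of the budget-shortfall.} Fix $a_i$, let $S_i^{(k)}$ be its seed set after $k$ steps, and let $a_k := B_i - \Pi_i(S_i^{(k)})$ be the shortfall (before any overshoot), with $a_0=B_i$. Let $O$ be a smallest seed set with $\Pi_i(O)\ge B_i$, so $|O|=s^i_{opt}$. While $\Pi_i(S_i^{(k)})<B_i$, monotonicity and submodularity of $\Pi_i$ (Lemma~\ref{lem:icctp} and its corollary) give $\sum_{x\in O} MG_i(x\mid S_i^{(k)}) \ge \Pi_i(O)-\Pi_i(S_i^{(k)}) \ge a_k$, so some node has marginal gain at least $a_k/s^i_{opt}$. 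Because for any non-overshooting candidate the one-step change in $\calr_i$ equals its marginal gain minus $\lambda$, the greedy picks, among such candidates, one of maximum marginal gain; since every marginal gain is at most $\max_x\Pi_i(\{x\})=p_iB_i$, this choice has marginal gain at least $a_k/s^i_{opt}$ while $a_k$ is not too small, so $a_{k+1}\le a_k(1-1/s^i_{opt})$. (The hypothesis $\lambda\le\delta(u,i)\,cpe(i)\le\Pi_i(\{u\})$ guarantees $p_iB_i\ge\lambda$, so the logarithm below is meaningful; together with the standing assumption that budgets are large relative to single-node gains, it also ensures a regret-reducing move exists throughout this phase.) Iterating and using $(1-1/s^i_{opt})^{s^i_{opt}}\le e^{-1}$, after $t_i := s^i_{opt}\lceil\ln\frac{1}{p_i/2-\lambda/(2B_i)}\rceil$ steps the shortfall is at most $B_i\cdot(p_i/2-\lambda/(2B_i)) = (p_iB_i-\lambda)/2$.

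\textbf{Step 3: endgame and overshoot.} Every greedy step strictly decreases $\calr_i$, so the final per-advertiser regret is at most $\calr_i(S_i^{(t_i)})\le (p_iB_i-\lambda)/2 + \lambda t_i$, which is at most the claimed per-advertiser term $(p_iB_i+\lambda)/2 + \lambda\bigl(1+s^i_{opt}\lceil\ln\frac{1}{p_i/2-\lambda/(2B_i)}\rceil\bigr)$. Two corner cases need checking. If the greedy adds a seed that overshoots $B_i$, that move still reduces regret, so its overshoot $b$ satisfies $b+\lambda<a_k$ (the shortfall just before the move), while submodularity bounds that move's marginal gain by $a_k+b\le p_iB_i$; adding these, $2b<p_iB_i-\lambda$, so the resulting budget-regret is below $(p_iB_i-\lambda)/2$. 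If instead the greedy terminates with a positive shortfall $a$ before step $t_i$, then no node — in particular the node of $O$ with marginal gain $\ge a/s^i_{opt}$ — reduces regret; a short case split (that node either has marginal gain $\le\lambda$, forcing $a\le\lambda s^i_{opt}$, or overshoots with marginal gain $\ge 2a-\lambda$, forcing $a\le(p_iB_i+\lambda)/2$) shows the shortfall is $\le(p_iB_i+\lambda)/2$ under the standing budget assumption. In all cases the budget-regret is $\le(p_iB_i+\lambda)/2$ and the seed-regret is absorbed by the $\lambda(1+s^i_{opt}\lceil\cdots\rceil)$ term; summing over $i=1,\ldots,h$ gives the bound.

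\textbf{Where the work is.} The hard part will be the behaviour of the greedy near the budget: the $\lambda$-penalty turns the one-step reduction in $\calr_i$ into a non-monotone ``tent'' function of the marginal gain (favouring the largest gain below the current shortfall, or the smallest gain above it), so the clean submodular-greedy progress estimate only applies comfortably while the shortfall exceeds $p_iB_i$. Carrying the geometric decay all the way down to shortfall $(p_iB_i-\lambda)/2$, bounding any overshoot at the boundary, and ruling out premature termination with a large shortfall — which is exactly where the hypothesis on $\lambda$ and the practical assumptions on budget size are needed — is the delicate part; the remainder is routine $(1-1/e)$-type greedy accounting plus the decoupling above.
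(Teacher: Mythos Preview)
Your proposal is correct and takes essentially the same route as the paper: decouple per advertiser via $\kappa_u\ge h$, bound the terminal budget-regret by a submodularity case analysis on the last step (the paper's Claim~2), and bound the seed count by the standard $(1-1/s^i_{opt})$ geometric shrinkage against a minimum budget-reaching set (the paper's Claims~3--4). The one place you are slightly looser than the paper is your early-termination case (a): the hypothesis $\lambda\le\delta(u,i)\,cpe(i)$ forces every non-seed node to have marginal gain at least $\lambda$ (each node contributes at least itself to spread, so $MG_i(y\mid S_i)\ge\delta(y,i)\,cpe(i)\ge\lambda$), hence a non-overshooting node always weakly reduces regret and greedy cannot halt below budget unless \emph{every} remaining node overshoots --- which is precisely your case (b) and gives $a\le(p_iB_i+\lambda)/2$ directly, without appeal to an extra ``standing budget assumption.''
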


\smallskip\noindent
{\bf Discussion}: The term $\delta(u,i) \cdot cpe(i)$ corresponds to the expected revenue from user $u$ clicking on $i$ (without considering the network effect). Thus, the assumption on $\lambda$, that it is no more than the expected revenue from any one user clicking on an ad, keeps the penalty term small, since in practice click-through probabilities tend to be small. Secondly, the regret bound given by the theorem can be understood as follows. Upon termination, the budget-regret from Greedy's allocation is at most $(1/2)p_{max}B$ (plus a small constant $\lambda/2$). 
The theorem says that Greedy achieves such a budget-regret while being frugal w.r.t. the number of seeds it uses. Indeed, its seed-regret is bounded by the minimum number of needs that an optimal algorithm would use to reach the budget, multiplied by a logarithmic factor. 

\begin{proof}[of Theorem~\ref{thm:full-regret}]

\eat{
\begin{claim}\label{claim1}
Suppose $\Pi_i(S_i) < B_i$ and that $\exists$ a seed $x \in V \setminus S_i$:
$\Pi_i(S_i+x) \le B_i$. Then Greedy will add a node with these properties to $S_i$.
\end{claim}
\noindent
{\bf Proof of Claim}: Under these conditions, the contribution of a seed $x$ to the  regret rise is $\lambda \times cpe(i) \geq
cpe(i) \times \delta(x,i)$, where $a_i$ is the ad to which $x$ is allocated. On the other hand, $x$'s contribution to regret drop is $\ge 1 \cdot \delta(x,i) \cdot cpe(i) = \lambda$, so the net decrease in regret from adding $x$ to $S_i$  is non-negative. Thus, Greedy will add such a node,
namely the node that results in the largest non-negative drop in regret, to $S_i$. \qed

\begin{claim}\label{claim2}
Suppose $\Pi_i(S_i) < B_i$ and Greedy adds a seed $x$ to $S_i$. Then $|B_i - \Pi_i(S_i+x)| < |B_i - \Pi_i(S_i)|$.
\end{claim}

\noindent
{\bf Proof of Claim}: Since Greedy added $x$ to $S_i$, we have
$\regret(S_i+x) = |\Pi_i(S_i+x) - B_i| + \lambda \cdot (|S_i|+1) \le |\Pi_i(S_i) - B_i| + \lambda \cdot (|S_i|) = \regret(S_i)$. \\
$\Longrightarrow |\Pi_i(S_i+x) - B_i|  \le  |\Pi_i(S_i) - B_i| - \lambda$, \\ i.e., $|\Pi_i(S_i+x) - B_i|  <  |\Pi_i(S_i) - B_i|$. \qed

From these two claims, it follows that as long as there is a node $x: \Pi_i(S_i+x) \le B_i$,
Greedy will add such a node to $S_i$. Similarly, as long as the overall regret keeps dropping
as a result of adding a seed to $S_i$, Greedy will add such a seed to $S_i$, even if the added
seed increases the revenue beyond  $B_i$.
}
We establish a series of claims.
\begin{claim} \label{claim1}
Suppose $S_i$ is the seed set allocated to advertiser $a_i$ and $\Pi_i(S_i) < B_i$.
Then the greedy algorithm will add a node $x$ to $S_i$ iff $|\Pi_i(S_i\cup\{x\}) - B_i| < |\Pi_i(S_i) - B_i|$ and $x = \argmax_{w\in V\setminus S_i} (|\Pi_i(S_i) - B_i| - |\Pi_i(S_i\cup\{w\}) - B_i|)$, with ties broken arbitrarily.
\end{claim}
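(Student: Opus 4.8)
The plan is to translate the global greedy selection rule into a per-advertiser statement about \emph{budget-regret}, using the fact that the seed-regret term $\lambda\,|S_i|$ changes by exactly $\lambda$ whenever a single node is appended to $S_i$, no matter which node. Formally, for $\Pi_i(S_i)<B_i$ and any $x\in V\setminus S_i$ write
\[
\mathcal{R}_i(S_i)-\mathcal{R}_i(S_i\cup\{x\})
= \big(\,|B_i-\Pi_i(S_i)|-|B_i-\Pi_i(S_i\cup\{x\})|\,\big)-\lambda
=: \Delta_i(x)-\lambda,
\]
so ranking candidate additions to $S_i$ by regret drop is the same as ranking them by the budget-regret drop $\Delta_i(x)$. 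Because $\kappa_u\ge h$ for every $u$, every $x\in V\setminus S_i$ still has free attention, so the only feasible candidates excluded for $a_i$ are those with $\Delta_i(x)<\lambda$, i.e.\ those violating the non-increase constraint of line~3 of Algorithm~\ref{alg:rmVanilla}.

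Forward direction: suppose in some iteration with $\Pi_i(S_i)<B_i$ the algorithm appends $x$ to $S_i$. By line~3 the chosen pair $(x,a_i)$ maximizes $\mathcal{R}_j(S_j)-\mathcal{R}_j(S_j\cup\{v\})$ over all feasible pairs; comparing it against the pairs $(v,a_i)$ with $v\in V\setminus S_i$ and using the display, $x$ maximizes $\Delta_i(v)$ — and since $\Delta_i(x)\ge\lambda$ by feasibility of $(x,a_i)$, the $\Delta_i$-maximum over the whole set $V\setminus S_i$ is attained at $x$, which is the second stated condition. The same feasibility bound $\Delta_i(x)\ge\lambda$ gives $\Delta_i(x)>0$ when $\lambda>0$; for $\lambda=0$ we adopt the standard convention that a node leaving the total regret unchanged is not appended, so in every case $|B_i-\Pi_i(S_i\cup\{x\})|<|B_i-\Pi_i(S_i)|$, the first condition.

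Converse: this is to be read as ``whenever the greedy step modifies $S_i$, the node it appends is the $x$ described.'' Suppose $x$ satisfies $|B_i-\Pi_i(S_i\cup\{x\})|<|B_i-\Pi_i(S_i)|$ and $x=\argmax_{w\in V\setminus S_i}\Delta_i(w)$, and suppose greedy appends some node $x'$ to $S_i$ in this iteration. Feasibility of $(x',a_i)$ gives $\Delta_i(x')\ge\lambda$, hence $\Delta_i(x)\ge\Delta_i(x')\ge\lambda$, so $(x,a_i)$ is feasible; as $x$ also maximizes $\Delta_i$ over additions to $S_i$, it attains the same regret drop $\Delta_i(\cdot)-\lambda$ as the globally optimal $x'$, so $x$ ties with $x'$ and is a legitimate choice of appended node. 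Together with the forward direction this yields the biconditional.

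The crux is getting the statement precise rather than the algebra: because greedy performs a single global $\argmax$ over (user, advertiser) pairs, ``greedy adds $x$ to $S_i$'' must be understood conditionally on $S_i$ being the seed set updated in that iteration, otherwise the ``$\Leftarrow$'' direction fails (a different advertiser may offer a strictly larger regret drop). The only other delicate point is the boundary case $\lambda=0$, where strictness of the first condition is secured only by a tie-breaking convention; once these are fixed, everything reduces to the additive separation of $\lambda\,|S_i|$ and to $\kappa_u\ge h$ making the candidate set equal to $V\setminus S_i$.
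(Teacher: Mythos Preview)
Your decomposition $\mathcal{R}_i(S_i)-\mathcal{R}_i(S_i\cup\{x\})=\Delta_i(x)-\lambda$ and your forward direction are exactly the paper's approach. The divergence is in the converse. You prove a \emph{conditional} statement: if Greedy happens to update $S_i$ in this iteration, the appended node is (tied with) the $\Delta_i$-maximizer. The paper instead proves the \emph{unconditional} statement that Greedy \emph{will} add such a node to $S_i$, and for this it invokes the hypothesis $\lambda\le\delta(u,i)\cdot cpe(i)$ of Theorem~\ref{thm:full-regret}: since every candidate contributes at least $\delta(u,i)\cdot cpe(i)\ge\lambda$ to budget-regret reduction, the maximizer $x$ satisfies $\Delta_i(x)\ge\lambda$, so the pair $(x,a_i)$ passes the non-increase test of line~3 and Greedy will eventually pick it (attention bounds being moot because $\kappa_u\ge h$). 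This hypothesis is the one ingredient your argument omits.

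The distinction matters downstream. Claim~\ref{claim2} uses Claim~\ref{claim1} to assert that ``the algorithm will continue adding seeds to $S_i$ until Case~2 is reached''; this needs the unconditional guarantee that Greedy does not stall on $a_i$ while $\Pi_i(S_i)<B_i$, which your conditional reading does not supply. Your handling of the $\lambda=0$ boundary via a tie-breaking convention is fine and matches the paper's own footnoted convention in the proof of Theorem~\ref{thm:greedyOneThird}; the paper's derivation of strict inequality in the forward direction (from $|\Pi_i(S_i\cup\{x\})-B_i|\le|\Pi_i(S_i)-B_i|-\lambda$) likewise tacitly assumes $\lambda>0$.
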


{\sc Proof of Claim}: Let $x$ be a node such that its addition to $S_i$ strictly reduces the budget-regret and it results in the greatest reduction in budget-regret, among all nodes outside $S_i$. The contribution of every node outside $S_i$ to the seed regret (i.e., the penalty term) is the same and is equal to $\lambda$. Thus, any node that achieves the maximum budget-regret reduction will have the maximum overall regret reduction. Furthermore, the overall regret reduction of adding such a node $x$ to $S_i$ will be non-negative, since its contribution to budget-regret reduction is at least $1\cdot\delta(u,i)\cdot cpe(u,i) \ge \lambda$. So Greedy will add such a node $x$ to $S_i$.
Attention bound does not constrain this addition in anyway since $\kappa_u \ge h$, $\forall u$. \\
($\Longrightarrow$): Let $x$ be the node added by Greedy to $S_i$. By definition, the addition of $x$ to $S_i$ results in a non-negative reduction in overall regret and it leads to the maximum overall regret reduction. By the argument in the ``If'' direction, $x$ must also lead to the maximum reduction in the budget-regret, since seed-regret cannot discriminate between nodes. We will show that this reduction is strictly positive. Since Greedy added $x$ to $S_i$, we have
$\regret(S_i\cup\{x\}) = |\Pi_i(S_i\cup\{x\}) - B_i| + \lambda \cdot (|S_i|+1) \le |\Pi_i(S_i) - B_i| + \lambda \cdot (|S_i|) = \regret(S_i)$. \\
$\Longrightarrow |\Pi_i(S_i\cup\{x\}) - B_i|  \le  |\Pi_i(S_i) - B_i| - \lambda$,  that is, \\ $
|\Pi_i(S_i\cup\{x\}) - B_i|  <  |\Pi_i(S_i) - B_i|$. This was to be shown.  \qed


\eat{
($\Longleftarrow$): Let $x$ be a node such that its addition to $S_i$ strictly reduces the budget-regret and it results in the greatest reduction in budget-regret, among all nodes outside $S_i$. The contribution of every node outside $S_i$ to the seed regret (i.e., the penalty term) is the same and is equal to $\lambda$. Thus, any node that achieves the maximum budget-regret reduction will have the maximum overall regret reduction. Furthermore, the overall regret reduction of adding such a node $x$ to $S_i$ will be non-negative, since its contribution to budget-regret reduction is at least $1\cdot\delta(u,i)\cdot cpe(u,i) \ge \lambda$. So Greedy will add such a node $x$ to $S_i$.
Attention bound does not constrain this addition in anyway since $\kappa_u \ge h$, $\forall u$. \\
($\Longrightarrow$): Let $x$ be the node added by Greedy to $S_i$. By definition, the addition of $x$ to $S_i$ results in a non-negative reduction in overall regret and it leads to the maximum overall regret reduction. By the argument in the ``If'' direction, $x$ must also lead to the maximum reduction in the budget-regret, since seed-regret cannot discriminate between nodes. We will show that this reduction is strictly positive. Since Greedy added $x$ to $S_i$, we have
$\regret(S_i\cup\{x\}) = |\Pi_i(S_i\cup\{x\}) - B_i| + \lambda \cdot (|S_i|+1) \le |\Pi_i(S_i) - B_i| + \lambda \cdot (|S_i|) = \regret(S_i)$. \\
$\Longrightarrow |\Pi_i(S_i\cup\{x\}) - B_i|  \le  |\Pi_i(S_i) - B_i| - \lambda$,  that is, \\ $
|\Pi_i(S_i\cup\{x\}) - B_i|  <  |\Pi_i(S_i) - B_i|$. This was to be shown. \qed
}

\begin{claim}\label{claim2}
The budget-regret of Greedy for advertiser $a_i$, upon termination, is at most $(p_i B_i + \lambda)/2$.
\end{claim}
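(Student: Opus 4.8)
The plan is to prove Claim~\ref{claim2} by a case analysis on where $\Pi_i(S_i)$ sits relative to $B_i$ when the algorithm halts, using Claim~\ref{claim1} to control how the seed set of advertiser $a_i$ evolves. First I would record two structural facts that follow from Claim~\ref{claim1} together with monotonicity of $\Pi_i$: (a) Greedy only ever adds a node to $S_i$ while $\Pi_i(S_i) < B_i$ (once the revenue reaches or exceeds the budget, any further addition strictly increases the budget-regret, hence the overall regret, so it is never taken); and (b) consequently, if the final revenue satisfies $\Pi_i(S_i) \ge B_i$, then the last node ever added to $S_i$ is precisely the one that pushed the revenue from below $B_i$ to at or above it.

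The main case is $\Pi_i(S_i) \ge B_i$ at termination. Let $z$ be the last node added to $S_i$, let $S_i^-$ be the seed set just before, write $\Pi_i(S_i^-) = B_i - d$ with $d > 0$ and $\Pi_i(S_i) = B_i + r$ with $r \ge 0$, so the budget-regret of $a_i$ at the end is exactly $r$. The marginal gain of $z$ w.r.t.\ $S_i^-$ is $g = d + r$, and by submodularity $g \le \max_{x\in V}\Pi_i(\{x\}) = p_i B_i$, giving $r \le p_i B_i - d$. On the other hand, since Greedy actually performed this addition, the overall regret did not increase, i.e.\ $(r - d) + \lambda \le 0$, hence $r \le d - \lambda$. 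Combining the two bounds on $r$ (equivalently, optimizing over $d$) yields $r \le (p_i B_i - \lambda)/2 \le (p_i B_i + \lambda)/2$, which is the claimed bound.

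It remains to treat the case $\Pi_i(S_i) < B_i$ at termination, where the budget-regret equals the residual gap $d := B_i - \Pi_i(S_i) > 0$. Here I would look at the node $x^\ast$ of largest marginal gain $g^\ast := MG_i(x^\ast \mid S_i) \le p_i B_i$. Since the algorithm halted, adding $x^\ast$ to $S_i$ does not decrease the overall regret, so the induced decrease in budget-regret is at most $\lambda$. When $g^\ast \ge d$ this decrease equals $d - (g^\ast - d) = 2d - g^\ast$ (or the addition increases budget-regret, in which case already $d \le g^\ast/2$), and $2d - g^\ast \le \lambda$ gives $d \le (g^\ast + \lambda)/2 \le (p_i B_i + \lambda)/2$, as desired.

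The remaining subcase $g^\ast < d$ is the part I expect to be the main obstacle: here \emph{every} remaining node has marginal gain at most $\lambda$, yet the gap $d$ might a priori be large, so none of the above local arguments pins it down. To close it I would invoke the practical assumption (cf.\ the end of \textsection~\ref{sec:greedy}) that a seed set reaching or exceeding $B_i$ exists --- this is exactly what makes $s^i_{opt}$ well defined. Fixing such a set $T$ with $|T| = s^i_{opt}$ and using submodularity, $B_i \le \Pi_i(T) \le \Pi_i(S_i) + \sum_{y \in T\setminus S_i} MG_i(y\mid S_i) \le \Pi_i(S_i) + s^i_{opt}\cdot g^\ast$, so $d \le s^i_{opt}\cdot \lambda$. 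In the regimes the paper assumes ($\lambda$ no larger than the per-user expected revenue, and $p_i$ a small fraction of $B_i$), this residual quantity is dominated by the seed-regret term appearing in Theorem~\ref{thm:full-regret}; arguing that under these assumptions Greedy in fact drives $\Pi_i(S_i)$ essentially up to $B_i$, so that this subcase collapses into the $(p_i B_i + \lambda)/2$ bound, is the delicate point the full argument has to nail down.
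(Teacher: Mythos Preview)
Your treatment of the overshoot case ($\Pi_i(S_i)\ge B_i$ at termination) and of the undershoot-with-crossover case ($\Pi_i(S_i)<B_i$ and $g^\ast\ge d$) is correct and matches the paper's Cases~2b and~2a essentially line for line.

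The gap is your third subcase, $g^\ast < d$. You flag it as ``the main obstacle'' and try to absorb it into the seed-regret via $s^i_{opt}$, but in fact this subcase is \emph{vacuous}, and the paper disposes of it in one line using the direction of Claim~\ref{claim1} you did not exploit. You recorded only that Greedy never adds once $\Pi_i(S_i)\ge B_i$; the ``if'' direction of Claim~\ref{claim1} (whose proof is precisely where the hypothesis $\lambda\le\delta(u,i)\cdot cpe(i)$ of Theorem~\ref{thm:full-regret} is used) says that whenever $\Pi_i(S_i)<B_i$ and some node strictly reduces the budget-regret, Greedy \emph{will} add a node. But if $0<g^\ast<d$, then adding $x^\ast$ keeps the revenue below $B_i$ and strictly reduces the budget-regret by $g^\ast$, so Greedy would not have halted---a contradiction. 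Equivalently (and this is how the paper phrases it in Case~2a): if Greedy terminates with $\Pi_i(S_i)<B_i$, then adding \emph{any} remaining node must overshoot $B_i$, which forces $g^\ast\ge d$ and puts you straight into your second subcase. No appeal to $s^i_{opt}$ or to the ``practical assumptions'' is needed for Claim~\ref{claim2}; $s^i_{opt}$ enters only later, in bounding the seed-regret (Claims~\ref{claim3}--\ref{claim4}).
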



{\sc Proof of Claim}: Consider any iteration $j$. Let $x$ be the seed allocated to advertiser $a_i$ in
this iteration. The following cases arise.

\noindent
$\bullet$ \underline{Case 1}: $\Pi_i(S_i\cup\{x\}) < p_i B_i$.
By submodularity, for any node $y \in V\setminus (S_i\cup\{x\}): MG_i(y|S_i\cup\{x\}) \le MG_i(y|\emptyset) \le p_i B_i$. 
Thus, from Claim~\ref{claim1}, we know the algorithm will continue adding seeds to $S_i$ until Case 2 (below) is reached.

\noindent
$\bullet$ \underline{Case 2}: $\Pi(S_i\cup\{x\}) \ge p_i B_i$.

$\bullet\;$ \underline{Case 2a}: $\Pi(S_i\cup\{x\}) < B_i$.
If $x$ is the last seed added to $S_i$, then $\forall y \in V\setminus (S_i\cup\{x\}):
B_i - \Pi(S_i\cup\{x\}) + \lambda(|S_i|+1) < \Pi_i(S_i\cup\{x\}\cup\{y\}) - B_i + \lambda(|S_i|+2)$.
Notice that upon adding any such $y$, a cross-over must occur w.r.t. $B_i$: suppose otherwise, then adding $y$ would cause net drop in regret and the algorithm would just add $y$ to $S_i\cup\{x\}$, a contradiction.
Simplifying, we get
$B_i - \Pi_i(S_i\cup\{x\}) < \Pi_i(S_i\cup\{x\}\cup\{y\}) - B_i + \lambda$.
Also by submodularity, we have $\Pi_i(S_i\cup\{x\}\cup\{y\}) - \Pi_i(S_i\cup\{x\}) \le p_i B_i$. Thus,\\
$\Longrightarrow \Pi_i(S_i\cup\{x\}\cup\{y\}) - B_i + B_i - \Pi_i(S_i\cup\{x\}) \le p_i B_i$. \\
$\Longrightarrow  2(B_i - \Pi_i(S_i\cup\{x\})) - \lambda \le p_i B_i$. \\
$\Longrightarrow B_i - \Pi_i(S_i\cup\{x\}) \le (p_i B_i + \lambda)/2$.

$\bullet\;$ \underline{Case 2b}: $\Pi_i(S_i\cup\{x\}) > B_i$.
Since Greedy just added $x$ to $S_i$, we infer that 
%
$\Pi_i(S_i) < B_i$ and
%
$[B_i - \Pi_i(S_i)] + \lambda|S_i|  \ge  \Pi_i(S_i\cup\{x\}) - B_i + \lambda(|S_i|+1)$. \\
$\Longrightarrow B_i - \Pi_i(S_i) \ge \Pi_i(S_i\cup\{x\}) - B_i + \lambda$.
Clearly, $x$ must be the last seed added to $S_i$, as any future additions will
strictly raise the regret. By submodularity, we have

$\Pi_i(S_i\cup\{x\}) - \Pi_i(S_i) \le p_i B_i$. \\
$\Longrightarrow \Pi_i(S_i\cup\{x\}) - B_i + B_i - \Pi_i(S_i) \le p_i B_i$. \\
$\Longrightarrow 2(\Pi_i(S_i\cup\{x\}) - B_i) + \lambda \le p_i B_i$. \\
$\Longrightarrow \Pi_i(S_i\cup\{x\}) - B_i \le (p_i B_i - \lambda)/2$.

By combining both cases, we conclude that the budget-regret of
Greedy for $a_i$ upon termination is $\le (p_i B_i + \lambda)/2$. \qed

\smallskip
Next, define $\eta_0 = B_i$.
Let $S_i^j$ be the seed set assigned to advertiser $a_i$ by Greedy after iteration $j$.
Let $\eta_j := \eta_0 - \Pi_i(S_i^j)$, i.e., the shortfall of the achieved revenue w.r.t.
the budget $B_i$, after iteration $j$, for advertiser $a_i$.

\begin{claim}\label{claim3}
After iteration $j$, $\exists x \in V \setminus S_i^j: \Pi_i(S_i\cup\{x\}) - \Pi_i(S_i) \ge 1/s^i_{opt} \cdot \eta_j$,
where $s^i_{opt}$ is the minimum number of seeds needed to achieve a revenue no less than $B_i$.
\end{claim}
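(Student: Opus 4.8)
The plan is to run the textbook submodular-greedy pigeonhole argument, specialized to the revenue function $\Pi_i$. First I would fix a witness set: let $O_i=\{o_1,\dots,o_{s^i_{opt}}\}$ be a seed set of \emph{minimum} cardinality with $\Pi_i(O_i)\ge B_i$; such a set exists by the very definition of $s^i_{opt}$. Write $S:=S_i^j$ for brevity, and recall we are in the regime $\Pi_i(S)<B_i$, so $\eta_j=B_i-\Pi_i(S)>0$. (When $\eta_j\le 0$ the claim is vacuous, since $\Pi_i$ is monotone and hence every marginal gain is nonnegative.)

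Next I would lower-bound the \emph{joint} marginal gain of $O_i$ over $S$. By monotonicity of $\Pi_i$ — the corollary to Lemma~\ref{lem:icctp} — we have $\Pi_i(S\cup O_i)\ge \Pi_i(O_i)\ge B_i$, and therefore
\[
\Pi_i(S\cup O_i)-\Pi_i(S)\;\ge\; B_i-\Pi_i(S)\;=\;\eta_j .
\]
Note also that $O_i\setminus S\ne\emptyset$: otherwise $O_i\subseteq S$ would force $\Pi_i(S)\ge\Pi_i(O_i)\ge B_i$, contradicting $\Pi_i(S)<B_i$.

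Then I would convert the joint gain into a sum of individual gains via submodularity. Enumerate $O_i\setminus S=\{o_{j_1},\dots,o_{j_t}\}$ with $t=|O_i\setminus S|\le s^i_{opt}$, and telescope along the chain $S\subseteq S\cup\{o_{j_1}\}\subseteq\cdots\subseteq S\cup O_i$; submodularity of $\Pi_i$ gives
\[
\eta_j\;\le\;\Pi_i(S\cup O_i)-\Pi_i(S)\;\le\;\sum_{\ell=1}^{t} MG_i(o_{j_\ell}\mid S).
\]
Finally, a pigeonhole step: at least one term satisfies $MG_i(o_{j_\ell}\mid S)\ge \eta_j/t\ge \eta_j/s^i_{opt}$, so taking $x:=o_{j_\ell}\in O_i\setminus S\subseteq V\setminus S$ yields $\Pi_i(S\cup\{x\})-\Pi_i(S)\ge \eta_j/s^i_{opt}$, as required. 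The attention bound plays no role here because $\kappa_u\ge h$ for every $u$.

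There is no real obstacle in this argument; the only points needing care are the edge case $O_i\subseteq S$ (handled above) and the correct use of submodularity through the telescoping decomposition — i.e., the standard fact that for a submodular function the marginal value of adding a set is at most the sum of the marginal values of adding its elements one at a time. Everything else is bookkeeping.
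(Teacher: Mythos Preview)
Your proposal is correct and is essentially the same argument the paper gives: both fix an optimal set $S_i^*$ (your $O_i$) of size $s^i_{opt}$, use monotonicity to get $\Pi_i(S\cup S_i^*)-\Pi_i(S)\ge\eta_j$, and then use submodularity to bound this by the sum of individual marginal gains, from which one element must contribute at least $\eta_j/s^i_{opt}$. The only cosmetic difference is that the paper packages the last step as a proof by contradiction (assume every marginal gain is strictly below $\eta_j/s^i_{opt}$ and derive $\Pi_i(S\cup S_i^*)<B_i$), whereas you phrase it directly via averaging/pigeonhole; your version is in fact a bit more careful about the edge cases $\eta_j\le 0$ and $O_i\subseteq S$.
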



{\sc Proof of Claim}: Suppose otherwise. Let $S_i^*$ be the seeds allocated to advertiser $a_i$ by the optimal
algorithm for achieving a revenue no less than $B_i$. Add seeds in $S_i^* \setminus S_i^j$ one by
one to $S_i^j$. Since none of them has a marginal gain w.r.t. $S_i$ that is $\ge 1/s^i_{opt} \cdot \eta_j$, it follows by
submodularity that $\Pi_i(S^j_i \cup S_i^*) \le \Pi(S_i^j) + s^i_{opt} \cdot  1/s^i_{opt} \cdot \eta_j < B_i$,
a contradiction.  \qed

It follows from the above proof that $\eta_j \le \eta_{j-1} \cdot (1 - 1/s^i_{opt})$, which implies that
$\eta_j \le 1/\eta_{j-1} \cdot e^{-1/s^i_{opt}}$. Unwinding, we get $\eta_j \le \eta_0 \cdot e^{-j/s^i_{opt}}$.
Suppose Greedy stops in $\ell$ iterations.  We showed above that the budget-regret of Greedy, for advertiser $a_i$, at the end of this iteration, is either at most
$(p_i \cdot B_i + \lambda)/2$ or is at most $(p_i B_i - \lambda)/2$ depending on the case
that applies. Of these, the latter is more stringent w.r.t. the \#iterations Greedy will take, and
hence w.r.t. the \#seeds it will allocate to $a_i$. So, in iteration $\ell-1$, we have
$\eta_{\ell-1} \ge (p_i B_i - \lambda)/2$. That is, \\
$\eta_{\ell-1} = B_i \cdot e^{-(\ell-1)/s^i_{opt}} \ge (p_i B_i - \lambda)/2$, or \\
$\Longrightarrow  e^{-(\ell-1)/s^i_{opt}} \ge (p_i - \lambda/B_i])/2$. \\
$\Longrightarrow \ell \le 1 + s^i_{opt} \cdot \lceil \ln \{1/(p_i/2 - \lambda/2B_i)\} \rceil$.
Notice that this is an upper bound on $|S_i^\ell|$.
We just proved

\begin{claim} \label{claim4}
When Greedy terminates, the seed-regret for advertiser $a_i$, upon termination, is at most
$\lambda \cdot (1 + s^i_{opt} \cdot \lceil \ln \{1/(p_i/2 - \lambda/2B_i)\} \rceil)$. \qed
\end{claim}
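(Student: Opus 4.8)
The plan is to bound the two parts of the regret separately and sum. Writing the total regret as $\calr(\cals)=\sum_{i=1}^h\bigl(|B_i-\Pi_i(S_i)|+\lambda|S_i|\bigr)$, I call the first summand the budget-regret of $a_i$ and the second its seed-regret. The first thing I would record is that the hypothesis $\kappa_u\ge h$ makes the attention bound vacuous — every node can be allocated to all $h$ ads — so Greedy's behavior on each advertiser is exactly that of single-advertiser greedy on that ad, and the entire analysis decouples over $i$. I would then fix one advertiser $a_i$, bound its budget-regret and its seed-regret, and sum the two per-advertiser bounds over $i=1,\ldots,h$ at the end.

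For the budget-regret of $a_i$, the hypothesis $\lambda\le\delta(u,i)\cdot cpe(i)$ is the lever: whenever $\Pi_i(S_i)<B_i$ and some node $x$ can be added without overshooting $B_i$, its marginal revenue gain is at least $\delta(x,i)\cdot cpe(i)\ge\lambda$, which already pays for the seed penalty $\lambda$ it incurs, so adding $x$ cannot raise the overall regret and Greedy will add such a node, choosing the one that reduces budget-regret the most. Hence Greedy keeps feeding $a_i$ until either its revenue crosses $B_i$ or any further node would overshoot so much that regret strictly increases. To bound the residual budget-regret at that point I would use a ``last seed'' argument: the marginal gain of the last seed added is at most $p_iB_i$ by submodularity and the definition $p_i=\max_x\Pi_i(\{x\})/B_i$; combining this with the stopping (cross-over) inequality and splitting into the undershoot case $\Pi_i(S_i)<B_i$ and the overshoot case $\Pi_i(S_i)>B_i$ yields a budget-regret of at most $(p_iB_i+\lambda)/2$ (in fact $(p_iB_i-\lambda)/2$ in the overshoot case, which is the more binding bound for the next step).

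For the seed-regret $\lambda|S_i|$ I would bound $|S_i|$ via the standard submodular-greedy argument. Let $\eta_j=B_i-\Pi_i(S_i^j)$ be the revenue shortfall after $j$ iterations spent on $a_i$. As long as $\eta_j>0$, some unselected node must have marginal gain at least $\eta_j/s^i_{opt}$, where $s^i_{opt}$ is the fewest seeds any allocation needs to reach $B_i$: otherwise, inserting the $\le s^i_{opt}$ seeds of such an optimal allocation one at a time into $S_i^j$ would, by submodularity, leave the revenue below $B_i$, a contradiction. This forces the geometric decay $\eta_j\le B_i\,e^{-j/s^i_{opt}}$. Since by the budget-regret analysis Greedy has not yet stopped on $a_i$ while the shortfall still exceeds roughly $(p_iB_i-\lambda)/2$, plugging this threshold into the decay bound caps the number of iterations spent on $a_i$, hence $|S_i|$, by $1+s^i_{opt}\lceil\ln(1/(p_i/2-\lambda/2B_i))\rceil$. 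Adding the two per-advertiser bounds and summing over $i$ gives exactly the stated bound.

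I expect the budget-regret step to be the main obstacle: one must pin down precisely when Greedy stops adding seeds to a given advertiser — the tension between the budget-regret dropping and the fixed penalty $\lambda$ charged per seed — and then make the cross-over / last-seed balancing rigorous in both the undershoot and overshoot cases. A minor point to verify along the way is that the logarithm is well defined, i.e.\ $p_i/2-\lambda/2B_i>0$; this holds because $\lambda\le\delta(x,i)\cdot cpe(i)\le\Pi_i(\{x\})$ for every node $x$, so $\lambda\le\max_x\Pi_i(\{x\})=p_iB_i$, with the inequality strict in the regime of interest.
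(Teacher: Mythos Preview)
Your proposal is correct and follows essentially the same route as the paper: you define the shortfall $\eta_j=B_i-\Pi_i(S_i^j)$, use the standard submodular-coverage argument (the paper's Claim~3) to get a marginal gain of at least $\eta_j/s^i_{opt}$ and hence the geometric decay $\eta_j\le B_i\,e^{-j/s^i_{opt}}$, and then plug in the threshold $(p_iB_i-\lambda)/2$ from the overshoot case of the budget-regret analysis to cap the number of seeds at $1+s^i_{opt}\lceil\ln(1/(p_i/2-\lambda/2B_i))\rceil$. The only remark is that you have written up a proof of the entire Theorem rather than just Claim~4; the portion relevant to Claim~4 is your seed-regret paragraph, and it matches the paper's argument step for step.
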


Combining all the claims above, we can infer that the overall regret of Greedy upon termination is
at most
$\sum_{i=1}^h (p_i B_i + \lambda)/2 + \lambda \sum_{i=1}^h [1 + s^i_{opt} (1 + \lceil \ln \{1/(p_i/2 - \lambda/2B_i)\} \rceil$.
\qed
\end{proof}


\subsection{The Case of $\lambda=0$}

In this subsection, we focus on the regret bound achieved by Greedy in the special case that $\lambda=0$, i.e., the overall regret is just the budget-regret. While the results here can be more or less seen as special cases of Theorem~\ref{thm:full-regret}, it is illuminating to restrict attention to this special case. Our first result follows.

\begin{theorem} \label{thm:greedyOneThird}
Consider an instance of \MRA that admits a seed allocation whose total
regret is bounded by a third of the total budget. Then
Algorithm~\ref{alg:rmVanilla} outputs an allocation $\cals$ with a total
regret $\calr(\cals) \leq \frac{1}{3} \cdot B$, where $B = \sum_{i=1}^h
B_i$ is the total budget. 
\end{theorem}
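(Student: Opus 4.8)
The plan is to reduce the global bound to a per-advertiser one. Since $\lambda=0$, the overall regret is exactly the sum of the budget-regrets, $\calr(\cals)=\sum_{i=1}^{h}|B_i-\Pi_i(S_i)|$, so it suffices to show that for each advertiser $a_i$ the quantity $r_i:=|B_i-\Pi_i(S_i)|$ incurred by Greedy satisfies $r_i\le B_i/3$; summing then gives $\calr(\cals)\le\frac13\sum_i B_i=\frac13 B$. In the regime of the theorem the advertisers are effectively decoupled (attention bounds are not binding), so Greedy grows each $S_i$ independently, and by the corollary to Lemma~\ref{lem:icctp} its selection rule, as long as $\Pi_i(S_i)<B_i$, coincides with picking the node of largest marginal gain in expected revenue $\Pi_i$. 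I would use two standard consequences of submodularity of $\Pi_i$: writing $g_1\ge g_2\ge\cdots$ for the marginal gains of the successive seeds Greedy adds to $a_i$ (each w.r.t.\ the previously chosen ones), this sequence is non-increasing, and $g_1=\max_x\Pi_i(\{x\})=p_iB_i<B_i$.

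Next I would split on how Greedy terminates for $a_i$. \emph{Case A} ($\Pi_i(S_i)<B_i$ at termination, with some node still carrying positive marginal gain, so $S_i\neq\emptyset$): since Greedy stopped, adding that node must overshoot $B_i$ past the mirror point, so its marginal gain w.r.t.\ $S_i$ is at least $2r_i$; as this is bounded by $g_{|S_i|}$ while $\Pi_i(S_i)=\sum_\ell g_\ell\ge |S_i|\cdot g_{|S_i|}\ge|S_i|\cdot 2r_i$, a short computation yields $B_i\ge(2|S_i|+1)\,r_i\ge 3r_i$. \emph{Case B} ($\Pi_i(S_i)>B_i$ at termination): let $x$ be the last seed Greedy added, $u:=B_i-\Pi_i(S_i\setminus\{x\})>0$ the undershoot just before and $o:=\Pi_i(S_i)-B_i$ the overshoot just after; Greedy accepting $x$ means $o<u$, while the marginal gain $u+o$ of $x$ is bounded by the earlier, larger marginal gains whose sum is $\Pi_i(S_i\setminus\{x\})=B_i-u$ (here $S_i\setminus\{x\}\neq\emptyset$ since $p_i<1$); hence $B_i-u\ge u+o$, i.e.\ $B_i\ge 2u+o>3o$, so $r_i=o<B_i/3$. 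These two cases contain essentially the only genuine calculation, and the reason the constant comes out $1/3$ — rather than the $1/2$ that specializing Theorem~\ref{thm:full-regret} to $\lambda=0$ gives directly — is that here we exploit $g_1=p_iB_i<B_i$ together with the monotonicity of the marginal-gain sequence.

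The one remaining, and most delicate, situation is the degenerate branch of Case~A in which Greedy has exhausted all marginal gain (no node increases any $\Pi_i$) yet $\Pi_i(S_i)=\sigma_i(V)\cdot cpe(i)<B_i$ — and, analogously, an advertiser with $p_i\ge1$, where a single seed already overshoots badly. In these cases Greedy's regret for $a_i$ cannot be improved upon by \emph{any} valid allocation, since no seed set can raise the expected revenue above $\sigma_i(V)\cdot cpe(i)$; consequently it is no larger than the regret paid on $a_i$ by the allocation guaranteed by the hypothesis, which is at most $\calr(\mathcal T)\le B/3$. This is exactly where the hypothesis of the theorem is indispensable — without it the ``practical considerations'' examples show that no bound in terms of $B$ alone can hold. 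The main obstacle I anticipate is precisely this step: making the accounting for these degenerate advertisers mesh cleanly with the per-advertiser bounds $r_i\le B_i/3$ from Cases~A and~B so that the grand total is still at most $\frac13\sum_iB_i=\frac13 B$; once that is settled, combining the contributions gives the claimed bound.
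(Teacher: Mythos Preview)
Your termination-state analysis is a genuine alternative to the paper's per-iteration argument. The paper argues via a ``landing zone'': it shows that once $\Pi_i(S_i)$ enters the window $[\tfrac23B_i,\tfrac43B_i]$ it stays there until termination (Case~1 is absorbing), and that a step overshooting past $\tfrac43B_i$ (Case~3) is impossible because the current seed's marginal gain is at most the first seed's, which is at most $\Pi_i(S_i)<\tfrac23B_i$ once $S_i\neq\emptyset$. You instead bound $r_i$ directly from the monotone gain sequence $g_1\ge g_2\ge\cdots$ at termination. Your Case~A inequality $B_i\ge(2|S_i|+1)r_i$ is in fact sharper than the paper's $r_i\le B_i/3$ for $|S_i|>1$, and your Case~B replaces any a~priori $p_i$-bound with the self-referential observation $g_{|S_i|}\le g_1\le\Pi_i(S_i\setminus\{x\})=B_i-u$. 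Both routes deliver the same per-advertiser bound $r_i\le B_i/3$ once the degenerate situations are set aside.

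On those degenerate situations you are more scrupulous than the paper. The paper's Case~3 contradiction tacitly requires $S_i\neq\emptyset$ (otherwise ``$\Pi_i(S_i)<\tfrac23B_i$ bounds all gains'' is vacuous), and its Case~2 assumes a next seed for $a_i$ is always available; the hypothesis of the theorem is never explicitly invoked. Your proposed fix --- charging a degenerate advertiser's $r_i$ to the corresponding term of the hypothesized allocation $\mathcal T$ --- does not by itself mesh with the per-advertiser bounds to give $\sum_i r_i\le B/3$, exactly as you anticipate. The paper does not resolve this either; both arguments are complete only under the informal ``practical considerations'' regime (effectively $p_i<\tfrac23$ for each $i$ and attention bounds non-binding), which is the implicit setting of this subsection. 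Your identification of where the hypothesis is genuinely needed is correct, even if the clean accounting you are looking for is not actually supplied by the paper.
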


\begin{proof}
Consider an arbitrary iteration of Algorithm~\ref{alg:rmVanilla},
where the algorithm assigns a node, say $u$, to advertiser $a_i$, i.e.,
it adds $u$ to the seed set $S_i$. In particular, notice that $u$ has
been assigned to $<\kappa_u$ advertisers before this iteration, where $\kappa_u$
is the attention bound of $u$.
Three cases arise as shown in Figure
\ref{fig:proof_helper}.

\smallskip\noindent
\underline{Case 1}: $\frac{2}{3} B_i \leq \Pi_i(S_i
\cup \{u\}) \leq \frac{4}{3} B_i$.  In this case, clearly, the regret
for this advertiser is
$|\Pi_i(S_i \cup \{u\})-B_i| \le \min\{\frac{4}{3}B_i -
B_i, B_i - \frac{2}{3}B_i\} \le \frac{1}{3}B_i$.


\smallskip\noindent\underline{Case 2}: $\Pi_i(S_i \cup \{u\}) <
\frac{2}{3} B_i$.  Consider the next iteration in which another seed,
say $u'$, is assigned to the same advertiser $a_i$, i.e., $u'$ is to
$S_i$. Clearly, the marginal gain of $u'$ w.r.t. $S_i\cup\{u\}$ cannot
be more than $\frac{2}{3}B_i$, by submodularity.  Thus, $\Pi_i(S_i \cup
\{u, u'\}) < \frac{4}{3} B_i$. Now, if $\Pi_i(S_i \cup \{u, u'\}) \ge
\frac{2}{3} B_i$, then by Case 1, we have that the regret of advertiser
$a_i$ is at most $\frac{1}{3} B_i$.  Otherwise, $\Pi_i(S_i \cup \{u,
u'\}) < \frac{2}{3} B_i$, and then it is similar to Case 2 condition,
where $u'$ is also added to $S_i$ after $u$. In this case, subsequent
iterations of the algorithm grow $S_i$ until Case 1 is satisfied.  A
simple inductive argument shows that the regret for advertiser $a_i$ is
no more that $\frac{1}{3} B_i$.

\smallskip\noindent\underline{Case 3}: $\Pi_i(S_i \cup \{u\}) >
\frac{4}{3} B_i$.  The algorithm adds $u$ to $S_i$ only when $\Pi_i(S_i
\cup \{u\}) - B_i < B_i - \Pi_i(S_i)$, which implies $\Pi_i(S_i \cup
\{u\}) + \Pi_i(S_i) < 2 B_i$.\footnote{Since the algorithm makes the
choice with lesser regret, we can assume w.l.o.g. that it adds $u$ only
when the addition will result in strictly lower regret than not adding
it.}  However, since $\Pi_i(S_i \cup \{u\}) > \frac{4}{3} B_i$, this
implies $\Pi_i(S_i) < \frac{2}{3}B_i$.  This means the marginal gain of
$u$ w.r.t. $S_i$, i.e., $\Pi_i(S_i\cup\{u\}) - \Pi_i(S_i)$, is larger
than $\frac{2}{3}B_i$. However, $\Pi_i(S_i) < \frac{2}{3}B_i$, which by
submoduarity, implies no subsequent seed can have a marginal gain of
$\frac{2}{3}B_i$ or more, a contradiction. Thus, Case 3 is impossible.

\smallskip
We just showed that for any advertiser, the regret achieved by the
algorithm is at most $\frac{1}{3}B_i$. Summing over all advertisers, we
see that the overall regret is no more than $\frac{1}{3} B$.
\end{proof}


\begin{figure}[t]
  \centering
    \includegraphics[width=0.25\textwidth]{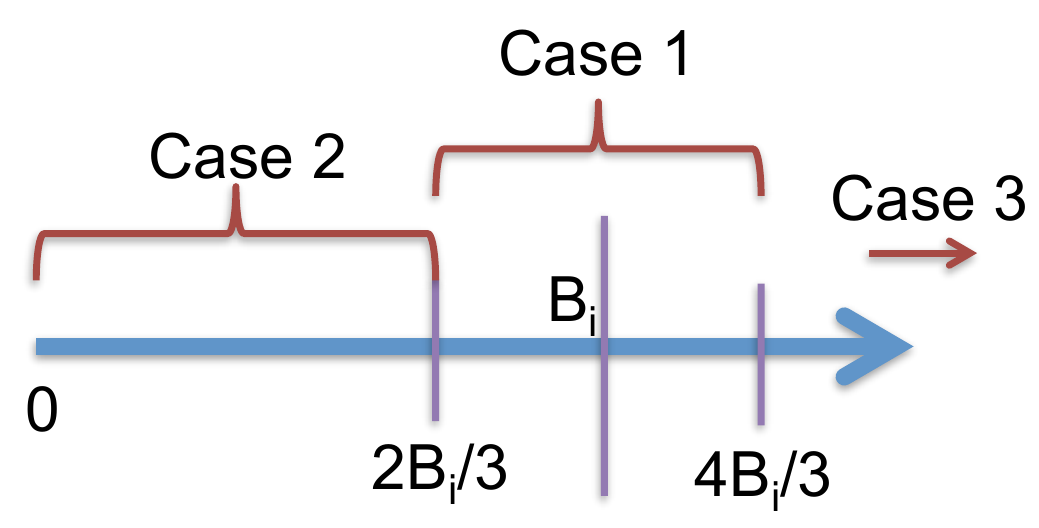}
   \caption{Interpretation of Theorem~\ref{thm:greedyOneThird}.}
   \vspace{-4mm}
\label{fig:proof_helper}

\end{figure}

The regret bound established above is conservative, and unlike Theorem~\ref{thm:full-regret}, does not make any assumptions about the marginal gains of seed nodes. In
practice, as previously noted, most real networks tend to have low influence probabilities and consequently, the marginal gain of any single node tends to be a small fraction of the budget. Using this, we can establish a tighter bound on the regret achieved by Greedy.

\eat{
the cost-per-engagement for any advertiser is usually a small
fraction of the advertiser's budget. Secondly, the marginal gain of any
single node, measured in terms of spread, i.e., $\sigma_i(S_i\cup\{u\})
- \sigma_i(S_i)$, for any seed set $S_i$ and node $u$, is typically a
small fraction of the spread achieved by the final seed set allocated to
advertiser $a_i$. Finally, we know this from the fact that in real networks,
influence probabilities are usually quite low. Motivated by these
observations, we next derive a tighter bound on the regret.

For each advertiser $a_i$, let $p_i$ denote the maximum marginal gain
any seed can provide for the item $i$ divided by budget $B_i$, i.e.,
$p_i = \max_{u\in V} \Pi_i(\{u\})/B_i$. Notice that by submodularity,
marginal gain is the greatest w.r.t. the empty seed set and
$\Pi_i(\emptyset) = 0$.
We assume that the budgets $B_i$ are such that $p_i\in (0,1)$.
Similarly, define $p_{max} = \max_{i=1}^h p_i$.
Next, we give a tighter
approximation bound (w.r.t.\ total budget) for the greedy algorithm.
}

\begin{theorem}\label{thm:greedyBetter}
On any input instance that admits an allocation with total regret
bounded by $\min\{\frac{p_{max}}{2}, 1-p_{max}\} \cdot B$,
Algorithm~\ref{alg:rmVanilla} delivers an allocation \cals\ so
that $\calr(\cals) \leq \min\{\frac{p_{max}}{2}, 1-p_{max}\} \cdot B$.
\end{theorem}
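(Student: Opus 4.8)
Since we are in the case $\lambda=0$, the regret is purely budget-regret, and I would mimic the three-case argument of Theorem~\ref{thm:greedyOneThird} but recalibrated from the universal constant $\tfrac13$ to $q:=\min\{p_{max}/2,\,1-p_{max}\}$. The single new ingredient is that, by submodularity together with the definition of $p_{max}$, the marginal gain of \emph{any one} seed $x$ for advertiser $a_i$ is at most $\Pi_i(\{x\})\le p_iB_i\le p_{max}B_i$; this is exactly what lets us shrink the ``safe band'' around $B_i$ from $[\tfrac23 B_i,\tfrac43 B_i]$ down to $[(1-q)B_i,(1+q)B_i]$. The target is to show that, for every advertiser, Greedy terminates with $\Pi_i(S_i)$ inside this band, so its per-advertiser regret is at most $qB_i$; summing over $i$ then gives $\calr(\cals)\le qB$.

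\noindent Assume first $p_{max}\le 2/3$, so that $q=p_{max}/2$ and $2qB_i=p_{max}B_i$. Fix $a_i$ and track $\Pi_i(S_i)$ as Greedy grows $S_i$. (1) If at termination $\Pi_i(S_i)\in[(1-q)B_i,(1+q)B_i]$, we are done for $a_i$. (2) If $\Pi_i(S_i)<(1-q)B_i$, then $B_i-\Pi_i(S_i)>qB_i=p_{max}B_i/2$, so every seed whose addition moves $\Pi_i$ has marginal gain strictly below $2\bigl(B_i-\Pi_i(S_i)\bigr)$ and hence strictly lowers $a_i$'s regret; thus Greedy cannot have halted, and the seed it eventually assigns to $a_i$ lifts $\Pi_i$ by at most $p_{max}B_i=2qB_i$, keeping the new value below $(1+q)B_i$ --- iterating, $\Pi_i(S_i)$ enters the band. (3) Suppose Greedy adds a seed $u$ with $\Pi_i(S_i\cup\{u\})>(1+q)B_i$. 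Because Greedy adds $u$ only when $a_i$'s regret strictly drops and $\Pi_i$ is monotone, we must have $\Pi_i(S_i)<B_i$ and $\Pi_i(S_i)+\Pi_i(S_i\cup\{u\})<2B_i$, forcing $\Pi_i(S_i)<(1-q)B_i$ and therefore $MG_i(u\mid S_i)>2qB_i=p_{max}B_i$, contradicting the definition of $p_{max}$. So case (3) is vacuous, which closes the per-advertiser bound.

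\noindent The hypothesis that the instance admits a valid allocation of total regret at most $qB$ enters to neutralize the two ways the band argument can stall: an advertiser whose attainable revenue $\Pi_i(V)$ simply cannot reach the band, and --- the harder case --- Greedy being unable to execute a regret-reducing move for $a_i$ only because every helpful candidate is already saturated at its attention bound $\kappa_v$. In both situations the minimum regret that any valid allocation can incur for $a_i$ is already pinned down, and I would charge Greedy's residual regret on such advertisers against the corresponding terms of the assumed good allocation, which keeps the overall sum within $qB$. I expect this decoupling of the global greedy selection rule from the per-advertiser band analysis in the presence of attention bounds to be the main obstacle; a secondary one is the corner $p_{max}>2/3$, where $q=1-p_{max}<p_{max}/2$ and the ``one more seed stays in the band'' step of case (2) is no longer automatic, so there I would instead bound Greedy's regret for each advertiser directly against the best single-seed move (and, via the hypothesis, the supplementary seeds needed to reach the band) rather than rerunning the case analysis.
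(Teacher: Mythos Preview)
Your core strategy---a per-advertiser case analysis using the fact that every marginal gain is at most $p_iB_i\le p_{max}B_i$---is the same as the paper's. However, the paper's execution is simpler in two respects, and your extra machinery is not needed.

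First, the paper does \emph{not} try to show that overshooting beyond $(1+q)B_i$ is impossible (your case~(3)). Instead it considers the last seed $u$ added to $S_i$ and splits only on whether $\Pi_i(S_i\cup\{u\})\le B_i$ or $>B_i$. In the overshoot sub-case it uses two inequalities: $\Pi_i(S_i\cup\{u\})-B_i\le B_i-\Pi_i(S_i)$ (Greedy only added $u$ because regret dropped) and $\Pi_i(S_i\cup\{u\})-\Pi_i(S_i)\le p_iB_i$ (submodularity). Adding these gives $2(\Pi_i(S_i\cup\{u\})-B_i)\le p_iB_i$, i.e., overshoot $\le p_iB_i/2$. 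In the undershoot sub-case the same pair of inequalities (with the roles reversed, using a hypothetical next seed $u'$ that Greedy refused to add) gives $B_i-\Pi_i(S_i\cup\{u\})\le p_iB_i/2$, and the trivial bound $B_i-\Pi_i(S_i\cup\{u\})\le(1-p_i)B_i$ comes directly from $\Pi_i(S_i\cup\{u\})\ge p_iB_i$. Taking the minimum and summing gives the theorem. This avoids your separate treatment of $p_{max}>2/3$ entirely: the $p_i/2$ and $1-p_i$ bounds fall out of the same case analysis without recalibrating any ``band.''

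Second, the paper does \emph{not} use the admissibility hypothesis in any technical charging argument, and it does not reason about attention-bound saturation at all. The hypothesis functions only as a feasibility assumption (ruling out the pathological instances flagged under ``Practical considerations''), ensuring that Greedy can in fact keep adding seeds until it reaches Case~2. Your proposed scheme of charging Greedy's residual regret against the witnessing allocation is both unnecessary and, as you anticipated, the part most likely to resist a clean proof; you should drop it and simply run the case analysis under the implicit assumption that seeds remain available.
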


\begin{proof}
The proof is similar to the proof of Theorem~\ref{thm:greedyOneThird}.
Consider an arbitrary iteration of Algorithm~\ref{alg:rmVanilla}.
Suppose $u$ is the seed that the algorithm assigned to $a_i$ (i.e.,
added to seed set $S_i$) in this iteration.  The following two cases
arise.

\smallskip
\noindent
\underline{Case 1}: $\Pi_i(S_i \cup \{u\}) < p_i \cdot B_i$.
Then, the algorithm will continue to add seeds to the seed set
$S_i$, until the condition of Case 2 is met.

\smallskip
\noindent
\underline{Case 2}: $\Pi_i(S_i \cup \{u\}) \ge p_i \cdot B_i$.
There can be two sub-cases in this scenario:

\smallskip
\noindent
\underline{Case 2a}: $\Pi_i(S_i \cup \{u\}) \le B_i$.
Clearly, regret is
\begin{align*}
	B_i - \Pi_i(S_i \cup \{u\}) \le B_i - p_i \cdot B_i = (1-p_i)B_i.
\end{align*}

If $u$ is the last seed added to the seed set $S_i$, then we have regret
$\le (1 - p_i) B_i $.  Moreover, $u$ being the last seed also implies
that for any other node $u' \not\in S_i$, we have
\begin{align*}
B_i - \Pi_i(S_i \cup \{u\}) \le \Pi_i(S_i \cup \{u, u'\}) - B_i,
\end{align*}
since otherwise, the algorithm would have added $u'$ to $S_i$ to
decrease the regret.
Also, due to submodularity,
\begin{align*}
	&\Pi_i(S_i \cup \{u, u'\}) - \Pi_i(S_i \cup \{u\}) \le p_i \cdot B_i , \\
\Longrightarrow \quad &\Pi_i(S_i \cup \{u,u'\}) - B_i + B_i - \Pi_i(S_i \cup \{u\}) \le p_i \cdot B_i , \\
\Longrightarrow \quad	&2 \cdot (B_i - \Pi_i(S_i \cup \{u\})) \le p_i \cdot B_i,  \\
\Longrightarrow \quad	&B_i -\Pi_i(S_i \cup \{u\}) \le \frac{p_i}{2} \cdot B_i.
\end{align*}

Therefore, in Case 2a, if  $u$ is the last seed selected by the
algorithm, then regret of advertiser $a_i$ is $\min\{\frac{p_i}{2},
1-p_i\} \cdot B_i$.  Otherwise, the algorithm would continue with the
next iteration and add seeds until Case 2a or Case 2b is satisfied.

\smallskip
\noindent
\underline{Case 2b}: $\Pi_i(S_i \cup \{u\}) > B_i$.
Then regret for advertiser $a_i$ is $\Pi_i(S_i \cup \{u\}) - B_i$.

In this case, $u$ must be the last seed selected by the algorithm as
adding another seed can only increase the regret.
Therefore, it is clear that
\begin{align*}
\Pi_i(S_i \cup \{u\}) - B_i \le B_i - \Pi_i(S_i).
\end{align*}

Moreover, due to submodularity, we know that
\begin{align*}
	& \Pi_i(S_i \cup \{u\}) - \Pi_i(S_i) \le p_i \cdot B_i, \\
\Longrightarrow \quad	& \Pi_i(S_i \cup \{u\}) - B_i + B_i - \Pi_i(S_i) \le p_i \cdot B_i, \\
\Longrightarrow \quad	&2 \cdot (\Pi_i(S_i \cup \{u\}) - B_i) \le p_i \cdot B_i ,\\
\Longrightarrow \quad	&\Pi_i(S_i \cup \{u\}) - B_i \le \frac{p_i}{2} \cdot B_i .
\end{align*}

Combining Cases 2a and 2b, and summing it over all advertisers, it is
easy to see that total regret is $\le \min(\frac{p_{max}}{2},
(1-p_{max})) \cdot B$.
\end{proof}

We note that this claim generalizes Theorem~\ref{thm:greedyOneThird}. In
fact, the two bounds: $\frac{p_{max}}{2}$ and $1-p_{max}$ meet at the
value of $1/3$ when $p_{max} = 2/3$. In practice, $p_{max}$ may be much
smaller, making the bound better. 

\eat{Even though Theorem~\ref{thm:greedyOneThird} may sound like a special case of Theorem~\ref{thm:greedyBetter} by setting $p_{max} = 1/3$, unlike the latter,  Theorem~\ref{thm:greedyOneThird} does not depend on any parameters. It says as long as the problem instance admits an allocation with a regret bounded by a third of the total budget, our algorithm will find such an allocation.}

\eat{

\IncMargin{1em}
\begin{algorithm}[t!]
\caption{Greedy Algorithm}
\label{alg:rmVanilla}
\Indm
{\small
\SetKwInOut{Input}{Input}
\SetKwInOut{Output}{Output}
\SetKwComment{tcp}{//}{}
\Input{$G=(V,E)$, attention bound $k$, items $\vec{\gamma}_i$, $cpe(i)$ \& budget $B_i$, $\forall i = 1,\ldots,h$}
\Output{$S_1, \cdots, S_h$}
}
\Indp
{
$S_i \gets \emptyset$, $\forall i = 1,\ldots,h$\ \label{line:rm1}

\While{true} {
  {\small
	  $(v, a_j) \leftarrow \argmax_{u,a_i} \mathcal{R}_i(S_i) - \mathcal{R}_i(S_i \cup
	  \{u\}) $ s.t.~$|\{S_i | v \in S_i\}| < k$  \textbf{and}
	  $\mathcal{R}_i(S_i \cup \{v\}) < \mathcal{R}_i(S_i))$ \label{line:rm3}
  }

\textbf{if} $(v, a_j)$ is \textbf{null then} return
\textbf{else} $S_j \leftarrow S_j \cup \{v\}$


}
}
\end{algorithm}
\DecMargin{1em}
}

%


\eat{
\begin{theorem}\label{thm:mra-inapprox}
\MRA is \NPhard\ to approximate within any factor.
\end{theorem}
\begin{proof}
	\note{TODO: fix it}
Next, we show that \MRA is \NPhard to approximate within any factor.
Consider the same reduction as above. Assume there exists some algorithm
$\mathcal{A}'$ that approximates the problem within some factor
$\alpha$. That is, $regret_{\mathcal{A}'} \le \alpha \cdot OPT$, where
$OPT$ is the optimal (least possible) regret. However, in the above
reduction, $OPT = 0$, implying $regret_{\mathcal{A}'} \le \alpha \cdot 0 =
0$. Hence, the algorithm $\mathcal{A}'$ can be used to solve the
\textsc{3-Partition} exactly -- again a contradication.	
\end{proof}
}

\section{Scalable Algorithms}
\label{sec:algo}
Algorithm~\ref{alg:rmVanilla} (Greedy) involves a large number of calls to influence spread computations, to find the node for each advertiser $a_i$ that yields the maximum decrease in regret $\mathcal{R}_i(S_i)$.
Given any seed set $S$, computing its {\em exact} influence spread $\sigma(S)$ under the IC model is \SPhard~\cite{ChenWW10}, and this hardness trivially carries over to the topic-aware IC model~\cite{BarbieriBM12} with CTPs.
A common practice is to use Monte Carlo (MC) simulations to estimate influence spread~\cite{kempe03}.
However, accurate estimation requires a large number of MC simulations, which is prohibitively expensive and not scalable. Thus, to make Algorithm~\ref{alg:rmVanilla} scalable, we need an alternative approach. 

In the influence maximization literature, considerable effort has been devoted to developing more efficient and scalable algorithms~\cite{ChenWW10,jung12, borgs14, tang14, cohen14}.
\WL{Of these, the IRIE algorithm proposed by Jung et al.~\cite{jung12} is a state-of-the-art heuristic for influence maximization under the IC model and is orders of magnitude faster than MC simulations. We thus use a variant of Greedy, \irie, where IRIE replaces MC simulations for spread estimation. It is one of the strong baselines we will compare our main algorithm with in \textsection\ref{sec:exp}.
In this section, we instead propose a scalable algorithm with guaranteed approximation for influence spread.}  

Recently, Borgs et al.~\cite{borgs14} proposed a quasi-linear time randomized algorithm based on the idea of sampling \emph{``reverse-reachable''} (RR) sets in the graph.
It was improved to a near-linear time randomized algorithm -- {\em Two-phase Influence Maximization (TIM)} -- by Tang et al.~\cite{tang14}.
Cohen et al.~\cite{cohen14} proposed a sketch-based design for fast computation of influence spread, achieving efficiency and effectiveness comparable to TIM.
We choose to extend TIM 
as it is the current state-of-the-art influence maximization algorithm and is more adapted to our needs. 

In this section, we adapt the essential ideas from \WL{Greedy}, RR-sets sampling, and the TIM algorithm to devise an algorithm for \MRA, called \fastAlgorithm (\fastAlgo for short), that is much more efficient and scalable than Algorithm~\ref{alg:rmVanilla} with MC simulations. \LL{Our adaptation to TIM is non-trivial, since TIM relies on knowing the exact number of seeds required. In our framework, the number of seeds needed is driven by the budget and the current regret and so is dynamic.} 
We first give the background on RR-sets sampling, review the TIM algorithm~\cite{tang14}, and then describe our \fastAlgo algorithm.

\subsection{Reverse-Reachable Sets and TIM}

\noindent\textbf{RR-sets Sampling: Brief Review.}
We first review the definition of RR-sets, which is the backbone of both TIM and our proposed \fastAlgo algorithm.
Conceptually speaking, a random RR-set $R$ from $G$ is generated as follows.
First, for every edge $(u,v) \in E$, remove it from $G$ w.p.\ $1-p_{u,v}$: this generates a possible world $X$.
Second, pick a \emph{target} node $w$ uniformly at random from $V$.
Then, $R$ consists of the nodes that can reach $w$ in $X$.
\WL{This can be implemented efficiently by first choosing a target node $w\in V$ uniformly at random and performing a breadth-first search (BFS) starting from it.
Initially, create an empty BFS-queue $Q$, and insert all of $w$'s in-neighbors into $Q$.
The following loop is executed until $Q$ is empty:
Dequeue a node $u$ from $Q$ and examine its {\em incoming} edges: for each edge $(v,u)$ where $v\in N^{in}(u)$, we insert $v$ into $Q$ w.p. $p_{v,u}$. 
All nodes dequeued from $Q$ thus form a RR-set. }

The intuition behind RR-sets sampling is that, if we have sampled sufficiently many RR-sets, and a node $u$ appears in a large number of RR sets, then $u$ is likely to have high influence spread in the original graph and is a good candidate seed.

\smallskip\noindent\textbf{TIM: Brief Review.}
Given an input graph $G=(V,E)$ with influence probabilities and desired seed set size $s$,
TIM, in its first phase, computes a lower bound on the optimal influence spread of any seed set of size $s$, i.e., $OPT_s := \max_{S\subseteq V, |S| = s} \sigic(S)$. Here $\sigic(S)$ refers to the spread w.r.t. classic IC model.  TIM then uses this lower bound to estimate the number of random RR-sets that need to be generated, denoted $\theta$.
In its second phase, TIM simply samples $\theta$ RR-sets, denoted $\RR$, and uses them to select $s$ seeds, by solving the Max $s$-Cover problem: find $s$ nodes, that between them, appear in the maximum number of sets in $\RR$. This is solved using a well-known greedy procedure: start with an empty set and repeatedly add a node that appears in the maximum number of sets in $\RR$ that are not yet ``covered''.
\eat{ running a simple greedy procedure:
We start with $S = \emptyset$.
In each iteration, we find the node $u$ that appears in (covers) the most of the sets in $\RR$;
Then add $u$ to $S$ and remove the sets covered by $u$ from $\RR$.
The selection is done when $|S| = s$. }

TIM provides a $(1-1/e-\epsilon)$-approximation to the optimal solution $OPT_s$ with high  probability.
Also, its time complexity is $O((s+\ell)(|V|+|E|)\log|V|/\epsilon^2)$, while that of the greedy algorithm (for influence maximization) is $\Omega(k|V||E|\cdot \mathrm{poly}(\epsilon^{-1}))$.


\smallskip\noindent\textbf{Theoretical Guarantees of TIM.}
Consider any collection of random RR-sets, denoted $\RR$.
Given any seed set $S$, we define $F_{\RR}(S)$ as the fraction of $\RR$ covered by $S$, where $S$ covers an RR-set iff it overlaps it.
The following proposition says that for any $S$, $|V| \cdot F_{\RR}(S)$ is an unbiased estimator of $\sigic(S)$.  

\begin{proposition}[Corollary 1, \cite{tang14}]\label{prop:coro1}
Let $S\subseteq V$ be any set of nodes, and $\RR$ be a collection of random RR sets.
Then, $\sigic(S) = \E[ |V| \cdot F_\RR(S)]$.
\end{proposition}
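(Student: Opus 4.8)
The plan is to reduce the statement to the possible-world semantics already used in the proof of Lemma~\ref{lem:icctp}, combined with the elementary observation that reachability in a fixed deterministic digraph is ``reversible''. Recall that a possible world $X$ is obtained by independently retaining each edge $(u,v)$ as live with probability $p_{u,v}$, and $\mathbb{I}_X(S,w)$ indicates that $w$ is reachable from $S$ through live edges of $X$, so that $\sigic(S) = \sum_X \Pr[X]\cdot |\{w \in V : \mathbb{I}_X(S,w)=1\}|$.

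The crux is the following equivalence. Fix a possible world $X$ and a target $w$, and let $R_w(X)$ be the set of nodes that can reach $w$ via live edges of $X$ --- i.e., exactly the RR-set returned by the BFS procedure when the sampled world is $X$ and the chosen target is $w$. Since reachability in a deterministic digraph is transitive, $S$ reaches $w$ through live edges iff some node of $S$ lies on a live path into $w$, iff $S$ meets the collection of all nodes lying on live paths into $w$; that is,
\[
\mathbb{I}_X(S,w) = 1 \quad\Longleftrightarrow\quad S \cap R_w(X) \neq \emptyset .
\]
In words, ``$S$ covers the RR-set rooted at $w$'' is the very same event as ``$w$ is activated by $S$ in $X$''.

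Using this, I would compute the probability that $S$ covers one random RR-set $R$. The randomness in $R$ is the pair $(X,w)$, with $X$ drawn from the possible-world distribution and $w$ uniform on $V$, independently; hence
\[
\Pr[S \cap R \neq \emptyset]
= \sum_X \Pr[X]\cdot \frac{1}{|V|}\sum_{w\in V}\mathds{1}\!\left[S \cap R_w(X)\neq\emptyset\right]
= \frac{1}{|V|}\sum_X \Pr[X]\cdot |\{w : \mathbb{I}_X(S,w)=1\}|
= \frac{\sigic(S)}{|V|}.
\]
Finally, $F_\RR(S) = \frac{1}{|\RR|}\sum_{R\in\RR}\mathds{1}[S\cap R\neq\emptyset]$ is an empirical average over a collection $\RR$ of i.i.d.\ random RR-sets, so by linearity of expectation $\E[F_\RR(S)] = \Pr[S\cap R\neq\emptyset] = \sigic(S)/|V|$; multiplying by $|V|$ yields $\sigic(S) = \E[|V|\cdot F_\RR(S)]$. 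The case $S=\emptyset$ is trivial, both sides being zero.

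The only genuinely substantive step is the reversibility equivalence; everything else is unwinding definitions and linearity of expectation, so that is where I would focus the argument, while the probability manipulations are routine.
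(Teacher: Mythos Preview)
Your argument is correct. It is the standard possible-world proof: establish that ``$S$ covers the RR-set rooted at $w$ in world $X$'' coincides with ``$w$ is reachable from $S$ in $X$'', then average over the uniform choice of $w$ and over worlds, and finish with linearity of expectation over the i.i.d.\ samples in $\RR$.

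Note, however, that the paper does not supply its own proof of this proposition: it is quoted verbatim as Corollary~1 of~\cite{tang14} and used as a black box. The nearest thing to a proof in the paper is the argument for Lemma~\ref{lemma:equiv} (the RRC-set analogue), which is a terse version of exactly your reasoning: both $\sigicctp(S)/|V|$ and $\E[F_\RQ(S)]$ are identified with the probability that a uniformly random node is reachable from $S$ in a random possible world. Your write-up is more explicit about the reversibility step and the averaging, but the underlying idea is the same; there is no genuinely different route being taken.
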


The next proposition shows the accuracy of influence spread estimation and the approximation gurantee of TIM.
Given any seed set size $s$ and $\varepsilon > 0$, define $L(s,\varepsilon)$ to be:
\begin{align}\label{eqn:timLB}
L(s, \varepsilon) = (8 + 2 \varepsilon) n \cdot \dfrac{\ell \log n + \log \binom{n}{s} + \log 2}{OPT_s \cdot \varepsilon^{2}},
\end{align}
where $\ell > 0, \epsilon > 0$.

\begin{proposition}[Lemma 3 \& Theorem 1, \cite{tang14}]
\label{lemma:TIMLemma3}
Let $\theta$ be a number no less than $L(s, \varepsilon)$.
Then for any seed set $S$ with $|S| \leq s$, the following inequality holds w.p.\ at least $1 - n^{- \ell} / \binom{n}{s}$:
\begin{align}
\label{eq:Lemma3}
\left| |V| \cdot F_{\RR}(S) - \sigic(S) \right| < \dfrac{\varepsilon}{2} \cdot OPT_s.
\end{align}
Moreover, with this $\theta$, TIM returns a $(1-1/e-\epsilon)$-approximation  to $OPT_s$ w.p. $1-n^{-\ell}$.
\end{proposition}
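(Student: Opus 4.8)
Since this proposition is quoted directly from Tang et al.~\cite{tang14}, the plan is to reconstruct their two-part argument: a concentration bound for a single seed set, followed by a union bound combined with the standard greedy Max-$s$-Cover transfer. First I would fix an arbitrary seed set $S$ with $|S|\le s$ and set $\rho_S := \sigic(S)/n$. By Proposition~\ref{prop:coro1}, a random RR-set is covered by $S$ independently with probability $\rho_S$, so $\theta\cdot F_\RR(S)$ is a sum of $\theta$ i.i.d.\ Bernoulli$(\rho_S)$ variables with mean $\theta\rho_S$; the inequality to be proved, $\bigl|\,|V|\cdot F_\RR(S)-\sigic(S)\,\bigr| < \tfrac{\varepsilon}{2}OPT_s$, is just $\bigl|\theta F_\RR(S)-\theta\rho_S\bigr| < a$ with $a:=\tfrac{\varepsilon}{2}\cdot\tfrac{OPT_s}{n}\cdot\theta$.

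Next I would apply the multiplicative Chernoff bounds to the two tails separately. Using $\theta\rho_S = \theta\sigic(S)/n \le \theta\, OPT_s/n$, the upper tail is at most $\exp\bigl(-\tfrac{a^2}{2\theta\rho_S+a}\bigr)\le \exp\bigl(-\tfrac{\varepsilon^2\theta\, OPT_s}{(8+2\varepsilon)n}\bigr)$, and the lower tail is at most $\exp\bigl(-\tfrac{a^2}{2\theta\rho_S}\bigr)\le\exp\bigl(-\tfrac{\varepsilon^2\theta\, OPT_s}{8n}\bigr)$ whenever the relevant relative deviation is below $1$; in the complementary regime $\sigic(S) < \tfrac{\varepsilon}{2}OPT_s$ the lower-tail event is simply impossible because $F_\RR(S)\ge 0$. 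Substituting $\theta\ge L(s,\varepsilon) = (8+2\varepsilon)n\cdot\tfrac{\ell\log n+\log\binom{n}{s}+\log 2}{OPT_s\,\varepsilon^2}$ drives each tail down to at most $\tfrac12\, n^{-\ell}/\binom{n}{s}$, and summing the two tails gives the first claim.

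For the approximation guarantee, I would take a union bound over all $\binom{n}{s}$ seed sets of size exactly $s$: with probability at least $1-\binom{n}{s}\cdot n^{-\ell}/\binom{n}{s} = 1-n^{-\ell}$, the estimate $|V|\cdot F_\RR(S)$ is within $\tfrac{\varepsilon}{2}OPT_s$ of $\sigic(S)$ for \emph{every} such $S$ simultaneously --- in particular for a true maximizer $S^{\mathrm{opt}}$ (with $\sigic(S^{\mathrm{opt}})=OPT_s$) and for the set $S^\star$ that the greedy Max-$s$-Cover step returns. Conditioning on that event I would chain the inequalities: greedy covering gives $F_\RR(S^\star)\ge(1-1/e)\,F_\RR(S^{\mathrm{opt}})$, hence $\sigic(S^\star)\ge |V|F_\RR(S^\star)-\tfrac{\varepsilon}{2}OPT_s \ge (1-1/e)\bigl(OPT_s-\tfrac{\varepsilon}{2}OPT_s\bigr)-\tfrac{\varepsilon}{2}OPT_s$, and finally use $(1-1/e)(1-\varepsilon/2)-\varepsilon/2 \ge 1-1/e-\varepsilon$ to conclude $\sigic(S^\star)\ge(1-1/e-\varepsilon)\,OPT_s$.

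The main obstacle is the Chernoff step: obtaining precisely the constant $8+2\varepsilon$ requires choosing the right form of the multiplicative bound and, more subtly, correctly disposing of the lower tail when $\sigic(S)$ is very small (so the textbook multiplicative lower-tail bound, which needs relative deviation below $1$, does not apply directly). The fix --- observing that the deviation event is then vacuous since $F_\RR(S)\ge0$ --- is short once noticed but must not be skipped; everything else (the Binomial reduction, the two union bounds, and the arithmetic in the last step) is routine.
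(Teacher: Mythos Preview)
The paper does not supply its own proof of this proposition: it is stated as a direct quotation of Lemma~3 and Theorem~1 of Tang et al.~\cite{tang14}, with no argument given in the present manuscript. Your reconstruction of the Tang et al.\ proof is correct and follows their approach essentially verbatim --- the Bernoulli reduction via Proposition~\ref{prop:coro1}, the two-sided multiplicative Chernoff bound (with the vacuous lower-tail case handled when $\sigic(S)<\tfrac{\varepsilon}{2}OPT_s$), the substitution of $L(s,\varepsilon)$ to obtain the $\tfrac12 n^{-\ell}/\binom{n}{s}$ tail, and the final union-bound plus greedy-cover chaining are exactly the ingredients of the original argument.
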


This result intuitively says that as long as we sample enough RR-sets, i.e., $|\RR| \geq \theta$, the absolute error of using $|V| \cdot F_{\RR}(S)$ to estimate $\sigic(S)$ is bounded by a fraction of $OPT_s$ with high probability.
Furthermore, this gives approximation guarantees for influence maximization.
Next, we describe how to extend the ideas of RR-sets sampling and TIM for regret minimization.

\subsection{Two-phase Iterative Regret Minimization}
A straightforward application of TIM for solving \MRA will not work.
There are two critical challenges.
First, TIM requires the number of seeds $s$ as input, while the input of \MRA is in the form of monetary budgets, and thus we do not know the precise number of seeds that should be allocated to each advertiser beforehand.
Second, our influence propagation model has click-through probabilities (CTPs) of seeds, namely $\delta(u,i)$'s.
This is not accounted for in the RR-sets sampling method: it implicitly assumes that each seed becomes active w.p.\ $1$.

We first discuss how to adapt RR-sets sampling to incorporate CTPs.
Then we 
deal with unknown seed set sizes.

\eat{
To address the vital gaps for \MRA, we devise an algorithm called {\em Iterative Two-phase Regret Minimization} (ITRM).
At a high level, ITRM uses budgets and the influence spread of some ``representative'' nodes to come up with a series of guesses for the number of seeds needed for each advertiser.
The guesses are made in an iterative, incremental manner.
We first generates a corresponding number of random RR sets based on initial guesses and select seeds accordingly.
Based on the latest selections and current regret, we revise the guess to include more seeds, until all companies are saturated.
}


\spara{RR-sets Sampling with Click-Through Probabilities.}
Recall that in our model, when a node $u$ is chosen as a seed for advertiser $a_i$, it has a probability $\delta(u,i)$ to accept being seeded, i.e., to actually click on the ad.
For ease of exposition, in the rest of this subsubsection only, we assume that there is only one advertiser, and the CTP of each user $u$ for this advertiser is simply $\delta(u) \in [0,1]$.
The technique we discuss and our results readily extend to any number of advertisers.


\blue{
For clarity, we call the RR-sets generated with CTPs incorporated 
	{\em RRC-sets} to distinguish them from normal RR-sets, which have no 
associated CTPs.
The procedure for generating a random RRC-set is similar to that for 
	generating a normal (random) RR-set.
First, a root $w$ is chosen uniformly at random from $V$.
Let $R_w$ denote the associated RRC-set being generated.
Then, we enqueue $w$ into a FIFO queue $Q$.
}

\blue{
Until $Q$ is empty, we repeat the following: dequeue the next node from $Q$, and let it be $u$.
For all of its in-neighbors $v\in N^{in}(u)$, we first test the edge $(v,u)$:
	it is live w.p.\ $p_{v,u}$, and blocked w.p.\ $1-p_{v,u}$.
If the edge is blocked, we ignore it and continue to the 
next in-neighbor, if any. If the edge is live, 
we further flip a biased coin, independently, for the node $v$ itself:
	w.p.\ $\delta(v)$, we declare $v$ live, and w.p.\ $1-\delta(v)$, declare
	$v$ blocked.
The following two cases arise:
$(i)$.\  If $v$ is live, then it can be a valid seed, and thus we add $v$
	to $R_w$ as well as enqueue $v$ into $Q$.
$(ii)$.\ If $v$ is blocked, then it cannot be a valid seed itself, but it should
	still be added to $Q$, since its in-neighbors may still be
	valid seeds, depending on their own edge- and node-based coin flips.	
}

\blue{
Note that for the root $w$ itself, the node test should also be performed using its CTP:
	w.p. $\delta(w)$, $w$ is added to $R_w$.
Again, even if this CTP test fails, which occurs w.p.\ $1-\delta(w)$, the above
	procedure is still correct in terms of first enqueuing $w$ into $Q$, since
	$w$'s in-neighbors can be valid seeds to activate $w$.
}

Let $\RQ$ be a collection of RRC-sets.
Similar to $F_\RR(S)$, for any set $S$, we define $F_\RQ(S)$ to be the fraction of $\RQ$ that overlap with $S$.
\LL{Let $\sigicctp(S)$ be the influence spread of a seed set $S$ under the IC model with CTPs.} 
We first establish a similar result to Proposition~\ref{prop:coro1} which says that $|V| F_\RQ(S)$ is an unbiased estimator of $\sigma(S)$.

\begin{lemma}\label{lemma:equiv}
Given a graph $G=(V,E)$ with influence probabilities on edges, for any $S \subseteq V$,
$\sigicctp(S) = \E[ |V| \cdot F_\RQ(S)].$

\end{lemma}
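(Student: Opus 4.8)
The plan is to mirror the possible-world argument that underlies Proposition~\ref{prop:coro1}, but to enrich the possible world with an extra layer of randomness for the click-through coins. First I would define a \emph{possible world} tailored to the IC model with CTPs as a pair $Y=(E_Y,N_Y)$, where $E_Y$ keeps each edge $(u,v)\in E$ independently with probability $p_{u,v}$, and $N_Y$ keeps each node $u\in V$ independently with probability $\delta(u)$, all coin flips being mutually independent. The role of $N_Y$ is to pre-commit, for every node, the outcome of its CTP coin. With this, for any seed set $S$ the set of nodes activated by $S$ in world $Y$ is exactly $A_S(Y):=\{v\in V : \exists\,u\in S\cap N_Y \text{ with a path } u\rightsquigarrow v \text{ using only edges of } E_Y\}$: the effectively-clicking seeds are $S\cap N_Y$, and the rest is plain IC reachability. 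Since in the IC-CTP process only the CTP coins of the \emph{seeds} matter, pre-flipping coins for all nodes is harmless, so $\sigicctp(S)=\sum_Y\Pr[Y]\,|A_S(Y)| = \sum_{w\in V}\Pr_Y[w\in A_S(Y)]$.

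Next I would give the matching possible-world characterization of a random RRC-set. By the principle of deferred decisions --- exactly as in the correctness proofs of RR-set sampling in~\cite{borgs14,tang14} --- the lazy backward BFS that constructs $R_w$ is distributionally equivalent to: sample a full possible world $Y$ as above, then set $R_w=\{v\in N_Y : \text{there is a path } v\rightsquigarrow w \text{ in } E_Y\}$; the coins never examined by the BFS are simply irrelevant to $R_w$. The one bookkeeping point to check here is that during the BFS each node's CTP coin is consulted at most once (on first visit), so it is consistent with there being a single such coin per node in the possible world. It then follows that for a uniformly random root $w$ and a single RRC-set, $\Pr[S\cap R_w\neq\emptyset \mid w] = \Pr_Y[\exists\,u\in S: u\in N_Y \text{ and } u\rightsquigarrow w \text{ in } E_Y] = \Pr_Y[w\in A_S(Y)]$.

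Finally I would assemble the pieces by a short linearity-of-expectation computation: taking expectations over the uniform choice of $w$ and over the possible world, $\E[\,|V|\cdot F_\RQ(S)\,] = \sum_{w\in V}\Pr_Y[S\cap R_w\neq\emptyset] = \sum_{w\in V}\Pr_Y[w\in A_S(Y)] = \E_Y\big[\textstyle\sum_{w\in V}\mathds{1}[w\in A_S(Y)]\big] = \E_Y[\,|A_S(Y)|\,] = \sigicctp(S)$, and the identity extends to an arbitrary collection $\RQ$ of i.i.d.\ RRC-sets by linearity. I expect the main obstacle to be the step in the second paragraph: carefully justifying that the lazy RRC-set construction, in which the node-level CTP coins are interleaved with the edge coins inside the BFS, really does realize the clean ``sample a full world, then take the clickable ancestors of $w$'' distribution. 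Once that equivalence is pinned down, everything else is essentially identical in spirit to the argument behind Proposition~\ref{prop:coro1}.
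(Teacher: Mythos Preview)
Your proposal is correct and follows essentially the same possible-world coupling argument as the paper, just fleshed out in more detail: the paper's proof is a three-sentence sketch observing that both $\sigicctp(S)/|V|$ and $\E[F_\RQ(S)]$ equal the probability that a uniformly random node is reachable from some CTP-live seed of $S$ in a random possible world of the IC-CTP model. Your explicit construction of the world $Y=(E_Y,N_Y)$ and the deferred-decisions justification for the RRC-set sampler are precisely the details one would supply to make that sketch rigorous.
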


\begin{proof}
We show the following equality holds:
\begin{align}\label{eqn:ctr-inf}
\sigicctp(S)/|V| = \E[F_\RQ(S)].
\end{align}
The LHS of \eqref{eqn:ctr-inf} equals the probability that a node chosen uniformly at random can be activated by seed set $S$ where a seed $u\in S$ may become live with CTP $\delta(u)$, while the RHS of \eqref{eqn:ctr-inf} equals the probability that $S$ intersects with a random RRC-set.
They both equal the probability that a randomly chosen node is reachable by $S$ in a possible world corresponding to the IC-CTP model.
\end{proof}

\eat{\note[Wei]{I assume that we have described the possible world under IC-CTP model and we have also stated the theorem and proof Laks gave for IC-CTP model.  Laks: Since you were editing the earlier sections lately, I assume you took care of this.}
}

In principle, RRC-sets are those we should work with for the purpose of seed selection for \MRA.
However, note that by Equation~\eqref{eqn:timLB} and Proposition~\ref{lemma:TIMLemma3}, the number of samples required is inversely proportional to the value of the optimal solution $OPT_s$.
However, in reality, click-through rates on ads are quite low, and thus $OPT_s$, taking CTPs into account, will decrease by at least two orders of magnitude (e.g., $OPT_s$ with CTP $0.01$ would become 100 times smaller than $OPT_s$ with CTP $1$).
This in turn translates into at least two orders of magnitude more RRC-sets to be sampled, which ruins scalability.

An alternative way of incorporating CTPs is to pretend as though all CTPs were $1$.
We still generate RR-sets, and use the estimations given by RR-sets to compute revenue.
More specifically, for any $S\subseteq V$ and any $u\in V\setminus S$, we compute the marginal gain of $u$ w.r.t.\ $S$, namely $\sigma_C(S \cup \{u\}) - \sigma_C(S)$, by $\delta(u) \cdot |V| \cdot [F_\RR(S \cup \{u\}) - F_\RR(S)]$.
This avoids sampling of numerous RRC-sets.

We can show that in expectation, computing marginal gain in IC-CTP model using RRC-sets is essentially equivalent to computing it under the IC model using RR-sets in the manner above.

\begin{theorem}\label{thm:ctps}
Consider any $u\in S$ and any $S\subseteq V$.
Let $\delta(u)$ be the probability that $u$ accepts to become a seed.
Let $\RR$ and $\RQ$ be a collection of RR-sets and of RRC-sets, respectively.
Then,
\begin{align*}
\delta(u)(\E[F_\RR(S \cup \{u\})] - \E[F_\RR(S)]) = \E[F_\RQ(S \cup \{u\})] - \E[F_\RQ(S)].
\end{align*}
\end{theorem}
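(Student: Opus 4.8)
Assume $u\notin S$ (as in the surrounding discussion; if $u\in S$ both sides vanish). The plan is to reduce the claim to three facts already available: Proposition~\ref{prop:coro1}, which gives $\E[|V|\cdot F_\RR(T)] = \sigic(T)$ for every $T$; Lemma~\ref{lemma:equiv}, which gives $\E[|V|\cdot F_\RQ(T)] = \sigicctp(T)$ for every $T$; and Lemma~\ref{lem:icctp}, which --- once one notes that for a single advertiser with a fixed topic distribution the TIC-CTP model collapses to the IC-CTP model on the graph with edge weights from \eqref{eq:tic} --- yields the marginal-gain identity $\delta(u)\,[\sigic(S\cup\{u\}) - \sigic(S)] = \sigicctp(S\cup\{u\}) - \sigicctp(S)$. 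Multiplying the asserted equality through by $|V|$: by Lemma~\ref{lemma:equiv} the right-hand side becomes $\sigicctp(S\cup\{u\}) - \sigicctp(S)$; by Lemma~\ref{lem:icctp} this is $\delta(u)\,[\sigic(S\cup\{u\}) - \sigic(S)]$; and by Proposition~\ref{prop:coro1} the bracket equals $|V|\,(\E[F_\RR(S\cup\{u\})] - \E[F_\RR(S)])$, i.e.\ $|V|$ times the left-hand bracket. Dividing by $|V|$ finishes the argument, so the whole proof is a short substitution once the reduction of TIC-CTP to IC-CTP for one ad is in place.

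It is still worth spelling out the more self-contained possible-world argument, which parallels the proofs of Lemma~\ref{lem:icctp} and Lemma~\ref{lemma:equiv} and makes transparent where the factor $\delta(u)$ enters. Write $X$ for the random live-edge subgraph and average over the uniformly chosen root $w$; the roots newly covered upon adding $u$ are exactly those reachable from $u$ but not from $S$, so $\E[F_\RR(S\cup\{u\})] - \E[F_\RR(S)] = \frac{1}{|V|}\sum_w \Pr_X[\,u \text{ reaches } w,\ S \text{ does not reach } w\,]$. Repeat this for RRC-sets, noting that a node $x$ lies in the random RRC-set rooted at $w$ iff $x$ reaches $w$ in $X$ \emph{and} its independent CTP coin $c_x$ comes up heads; hence $\E[F_\RQ(S\cup\{u\})] - \E[F_\RQ(S)]$ equals $\frac{1}{|V|}\sum_w \Pr[\,u \text{ reaches } w \text{ and } c_u=1,\ \text{and no node of }S \text{ both reaches }w \text{ and has coin heads}\,]$. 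Since $u\notin S$, the coin $c_u$ is independent of the edge randomness and of every node coin attached to $S$, so it factors out as a multiplicative $\delta(u)$, leaving the remaining root-wise probabilities to be matched against the RR-set expression.

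I expect the real obstacle to be exactly that last matching step: arguing that dropping the seeds' CTP coins, the way ordinary RR-sets do, leaves the relevant event unchanged. This is precisely where the ``pretend all CTPs equal $1$'' shortcut used by the algorithm must be justified, and the cleanest way to discharge it is to avoid it and route through Lemma~\ref{lem:icctp} as in the first plan. Finally, the extension to many advertisers needs nothing new: $F_\RQ$, the RRC-sets, and the marginal gains are all defined relative to a single ad's edge weights $p^i_{u,v}$, so the single-advertiser identity applies verbatim to each $a_i$.
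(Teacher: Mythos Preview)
Your first plan --- reduce to Proposition~\ref{prop:coro1}, Lemma~\ref{lemma:equiv}, and Lemma~\ref{lem:icctp} --- is correct and genuinely different from the paper's own argument. The paper proves the theorem directly, essentially following your second plan: it introduces the indicator $\mathbb{I}_X(u,S)$ on a random RR-set $X$ (equal to $1$ when $u\in X$ and $S\cap X=\emptyset$), writes the RR-set marginal gain as $\sum_{X:\mathbb{I}_X(u,S)=1}\Pr[X]$, observes that an RRC-set differs from its underlying RR-set only through the independent node coins, and then factors out $\delta(u)$ to obtain the RRC-set marginal gain. Your reduction is shorter and more modular: once the three cited results are on the table the identity is a three-line substitution, and the extension to multiple advertisers follows exactly as you say.

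What each approach buys: the paper's direct argument keeps the proof self-contained at the RR-set level, but --- precisely as you anticipated --- it passes silently over the ``matching step'': in going from RRC-sets to RR-sets it implicitly identifies the event $\{S\cap(\text{RRC-set})=\emptyset\}$ with $\{S\cap(\text{RR-set})=\emptyset\}$, which is the very point you singled out as the obstacle. Your first route does not so much \emph{avoid} this step as \emph{delegate} it to Lemma~\ref{lem:icctp}, whose proof makes the same identification (treating reachability from $S$ in the possible world without reference to the CTP coins of nodes in $S$). So your instinct that this is where the content lives is right; routing through Lemma~\ref{lem:icctp} is cleaner formally, but it inherits whatever care that lemma's proof took at exactly this point.
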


\begin{proof}
Consider a random RR-set $X$, and define an indicator function $\mathbb{I}_X(u, S)$, which takes on $1$ \emph{if} $u \in X$ {\em and} $S \cap X = \emptyset$, and $0$ \emph{otherwise}.
Then, we have:
\begin{align} \label{eqn:rr1}
&\E[F_\RR(S \cup \{u\})] - \E[F_\RR(S)] \nonumber \\
&= \sum_X \Pr[X] \cdot \mathbb{I}_X(u, S) = \sum_{X \colon \mathbb{I}_X(u, S) = 1} \Pr[X],
\end{align}

where $\Pr[X]$ is the probability of sampling the RR-set $X$.

\blue{
Note that the only difference between the generation of
	an RR-set and that of an RRC-set is the additional coin flips 
	on nodes, with CTPs, which are all independent.
Now, consider a fixed RR-set $X$ that does contain $u$.
If we were to generate an RRC-set --- meaning that the outcomes
	of all edge-level coin flips would remain the same --- then $X$
	may contain $u$ w.p.\ $\delta(u)$.
This is true since all edge- and node-level coin flips are
	independent.
If  $u$ belongs to the RRC-set realization of $X$, we denote it by $X_u$.
}


Now, for the expected marginal gain of $u$ under the model
	with CTPs, we have:
\begin{align*}
&\E[F_\RQ(S \cup \{u\})] - \E[F_\RQ(S)] \\
&= \sum_{X_u} \Pr[X_u] = \sum_{X: \mathbb{I}_X(u, S) = 1} \delta(u) \cdot \Pr[X] \\
&= \delta(u) \cdot (\E[F_\RR(S \cup \{u\}) - \E[F_\RR(S)]),
\end{align*}
where we have applied \eqref{eqn:rr1} in the last equality.
This completes the proof.
\end{proof}

This theorem shows even with CTPs, we can still use the usual RR-sets sampling process for estimating spread efficiently and accurately as long as we multiply marginal gains by CTPs. {\sl This result carries over to the setting of multiple advertisers.} 

\spara{Iterative Seed Set Size Estimation.}
As mentioned earlier, TIM needs the required number of seeds $s$ as input, which is not available for the \MRA problem. From the advertiser budgets, there is no obvious way to determine the number of seeds. This poses a challenge since the required number of RR-sets ($\theta$) depends on $s$.
To circumvent this difficulty, we propose a framework which first makes an initial guess at $s$, and then iteratively revises the estimated value, until no more seeds are needed, while concurrently selecting seeds and allocating them to advertisers.

For ease of exposition, let us first consider a single advertiser $a_i$. Let $B_i$ be the budget of $a_i$ and let $s_i$ be the true number of seeds required to minimize the regret for $a_i$. We do not know $s_i$ and estimate it in successive iterations as $\tilde{s}_i^t$.
We start with an estimated value for $s_i$, denoted $\tilde{s_i}^1$, and use it to obtain a corresponding ${\theta}_i^1$ ({\em cf.} Proposition~\ref{lemma:TIMLemma3}).
If ${\theta}_i^t > {\theta}_i^{t-1}$,\footnote{Assuming $\theta_i^0 = 0, i = 1, \ldots, h$.} we will need to sample an additional $({\theta}_i^t - {\theta}_i^{t-1})$ RR-sets, and use all RR-sets sampled up to this iteration to select $(\tilde{s}_i^t - \tilde{s}_i^{t-1})$ additional seeds.
After adding those seeds, if $a_i$'s budget $B_i$ is not yet reached, this means more seeds can be assigned to $a_i$.
Thus, we will need another iteration and we further revise our estimation of $s_i$.
The new value, $\tilde{s}_i^{t+1}$, is obtained by adding to $\tilde{s}_i^t$
the floor function of the ratio between the current regret $\regret_i(S_i)$ and the \LL{marginal} revenue contributed by the $\tilde{s}_i^t$-th seed (i.e., the latest seed).
This ensures we do not overestimate, thanks to submodularity, as future seeds have diminishing marginal gains.

\eat{
Let $z$ be the total number of iterations.
Thus, the final estimate for $s_i$ is simply $\hat{s}_i^z$.
Let $\RR_i^j$ be the collection of RR-sets generated in the $j$-th iteration.
To finalize the seed set $S_i$ for $a_i$, we use $\RR_i := \cup_{j=1}^z \RR_i^j$ to select  $\hat{s}_i^z$ seeds.
Note that the spread of this final $S_i$ has bounded error as the size of $\RR_i$ is nothing but the $\theta_i$ corresponding to $\hat{s}_i^z$.
}

\IncMargin{1em}
\begin{algorithm}[t!]
\caption{\fastAlgo}
\label{alg:rmRRsets}
\Indm
{\small
\SetKwInOut{Input}{Input}
\SetKwInOut{Output}{Output}
\SetKwComment{tcp}{//}{}
\Input{$G=(V,E)$; attention bounds $\kappa_u, \forall u\in V$; items $\vec{\gamma}_i$ with $cpe(i)$
\& budget $B_i$, $i = 1,\ldots,h$; CTPs $\delta(u,i), \forall u \forall i$}
\Output{$S_1, \cdots, S_h$}
}
\Indp
{\small
\ForEach{$j = 1, 2, \ldots, h$} {
$S_j \gets \emptyset$; $Q_j \gets \emptyset$; \tcp{\small a priority queue}
$s_j \gets 1$; $\theta_j \gets L(s_j, \varepsilon)$; $\RR_j \gets \mathsf{Sample}(G, \gamma_j,\theta_j)$\;  \label{line:ITRM-1}
}
\BlankLine
\While{true} {
  \ForEach{$j = 1, 2, \ldots, h$} {
  	  $(v_j, cov_j(v_j)) \gets \mathsf{SelectBestNode}(\RR_j)$ \tcp*{Algo~\ref{alg:rrBestNode}}	  \label{line:ITRM-2}
 	  $F_{\RR_j}(v_j) \gets cov_j(v_j) / \theta_j $\;
  }
  $i \leftarrow \argmax_{j=1}^h \mathcal{R}_j(S_j) -\mathcal{R}_j(S_j \cup \{v_j\}) $ \hspace*{12ex} subject to: $\mathcal{R}_j(S_j \cup \{v_j\}) < \mathcal{R}_j(S_j)$;
//{\tt find the (user, ad) pair with max drop in regret.}

\If{$i \neq \mathbf{NULL}$} {
	$S_i \gets S_i \cup \{v_i\}$\;
         $Q_i.\mathsf{insert}(v_i, cov_i(v_i)) $\;
	$\RR_i \gets \RR_i \setminus \{R \mid v_i \in R \; \wedge \; R \in \RR_i\} $;      \label{line:ITRM-3}
}
//{\tt remove RR-sets that are covered}\;
\lElse {
	{\bf return}
}
\If{$ \left\vert{S_i}\right\vert = s_i$} {  \label{line:ITRM-4}
	$s_i \gets s_i + \lfloor \mathcal{R}_i(S_i) / (cpe(i) \cdot n \cdot \delta(v_i,i) \cdot F_{\RR_i}(v_i)) \rfloor$\;	
	$\theta_i \gets \max\{L(s_i,\varepsilon), \theta_{i}\}$\;
  	$\RR_i \gets \RR_i \cup \mathsf{Sample}(G, \gamma_i, \max\{0, L(s_i,\varepsilon) - \theta_i)\}$\; 	
  	$\Pi_i(S_i) \gets$ {\sf UpdateEstimates}($\RR_i$, $\theta_i$, $S_i$, $Q_i$);  \label{line:ITRM-6}
//{\tt revise estimates to reflect newly added RR-sets}\;
  	$\mathcal{R}_i(S_i) \gets |B_i - \Pi_i(S_i)|$\;   \label{line:ITRM-5}
}
}
}
\end{algorithm}
\DecMargin{1em}

\IncMargin{1em}
\begin{algorithm}[t!]
\caption{SelectBestNode($\RR_j$)}
\label{alg:rrBestNode}
\Indm
{\small
\SetKwInOut{Input}{Input}
\SetKwInOut{Output}{Output}
\SetKwComment{tcp}{//}{}
\Output{$(u, cov_j(u))$}
}
\Indp
{ \small
    $u \gets \argmax_{v\in V} |{\{R \mid v \in R \; \wedge \; R \in \RR_j\}}| $
\hspace*{12ex} subject to: $|\{S_l | v \in S_l\}| < \kappa_v $\; \label{algo-sbn:line1}
    $cov_j(u) \gets |{\{R \mid u \in R \; \wedge \; R \in \RR_j\}}| $;
//{\tt find best seed for ad $a_j$ as well as its coverage.}
}

\end{algorithm}
\DecMargin{1em}

\IncMargin{1em}
\begin{algorithm}[t!]
\caption{UpdateEstimates($\RR_i$, $\theta_i$, $S_i$, $Q_i$)}
\label{alg:rrUpdateEst}
\Indm
{\small
\SetKwInOut{Input}{Input}
\SetKwInOut{Output}{Output}
\SetKwComment{tcp}{//}{}
\Output{$\Pi_i(S_i)$}
}
\Indp
{ \small
$\Pi_i(S_i) \gets 0 $ \;
\For{$j = 0, \ldots, |S_i| -1$} {
	$(v,cov(v)) \gets Q_i[j] $ \;
	$cov'(v) \gets \left\vert{\{R \mid v \in R, R \in \RR_i\}}\right\vert $\;
	$Q_i.\mathsf{insert}(v, cov(v) + cov'(v))$\;
	$\Pi_i(S_i) \gets \Pi_i(S_i) + cpe(i) \cdot n \cdot \delta(v,i) \cdot ((cov(v) + cov'(v)) / \theta_i) $; //{\tt update coverage of existing seeds w.r.t. new RR-sets added to collection.}
}
}
\end{algorithm}
\DecMargin{1em}

Algorithm~\ref{alg:rmRRsets} outlines \fastAlgo, which integrates the iterated seed set size estimation technique above, suitably adapted to multi-advertiser setting, along with the RR-set based coverage estimation idea of TIM, and uses Theorem~\ref{thm:ctps} to deal with CTPs. Notice that the core logic of the algorithm is still based on greedy seed selection as outlined in Algorithm~\ref{alg:rmVanilla}.
Algorithm \fastAlgo works as follows.
For every advertiser $a_i$, we initially set its seed budget $s_i$ to be 1 (a conservative, but safe estimate), and find the first seed using random RR-sets generated accordingly (line~\ref{line:ITRM-1}).
In the main loop, we follow the greedy selection logic of Algorithm~\ref{alg:rmVanilla}.
That is, every time, we identify the valid user-advertiser pair $(u,a_i)$ that gives the {\sl largest decrease in total regret} and allocate $u$ to $S_i$ (lines~\ref{line:ITRM-2} to \ref{line:ITRM-3}), paying attention to the attention bound of $u$ (line ~\ref{algo-sbn:line1} of Algorithm~\ref{alg:rrBestNode}).
If $|S_i|$ reaches the current estimate of $s_i$ after we add $u$, then we increase $s_i$ by $\lfloor \mathcal{R}_i(S_i) / (cpe(i) \cdot n \cdot F_{\RR_i}(u)) \rfloor$ (lines~\ref{line:ITRM-4} to \ref{line:ITRM-5}), as described above, as long as the regret continues to decrease. 
Note that after adding additional RR-sets, we should update the spread estimation of current seeds w.r.t.\ the new collection of RR-sets (line~\ref{line:ITRM-6}).
This ensures that future marginal gain computations and selections are accurate.
This is effectively a {\em lower bound} on the number of additional seeds needed, as subsequent seeds will not have marginal gain higher than that of $u$ due to submodularity.
As in Algorithm~\ref{alg:rmVanilla}, \fastAlgo terminates when all advertisers have saturated, i.e., no additional seed can bring down the regret.
\LL{Note that in Algorithm~\ref{alg:rrUpdateEst}, we update the estimated revenue (coverage) of existing seeds w.r.t. the additional RR-sets sampled, to keep them accurate.}

\eat{
\note[Wei]{ALGO 2: Have some doubts about the two red lines.  If we remove RR-sets from $\RR$ due to adding seeds (1st red line), then later (2nd red line), we should add it back right?}

\note[Wei]{ALGO 4: The red line, should it be just $Q_i.insert(v, cov')$?}
}

\eat{First we explain how Iter-RRS can be performed. Let $T$ denote the number of iterations of random $RR$ sets sampling performed in Iter-RRS, let $\kappa^t$ denote the incremental seed set sizes for each sampling iteration $t = 1,\cdots,T$. Let $\theta^t \ge L(\sum_{j=1}^{t}\kappa^j,\varepsilon)$ be the number of random $RR$ sets required for the influence spread estimation of sets of upto $\sum_{j=1}^{Estimation Accuracy oft}\kappa^j$ nodes. Denote $\RR^t$ as the set of random $RR$ sets of size $\theta^t$ obtained at each sampling iteration $t$ from $\RR^t = \RR^{t-1} \cup \RR$, where $\RR$ is the set of additional $\theta^t - \theta^{t-1}$ $RR$ sets sampled at iteration $t$.
}

\eat{
\smallskip\noindent\textbf{Theoretical Analysis for \fastAlgo.}
Note that, when $\theta^t \ge \theta^{t-1}$, no additional sampling can be done, thus, the previous lower bound $L(\sum_{j=1}^{t-1} \kappa^j,\varepsilon)$ is used for $\theta^t$, by defining $\theta^t = \theta^{t-1}$, since $\theta^{t-1} \ge \theta^{t} \ge L(\sum_{j=1}^{t} \kappa^j, \varepsilon)$, \emph{i.e.}, $\theta^{t-1}$ already satisfies the lower bound bound $\theta^t$. We establish the accuracy of Iter-RRS based estimations at any iterative sampling step next.
}

\smallskip\noindent\textbf{Estimation Accuracy of \fastAlgo.}
\LL{At its core, \fastAlgo, like TIM, estimates the spread of chosen seed sets, even though its objective is to minimize regret w.r.t. a monetary budget.} 
Next, we show that the influence spread of seeds estimated by \fastAlgo enjoys bounded error guarantees similar to those chosen by TIM (see Proposition~\ref{lemma:TIMLemma3}).

\begin{theorem}\label{thm:iterRR}
 At any iteration $t$ of iterative seed set size estimation in Algorithm \fastAlgo, for any set $S_i$ of at most $s = \sum_{j=1}^{t}s^j$ nodes, $\left| n \cdot F_{\RR^t}(S_i) - \sigma_i(S_i) \right| < \dfrac{\varepsilon}{2} \cdot OPT_s$ holds with probability at least $1 - n^{-\ell} / \binom{n}{s}$, where $\sigma_i(S)$ is the expected spread of seed set $S_i$ for ad $i$. 
\end{theorem}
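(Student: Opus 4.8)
The plan is to reduce the statement to Proposition~\ref{lemma:TIMLemma3} (the TIM accuracy guarantee of~\cite{tang14}), applied to the topic-specialized graph $G^i=(V,E,p^i)$ whose edge probabilities are those of Eq.~\eqref{eq:tic}. For a fixed ad $i$ the CTP-free TIC model collapses to the IC model on $G^i$ (cf.\ the observation preceding Lemma~\ref{lem:icctp}), and the RR-sets that \fastAlgo maintains in $\RR^t=\bigcup_{j\le t}\RR^j$ are exactly random RR-sets of $G^i$, produced by the $\mathsf{Sample}(G,\gamma_i,\cdot)$ calls. Hence the only things to verify are: (a) the number of RR-sets held at iteration $t$ is at least $L(s,\varepsilon)$ for $s=\sum_{j\le t}s^j$; and (b) the accumulated collection $\RR^t$ is distributed as a single fresh batch of $\theta^t$ i.i.d.\ random RR-sets, so that Proposition~\ref{lemma:TIMLemma3} applies verbatim with $\theta=\theta^t$.

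For (a) I would trace the bookkeeping in Algorithm~\ref{alg:rmRRsets}: whenever $s_i$ is increased, the algorithm updates $\theta_i\gets\max\{L(s_i,\varepsilon),\theta_i\}$ and then appends $\max\{0,\,L(s_i,\varepsilon)-\theta_i\}$ additional RR-sets, so after iteration $t$ we have $|\RR^t|=\theta^t=\max_{j\le t}L(s^{(j)},\varepsilon)\ge L(s,\varepsilon)$, since $s^{(t)}=s$ is the largest estimate produced so far. Taking the running maximum is essential here because $L(\cdot,\varepsilon)$ need not be monotone in its first argument: increasing $s$ enlarges $\log\binom{n}{s}$ but also $OPT_s$.

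For (b), every RR-set---whether drawn in the first iteration or a later one---is an independent draw from the same law over $G^i$; the iterative scheme changes only how many are drawn, not the sampling distribution, so $\RR^t$ has exactly the distribution of $\theta^t$ i.i.d.\ random RR-sets. The removal of covered RR-sets in line~\ref{line:ITRM-3} and the recomputation of coverages in Algorithm~\ref{alg:rrUpdateEst} are purely organizational: the quantity $F_{\RR^t}(S_i)$ in the statement is the fraction of the full $\theta^t$-element sample overlapped by $S_i$. Combining (a) and (b), Proposition~\ref{lemma:TIMLemma3} with $\theta=\theta^t\ge L(s,\varepsilon)$ yields $\bigl|\,n\cdot F_{\RR^t}(S_i)-\sigma_i(S_i)\,\bigr|<\tfrac{\varepsilon}{2}\cdot OPT_s$ for all $S_i$ with $|S_i|\le s$, with probability at least $1-n^{-\ell}/\binom{n}{s}$, which is the claim; CTPs do not interfere, since by Theorem~\ref{thm:ctps} they enter only as multiplicative factors $\delta(u,i)\in[0,1]$ on marginal gains and can be folded in afterwards without enlarging the absolute error. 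The step I expect to demand the most care is the adaptivity in (b): the seed-budget estimate $s=s^{(t)}$ is itself a random function of the RR-sets already sampled, whereas Proposition~\ref{lemma:TIMLemma3} is stated for a fixed target size. I would handle this by conditioning on the event $\{s^{(t)}=s\}$---which is determined by $\RR^{t-1}$ and hence independent of the extra RR-sets sampled \emph{after} $s$ is fixed---or, alternatively, by a union bound over the at most $n$ possible values of $s$, noting that the stated failure probability $n^{-\ell}/\binom{n}{s}$ is already indexed by $s$.
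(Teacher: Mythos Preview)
Your proposal follows the same core route as the paper: reduce the claim to Proposition~\ref{lemma:TIMLemma3} by verifying that at iteration $t$ the accumulated sample size satisfies $|\RR^t|=\max_{j\le t}L\bigl(\sum_{a\le j}s^a,\varepsilon\bigr)\ge L(s,\varepsilon)$. The paper's own proof does exactly this in two sentences and nothing more.

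You go further than the paper in two respects. First, you explicitly argue point~(b), that the union of batches is distributed as a single i.i.d.\ sample of size $\theta^t$; the paper takes this for granted. Second, and more interestingly, you flag the adaptivity issue: since $s=s^{(t)}$ is a random variable depending on earlier RR-sets, Proposition~\ref{lemma:TIMLemma3} (stated for a fixed $s$) does not apply verbatim, and you sketch two fixes (conditioning on $\{s^{(t)}=s\}$, or a union bound over the possible values of $s$). The paper's proof is silent on this point and simply invokes the proposition as if $s$ were deterministic. So your argument is strictly more careful than the paper's; the extra work you propose is not wasted, it patches a genuine gap the paper leaves open. Your remark that $L(\cdot,\varepsilon)$ need not be monotone in its first argument (because $OPT_s$ grows with $s$) is also a nice justification for why the running $\max$ is needed, which the paper again does not spell out.
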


\begin{proof}
When $t = 1$, our claim follows directly from Proposition~\ref{lemma:TIMLemma3}.
When $t > 1$, by definition of our iterative sampling process, the number of RR-sets, $|\RR^t|$, is equal to $\max_{j=1,\ldots,t} L_j,$ 
where $L_j = L\left( \sum_{a=1}^{j}s^a, \varepsilon \right)$.
This means that at any iteration $t$, the number of RR-sets is always sufficient for Eq. \eqref{eq:Lemma3} to hold.
Hence, for the set $S_i$ containing seeds accumulated up to iteration $t$, our claim on the absolute error in the estimated spread of $S_i$ holds, by virtue of 
Proposition~\ref{lemma:TIMLemma3}.
\end{proof}

\eat{
we need to show that Lemma~\ref{lemma:TIMLemma3} holds for the random sample $\RR^t$ of size $\theta^t$. Note that, $|\RR^t|$ is at least as large as $\theta^+$ since $|\RR^t| = \max_{j=1}^t L(\sum_{a=1}^{j}\kappa^a, \varepsilon)$ by Iter-RRS definition, thus, we cannot directly operate on the value on estimators as they might have been estimated by different sample sizes.

As stated in Corollary 1 in \cite{tang14}, $\E[F_{\RR^+}(S)]$ is the probability that a set $S$ intersects a random RRC-set in $\RR$, and it is equal to $\sigma(S)/n$, the probability that a randomly selected node in can be activated by $S$.
Let this probability be $\rho$.
As shown in \cite{tang14}, $\theta^+ \cdot F_{\RR^+}(S)$ is the sum of $\theta^+$ i.i.d. Bernouilli random variables with mean $\rho \cdot \theta^+$. 
}

\eat{
Recall that $\E[n \cdot F_{\RR}(S)] = \sigma(S)$. 
Moreover, in each sampling iteration, random sampling of additional random $RR$ sets are performed with replacement, preserving the i.i.d assumption on the random $RR$ sets of the random sample.
Thus, following \cite{tang14}, using Chernoff bounds on the sum of $\theta^t$ i.i.d Bernouilli variables $\theta^t \cdot F_{\RR^t}(S)$ with a mean $\rho \cdot \theta^t$, for $s = \sum_{j=1}^{t}s^j$:
\begin{align*}
Pr\left[ \left| \theta^t \cdot F_{\RR^t} - \rho \cdot \theta^t \right| < \dfrac{\varepsilon}{2} \cdot OPT_s\right] &> 1 - n^{- \ell} / \binom{n}{s}
\end{align*}
holds since $\theta^t \ge L(\sum_{j=1}^{t}\kappa^j, \varepsilon)$, thus,
\begin{align*}
Pr\left[\left| n \cdot F_{\RR^t}(S) - \sigma(S) \right| < \dfrac{\varepsilon}{2} \cdot OPT_s \right] > 1 - n^{- \ell} / \binom{n}{s}
\end{align*}
This completes the proof.
}

\eat{
Next, we describe important theoretical results to show why TIM works.
Let $\RR$ be a collection of random RR sets and let $F_{\RR}(S)$ be the fraction of $\RR$ covered by a set $S$.

\begin{lemma}
$\sigma(S) = \mathbb{E}[n F_\RR(S)]$, where $n = |V|$.
\end{lemma}

\begin{proof}
Given $G = (V,E)$, let $\mathbb{G}$ denote the set of all possible worlds, where a possible world is obtained by running the random edge removal process on $G$ described above. 
Let $X$ be any possible world and $\sigma_X(S)$ be the number of nodes reachable from $S$ in $X$.
Thus, $\sigma(S) = \sum_{X \in \mathbb{G}} \Pr[X] \cdot \sigma_X(S)$.

Let $X^T$ to a deterministic graph obtained by reversing the directions of edges in $X$.
Let $u$ be a randomly selected node from $G$ and let $path_X(S,u)$ be an indicator random variable taking on $1$ if $u$ is reachable from $S$ in $X$, and $0$ otherwise.

Hence,
\begin{align*}
\sigma_X(S) =
\end{align*}

\note[Wei]{The above equations need to be revised.}

Let $R$ be the set of nodes reachable from $u$ in $X^T$.
Equivalently, $R$ is the reverse-reachable set of $u$ in $X$:
$$R = \{ v \mid path_{X^T}(u,v)= 1 \} $$

Taking the expectation over the randomness of node selection, we get:
\begin{align*}
\sigma_X(S) = n \cdot \Pr[path^{X^T} (u,S) = 1]
\end{align*}
which can be re-written with the random $RR$ set $R$ as:
\begin{align}
\label{Eqn:formal}
\sigma^X(S)&= n \cdot Pr[S \cap R \neq \emptyset]
\end{align}
What we have so far shows that the influence spread of $S$ in a possible world $X$ is equal to $n$ times the probability that the $RR$ set of random node $u$ has non-empty intersection with the seed set $S$. We can further derive the equivalence of the following probabilities by diving both sides by $n$:
\begin{align}
\label{Eqn:formal}
\sigma^X(S)/n &= Pr[S \cap R \neq \emptyset]
\end{align}
which corresponds to the probability of a node selected at random being influenced in possible world $X$ by a node in seed set $S$ on the left hand side, and to the probability that a random $RR$ set derived from possible world $X$ has non-empth intersection with $S$. Note that, so far, we have shown these equivalences in a random possible world instance $X \thicksim \mathbb{G}$. Using the randomness over possible worlds, we have in expectation:
\begin{align}
\label{Eqn:formal}
Pr[X] \cdot \sigma^X(S) &= Pr[X] \cdot (n \cdot Pr[S \cap R \neq \emptyset]) \\
\sigma(S) &= n \cdot \mathbb{E}[S \cap R \neq \emptyset]
\end{align}
Modeling $S \cap R \neq \emptyset$ as a Bernouilli random variable, we can further deduce
\begin{align}
\sigma(S) &= n \cdot \mathbb{E}[F_{\RR}(S)]
\end{align}
where $F_{\RR}(S)$ is the proportion of \emph{success} events, \emph{i.e.}, non-empty intersection of $S$ with a random $RR$ set in a random sample $\RR$.

\note[Wei]{this proof probably needs revisions. Also, maybe it's better to move it to Appendix.}
\end{proof}
}

\eat{
Given any $s$ and $\varepsilon > 0$, define $L(s,\varepsilon)$ to be:
\begin{align}
L(s, \varepsilon) = (8 + 2 \varepsilon) n \cdot \dfrac{\ell \log(n) + \log \binom{n}{s} + \log 2}{OPT_s \cdot \varepsilon^{2}}.
\end{align}

The effectiveness of TIM can be established by the following result from \cite{tang14}.
Essentially,

\begin{proposition}[Lemma 3 and Theorem 1 in \cite{tang14}]
\label{lemma:TIMLemma3}
Let $\theta$ be a number satisfying $\theta \geq L(s, \varepsilon)$.
Then for any set $S$ with size $s$ or less, the following inequality holds w.p.\ of at least $1 - n^{- \ell} \binom{n}{s}$:
\begin{align}
\label{eq:Lemma3}
\left| n \cdot F_{\RR}(S) - \sigma(S) \right| < \dfrac{\varepsilon}{2} \cdot OPT_s.
\end{align}
Moreover, TIM returns a $(1-1/e-\epsilon)$-approximate solution w.p. $1-n^{-\ell}$.
\end{proposition}

We shall see that \eqref{eq:Lemma3} demonstrates that $n \cdot F_{\RR}(S)$ is approximates $\sigma(S)$ within a bounded factor w.h.p.
Note that computing $OPT_s$ is \NPhard.
Instead, we need to estimate a lower bound of it to make $\theta \geq L(s, \varepsilon)$ hold.
We refer the reader to \cite{tang14} for details in obtaining this lower bound.
}

\section{Experiments}
\label{sec:exp}
We conduct an empirical evaluation of the proposed algorithms. The goal is manifold. First, we would like to evaluate the quality of the algorithms as measured by the regret achieved, the number of seeds they used to achieve a certain level of budget-regret, and the extent to which the attention bound ($\kappa$) and the penalty factor ($\lambda$) affect their performance. Second, we evaluate the efficiency and scalability of the algorithms w.r.t. advertiser budgets, which indirectly control the number of seeds required, and w.r.t. the number of advertisers. We measure both running time and memory usage. 


\smallskip\noindent\textbf{Datasets.}
\WL{Our experiments are based on four real-world social networks, whose basic statistics are summarized in Table~\ref{table:dataset}.}
Of the four datasets, we use \flix and \epi for our quality experiments and \dblp and \livej for scalability experiments. 
\flix is from a
social movie-rating site (\url{http://www.flixster.com/}). 
The dataset records movie ratings from users along with their timestamps. 
We use the topic-aware influence probabilities \CA{and the item-specific topic distributions
provided by the authors of \cite{BarbieriBM12}, who learned the probabilities using maximum likelihood estimation for the TIC model with $K=10$ latent topics.
In our quality experiments, we set the number of advertisers $h$ to be $10$, and used $10$ of the learnt topic distributions from Flixster dataset, where for each ad $i$ , its topic distribution $\vec{\gamma_i}$ has mass $0.91$ in the $i$-th topic, and $0.01$ in all others.}
CTPs are sampled uniformly at random from the interval $[0.01, 0.03]$ for all user-ad pairs, in keeping with real-life CTPs (see \textsection\ref{sec:intro}). 

\epi is a who-trusts-whom network taken from a consumer review website (\url{http://www.epinions.com/}).
\CA{For Epinions, we similarly set $h=10$ and use $K=10$ latent topics. For each ad $i$, we use synthetic topic distributions $\vec{\gamma_i}$, by borrowing the ones used in \flix. 
For all edges and topics, the topic-aware influence probabilities are sampled from an exponential distribution with mean $30$, via the inverse transform technique~\cite{devroye1986sample} on the values sampled randomly from uniform distribution $\mathcal{U}(0,1)$. 
}

\begin{table}[t!]
\small
\centering
\begin{tabular}{|c | c | c | c | c| }
\hline
& \flix & \epi  & \dblp & \livej  \\ \hline
\#nodes & 30K & 76K & 317K  & 4.8M \\ \hline
\#edges & 425K & 509K & 1.05M & 69M  \\ \hline
type & directed & directed & undirected & directed \\ \hline
\end{tabular}
\caption{Statistics of network datasets.}
\label{table:dataset}
\vspace{-1mm}
\end{table}

For scalability experiments, we adopt two large networks \dblp and \livej (both are available at \url{http://snap.stanford.edu/}).
\dblp is a co-authorship graph (undirected) where nodes represent authors and there is an edge between two nodes if they have co-authored a paper indexed by DBLP.
\WL{We direct all edges in both directions.}
\livej is an online blogging site where users can declare which other users are their friends.

\WL{In all datasets, advertiser budgets and CPEs are chosen in such a way that the total number of seeds required for all ads to meet their budgets is less than $n$.
This ensures no ads are assigned empty seed sets.
For lack of space, we do not enumerate all the numbers, but rather give a statistical summary in Table~\ref{table:cpe}. Notice that since the CTPs are in the 1-3\% range, the effective number of targeted nodes is correspondingly larger. 
We defer the numbers for \dblp and \livej to \textsection\ref{sec:scala}.
}

\WL{All experiments were run on a 64-bit RedHat Linux server with Intel Xeon 2.40GHz CPU and 65GB memory.
Our largest configuration is \livej with 20 ads, which effectively has $69M \cdot 20 = 1.4B$ edges;
this is comparable with \cite{tang14}, whose largest dataset has 1.5B edges (Twitter).
}

\begin{table}[t!]
\small
\centering
	\begin{tabular}{|c | c | c | c | c | c | c|}
		\hline
		 &  \multicolumn{3}{|c|}{ Budgets} & \multicolumn{3}{|c|} {CPEs} \\
 		 \hline
		 Dataset & mean & min  & max & mean & min & max  \\ \hline
		\flix & 375  & 200 & 600  & 5.5 & 5 & 6  \\ \hline
		\epi & 215 & 100  & 350  & 4.35 & 2.5 & 6 \\ \hline
	\end{tabular}
\caption{Advertiser budgets and cost-per-engagement values}
\label{table:cpe}
\vspace{-1mm}
\end{table}

\begin{figure*}[t!]
\begin{tabular}{cccc}
    \includegraphics[width=.24\textwidth]{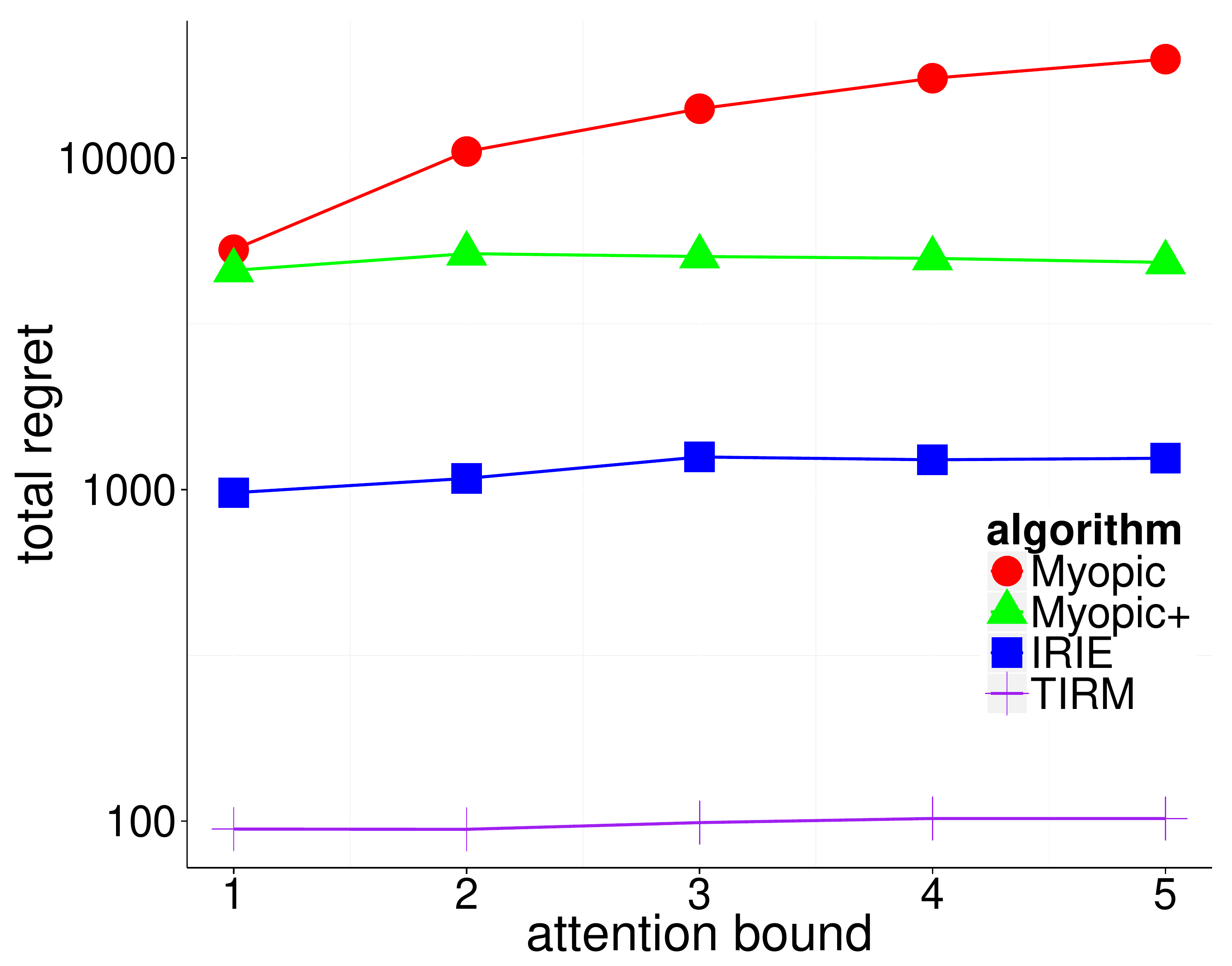}&
    \hspace{-2mm}\includegraphics[width=.24\textwidth]{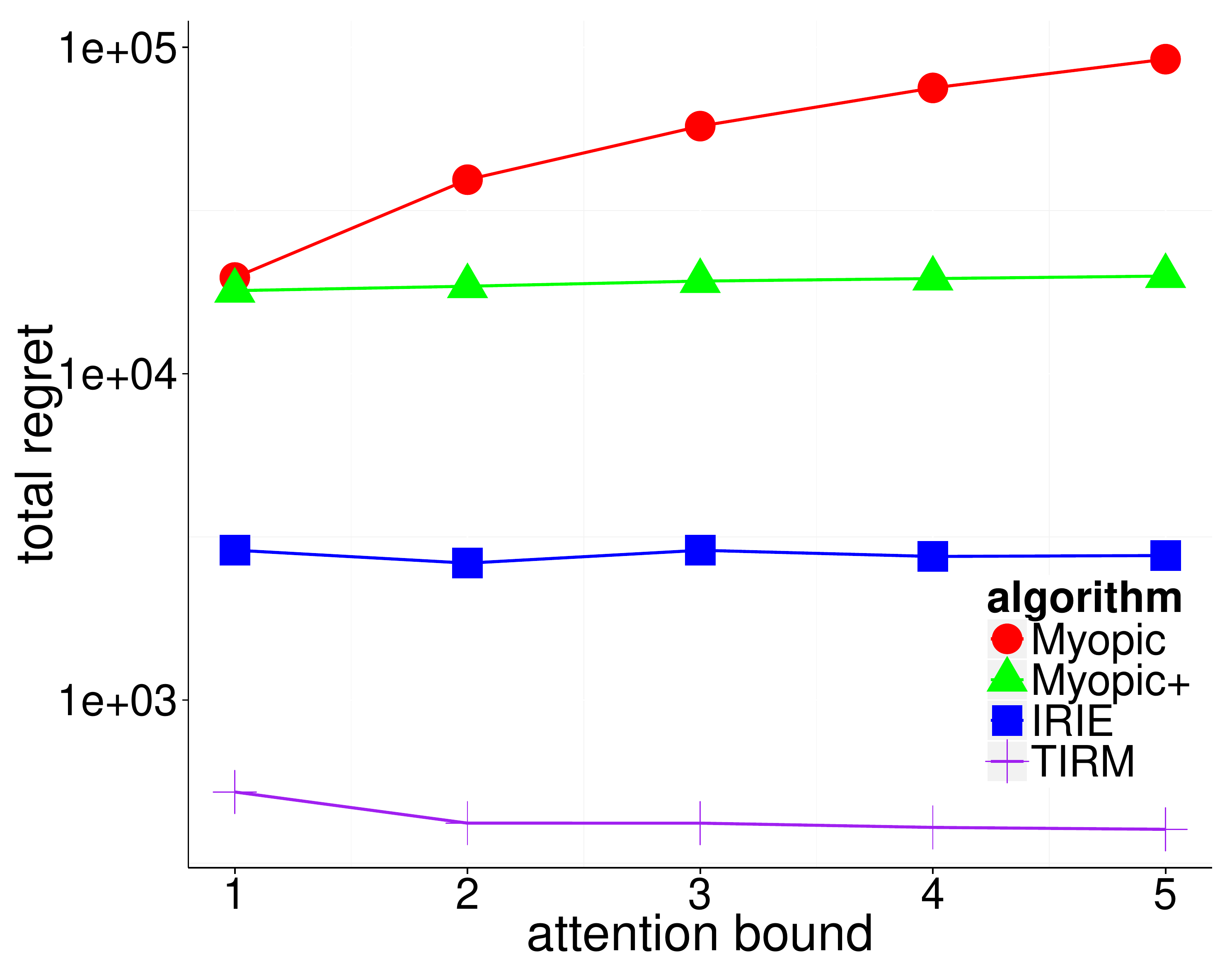}&
    \hspace{-2mm}\includegraphics[width=.24\textwidth]{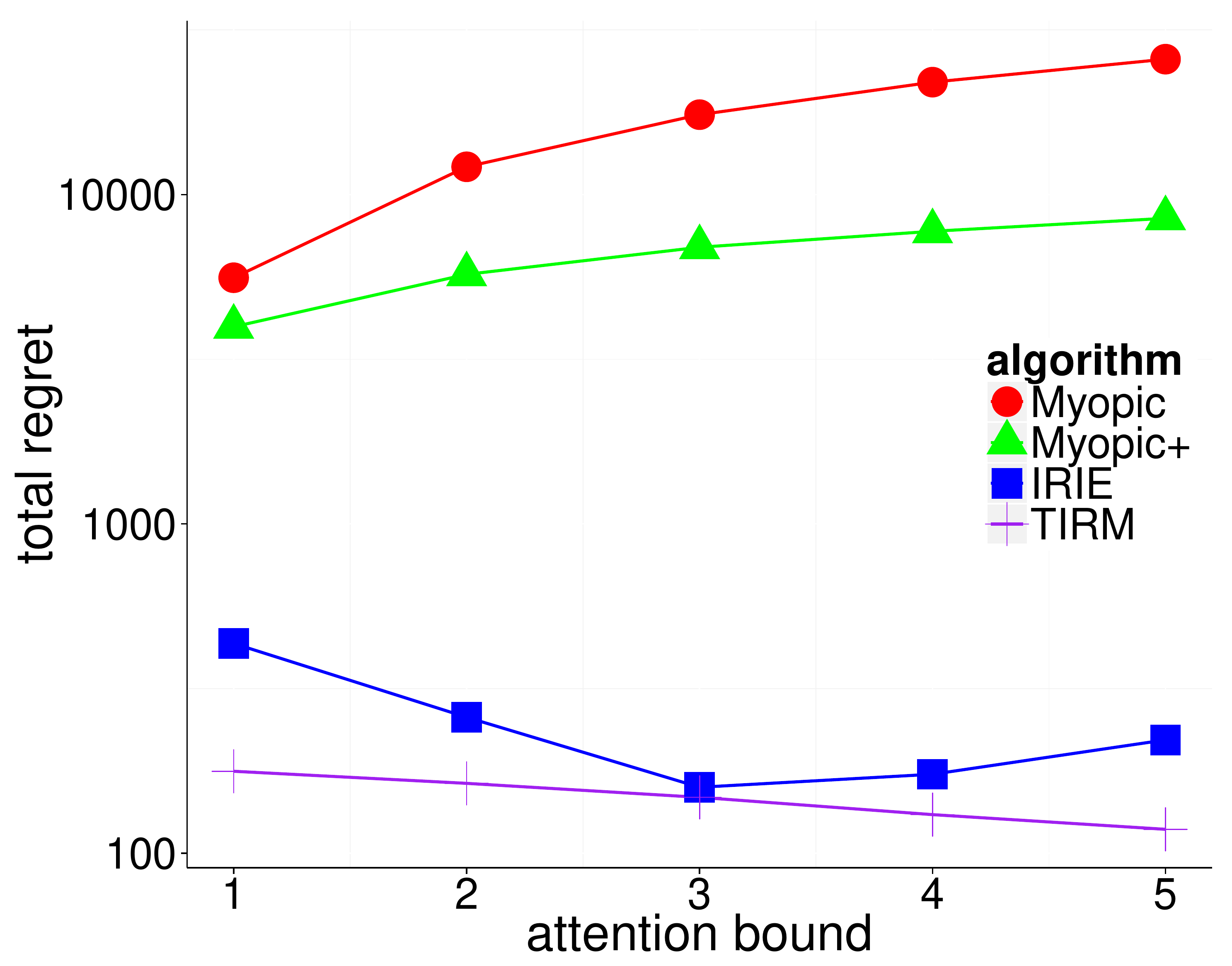}&
    \hspace{-2mm}\includegraphics[width=.24\textwidth]{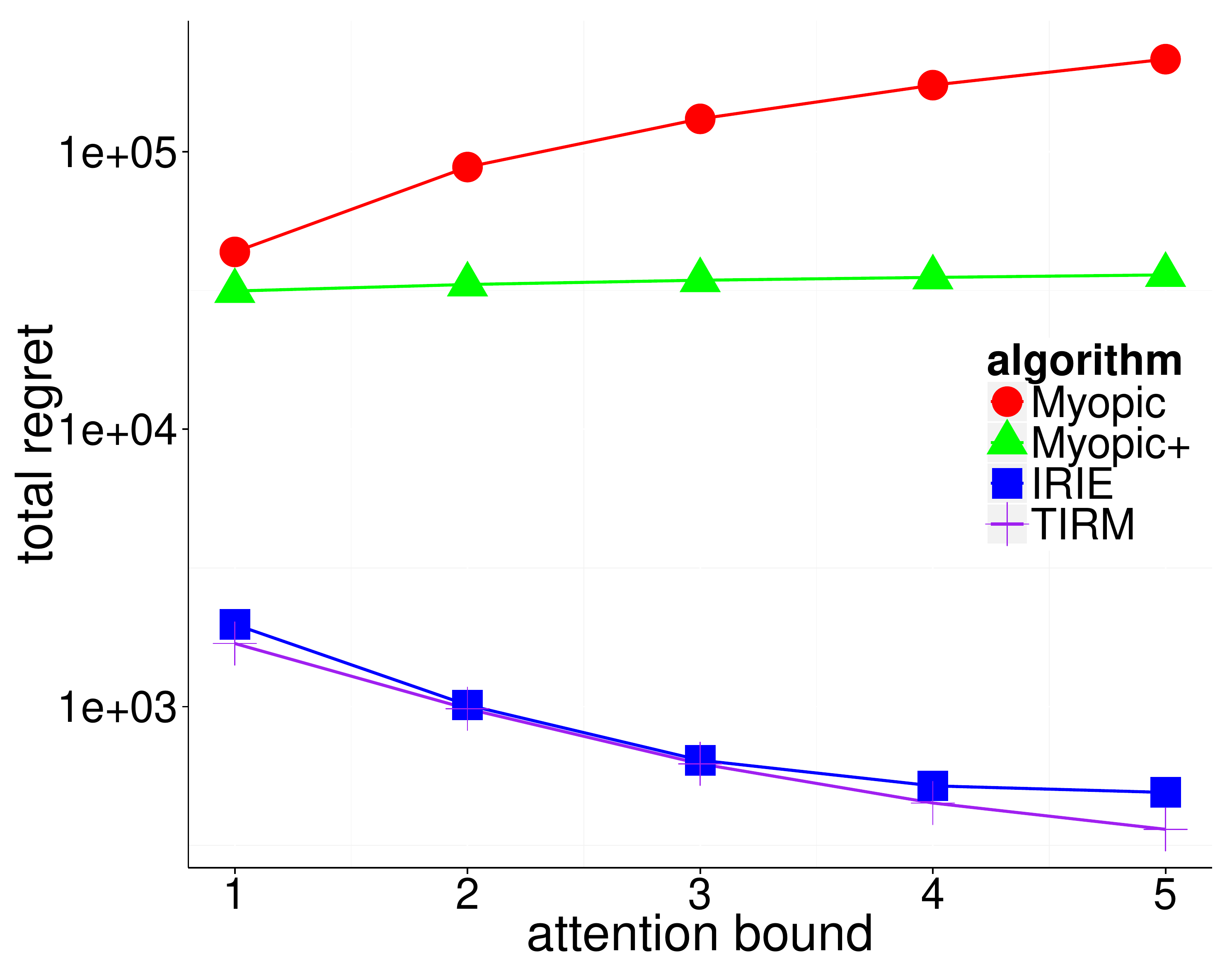}\\
	(a) \flix ($\lambda=0$)  & (b) \flix ($\lambda=0.5$) & (c) \epi ($\lambda=0$) & (d) \epi ($\lambda=0.5$)   \\
\end{tabular}
\vspace{-4mm}
\caption{Total regret (log-scale) vs. attention bound $\kappa_u$}
\label{fig:regret-kappa}
\end{figure*}

\begin{figure*}[t!]
\begin{tabular}{cccc}
    \includegraphics[width=.24\textwidth]{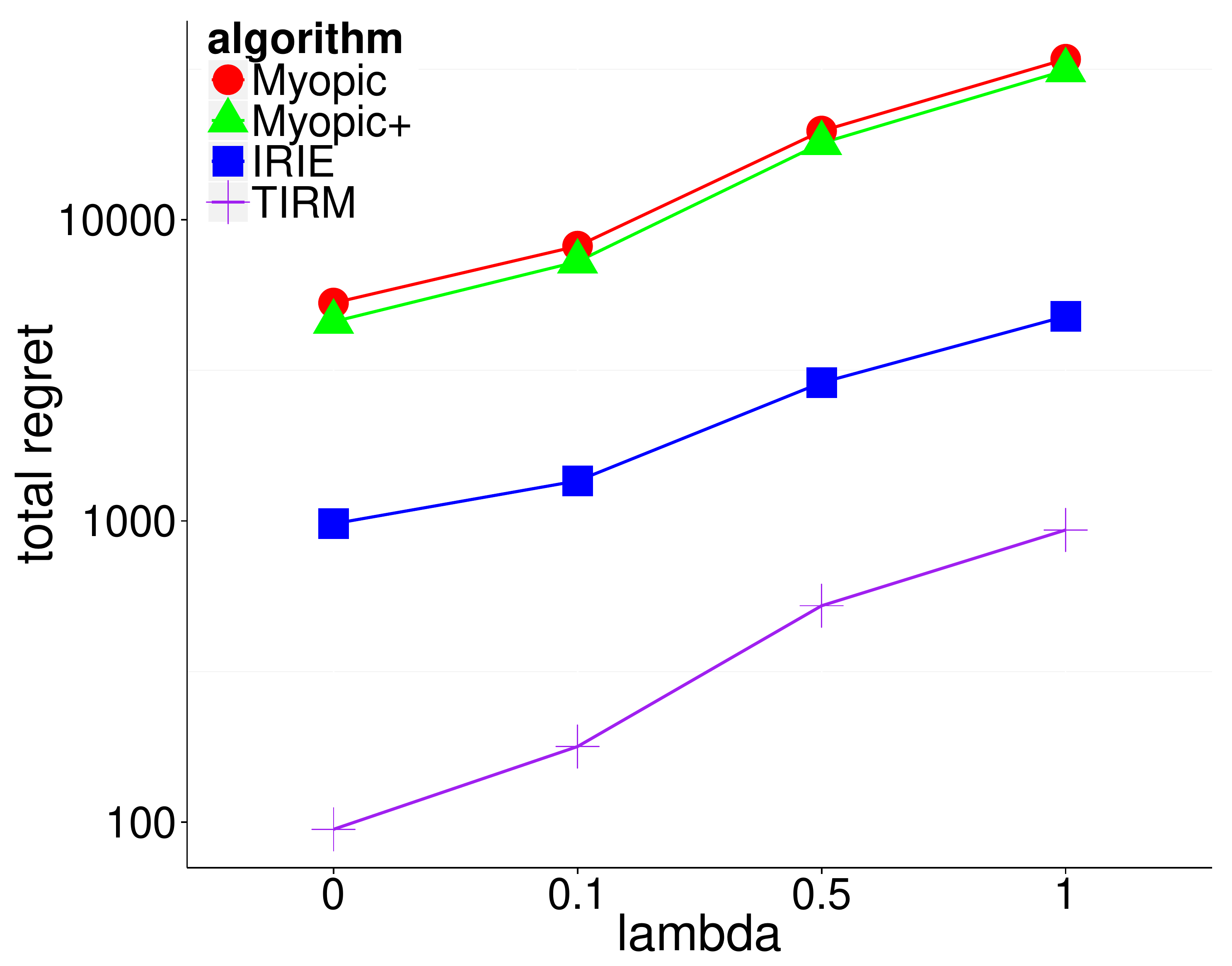}&
    \hspace{-2mm}\includegraphics[width=.24\textwidth]{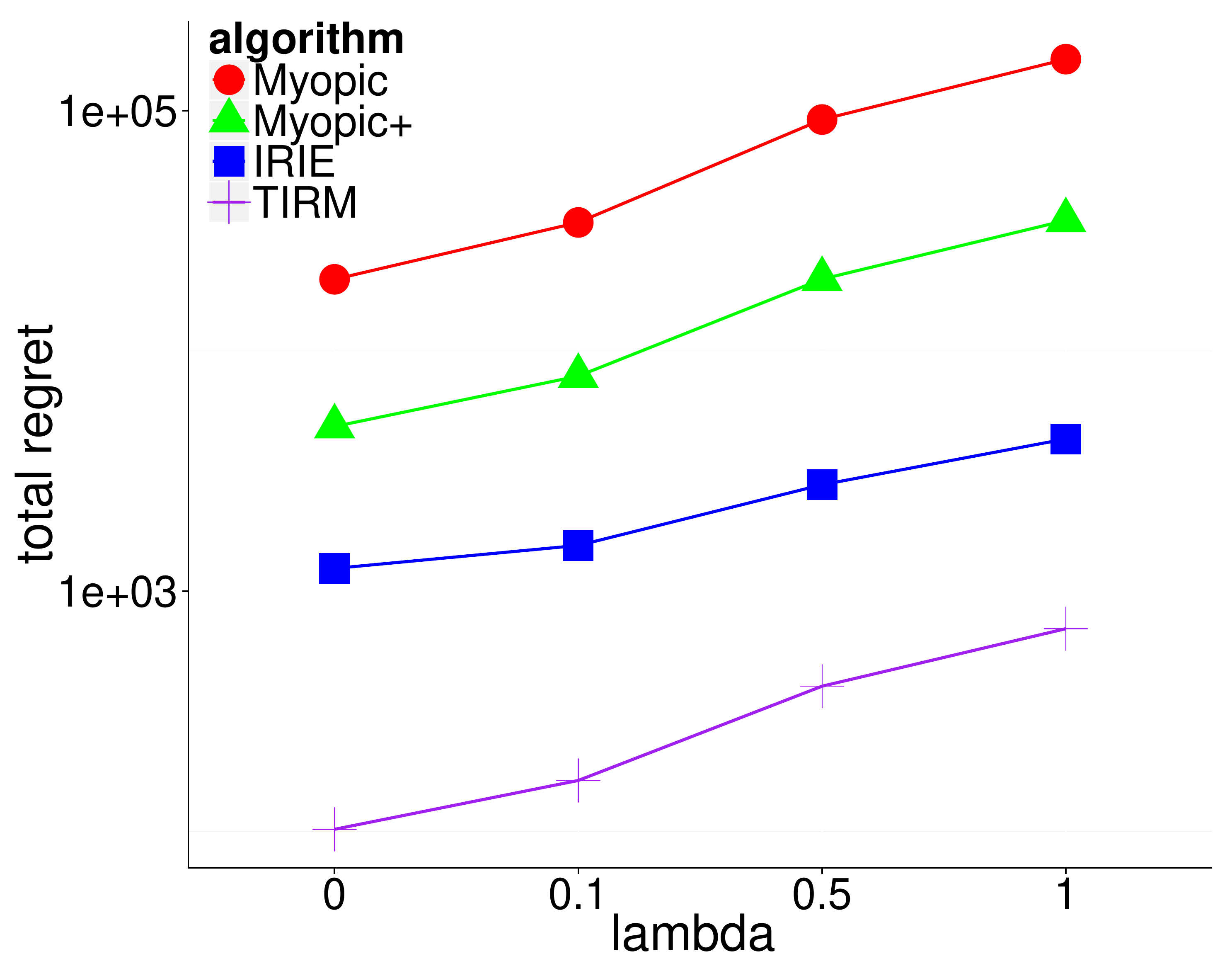}&
    \hspace{-2mm}\includegraphics[width=.24\textwidth]{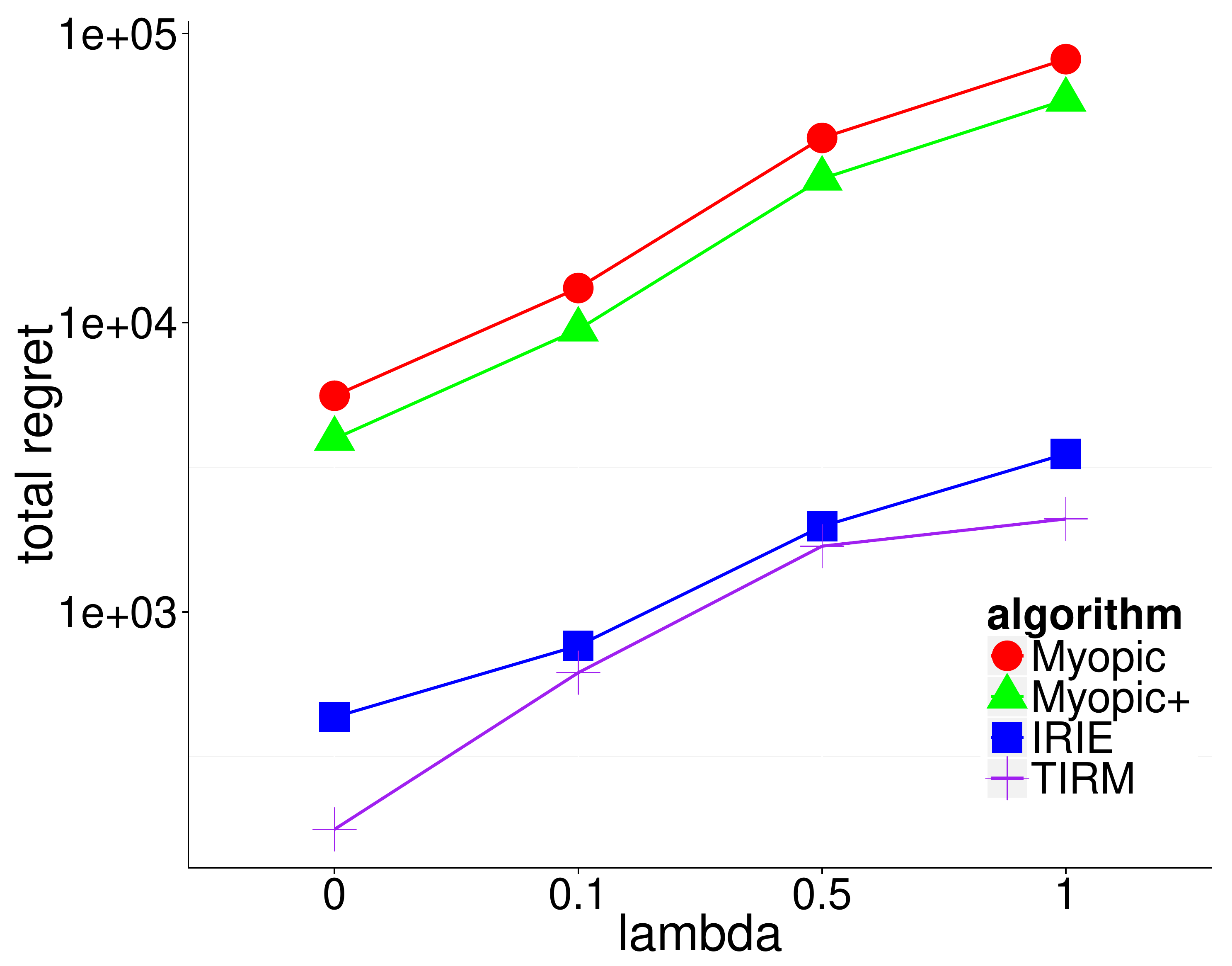}&
    \hspace{-2mm}\includegraphics[width=.24\textwidth]{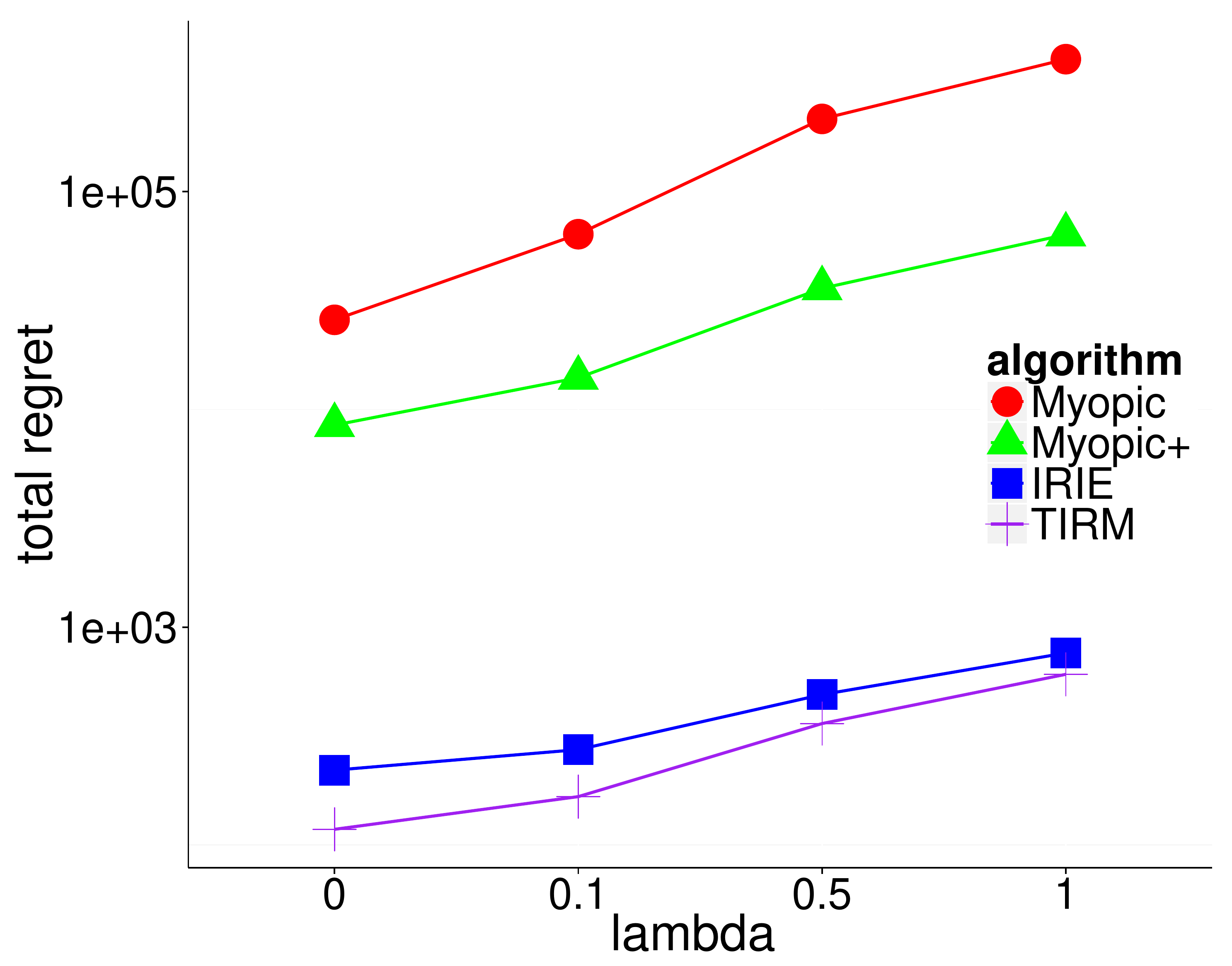}\\
	(a) \flix ($\kappa_u=1$)  & (b) \flix ($\kappa_u=5$) & (c) \epi ($\kappa_u=1$) & (d) \epi ($\kappa_u=5$)   \\
\end{tabular}
\vspace{-4mm}
\caption{Total regret (log-scale) vs. $\lambda$}
\label{fig:regret-lambda}
\end{figure*}

\smallskip\noindent\textbf{Algorithms.}
We test and compare the following four algorithms.

\smallskip\noindent$\bullet$ \myopicOne:
A baseline that assigns every user $u\in V$ in total $\kappa_u$  most relevant
ads $i$, i.e., those for which $u$ has the highest expected revenue, not considering any network effect, i.e., $\delta(u,i)\cdot cpe(i)$.
It is called ``myopic'' as it solely focuses on CTPs and CPEs and effectively ignores virality and budgets.
Allocation $\mathcal{A}$ in Fig.~\ref{fig:viral-ad-ex} follows this baseline.
	
\smallskip\noindent$\bullet$\myopicTwo:
This is an enhanced version of \myopicOne which takes budgets, but not virality, into account.
For each ad, it first ranks users w.r.t.\ CTPs and then selects seeds using this order until budget is exhausted.
User attention bounds are taken into account by going through the ads round-robin and advancing to the next seed if the current node $u$ is already assigned to $\kappa_u$ ads.

\smallskip\noindent$\bullet$ \irie:
An instantiation of Algorithm~\ref{alg:rmVanilla}, with the IRIE
heuristic~\cite{jung12} used for influence
spread estimation and seed selection.
IRIE has a damping factor $\alpha$ for accurately estimating
influence spread in its framework.
Jung et al.~\cite{jung12} 
report that $\alpha=0.7$ performs best on the datasets they tested. 
We did extensive testing on our datasets and found that $\alpha = 0.8$ gave the best spread estimation, and thus used $0.8$ in all quality experiments.


\smallskip\noindent$\bullet$ \fastAlgo: Algorithm~\ref{alg:rmRRsets}.
We set $\varepsilon$ to be $0.1$ for quality experiments on \flix and \epi, and $0.2$
for scalability experiments on \dblp and \livej (following \cite{tang14}).  

For all algorithms, we evaluate the final regret of their output seed sets using Monte Carlo simulations (10K runs) for neutral, fair, and accurate comparisons.






\subsection{Results of Quality Experiments}
 

\smallskip\noindent\textbf{Overall regret.}
First, we compare overall regret (as defined in Eq.~\eqref{eq:total_regret}) against attention bound $\kappa_u$, varied from $1$ to $5$, with two choices $0$ and $0.5$ for $\lambda$.
Fig.~\ref{fig:regret-kappa} shows that the overall regret (in log-scale) achieved by \fastAlgo and \irie are significantly lower than that of \myopicOne and \myopicTwo.
For example, on \flix with $\lambda = 0$ and $\kappa_u=1$, overall regrets of \fastAlgo, \irie, \myopicOne, and \myopicTwo, expressed relative to the total budget, are $2.5\%$, $26.1\%$, $122\%$, $141\%$, respectively.
On \epi with the same setting,  the corresponding regrets are $6.5\%$, $15.9\%$, $145\%$, and $205\%$.
\myopicOne, and \myopicTwo typically always overshoot the budgets as they are not virality-aware when choosing seeds. Notice that even though \myopicTwo is budget conscious, it still ends up overshooting the budget as a result of not factoring in virality in seed allocation. 
In almost all cases, overall regret by \fastAlgo goes down as $\kappa_u$ increases. 
The trend for \myopicOne and \myopicTwo is the opposite, caused by their larger overshooting with larger $\kappa_u$.
This is because they will select more seeds as $\kappa_u$ goes up, which causes higher revenue (hence regret) due to more virality.



\begin{figure}[htp]
\centering
\subfigure[\flix]{ 
\includegraphics[width=0.23\textwidth]{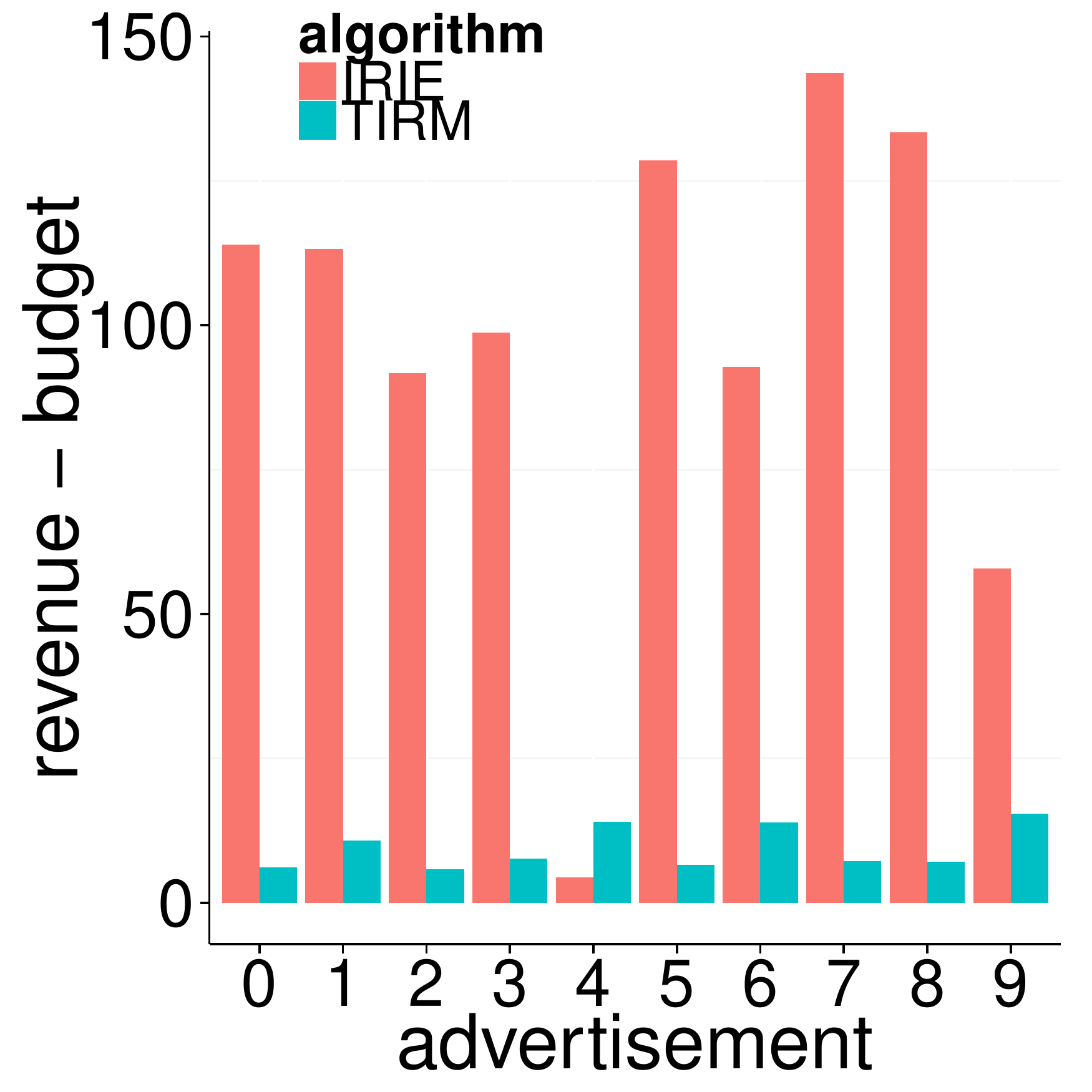}
}
\hspace{-3mm}
\subfigure[\epi]{
\includegraphics[width=0.23\textwidth]{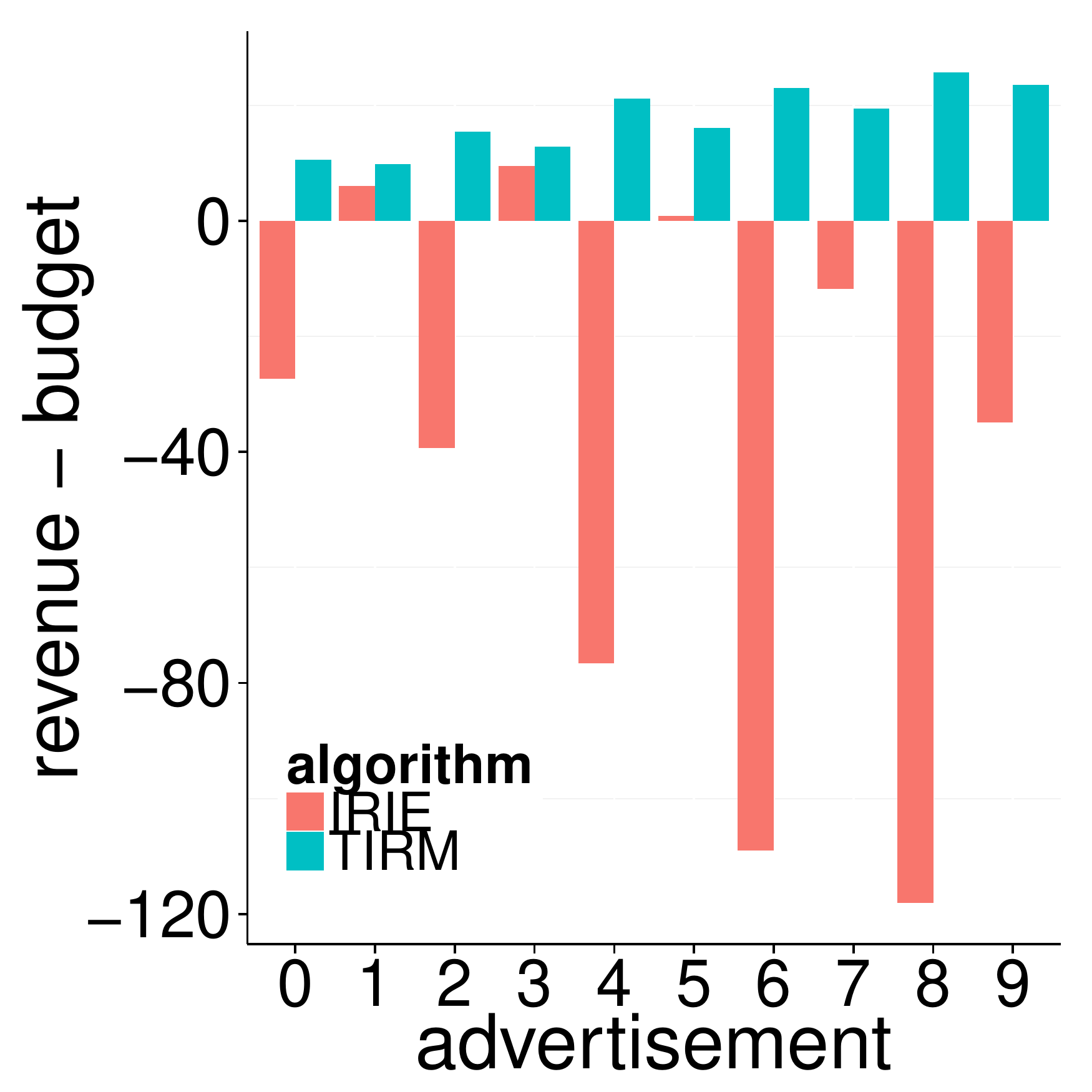}
}
\vspace{-2mm}
\caption{Distribution of individual regrets ($\lambda = 0$, $\kappa_u = 5$).}\label{fig:regret-distr}
\vspace{-2mm}
\label{fig:regret-distr}
\end{figure}

We also vary $\lambda$ to be $0$, $0.1$, $0.5$, and $1$ and show the overall regrets under those values in Fig.~\ref{fig:regret-lambda} (also in log-scale), with two choices $1$ and $5$ for $\kappa_u$.
As expected, in all test cases as $\lambda$ increases, the overall regret also goes up.
The hierarchy of algorithms (in terms of performance) remains the same as in Fig.~\ref{fig:regret-kappa}, with \fastAlgo being the consistent winner.
\WL{Note that even when $\lambda$ is as high as $1$, \fastAlgo still wins and performs well.
This suggests that the $\lambda$-assumption ($\lambda \le \delta(u,i)\cdot cpe(i)$, $\forall$ user $u$ and ad $i$) in Theorem~\ref{thm:full-regret} is conservative as \fastAlgo can still achieve relatively low regret even with large $\lambda$ values.}

\smallskip\noindent\textbf{Drilling down to individual regrets.}
Having compared overall regrets, we drill down into the budget-regrets (see \textsection\ref{sec:theory}) achieved for different individual ads by \fastAlgo and \irie.
Fig.~\ref{fig:regret-distr} shows the distribution of budget-regrets across advertisers for both algorithms.
On \flix, both algorithms overshoot for all ads, but the distribution of \fastAlgo-regrets is much more uniform than that of \irie-regrets.
E.g., for the fourth ad, \irie even achieves a smaller regret than \fastAlgo, but for all other ads, their \irie-regret is at least $3.8$ times as large as the  \fastAlgo-regret, showing a heavy skew. 
On \epi, \fastAlgo slightly overshoots for all advertisers as in the case of \flix, while \irie falls short on 7 out of 10 ads and its budget-regrets are larger than \fastAlgo for most advertisers.
\WL{
Note that \myopicOne and \myopicTwo are not included here as Figs.~\ref{fig:regret-kappa} and \ref{fig:regret-lambda} have clearly demonstrated that they have significantly higher overshooting\footnote{Their regrets are all from overshooting the budget on account of ignoring virality effects.}.
}



\smallskip\noindent\textbf{Number of targeted users.}
We now look into the distinct number of nodes targeted at least once by each  algorithm, as $\kappa_u$ increases from $1$ to $5$.
Intuitively, as $\kappa_u$ decreases, each node becomes ``less available'', and thus we may need more distinct nodes to cover all budgets, causing this measure to go up.
The stats in Table~\ref{table:numseeds-flix} confirm this intuition, in the case of \fastAlgo, \irie, and \myopicTwo.
\myopicOne is an exception since it allocates an ad to every user (i.e., all $|V|$ nodes are targeted). 
\WL{
Note that on \epi, \fastAlgo \LL{targeted} more nodes than \irie.
The reason is that \irie tends to overestimate influence spread on \epi, resulting in pre-mature termination of Greedy.
When MC is used to estimate ground-truth spread, the revenue would fall short of budgets (see Fig.~\ref{fig:regret-distr}).
The behavior of \irie is completely the opposite on \flix, showing its lack of consistency as a pure heuristic.
}

\begin{table}
\centering
\small
  \begin{tabular}{|c|c|c|c|c|c|}
    \hline
     {\bf \flix} & $\kappa_u = 1$ & $2$ &$3$ & $4$ & $5$ \\ \hline
    \fastAlgo & 868 & 352 & 319 & 263 & 257  \\ \hline
    \irie & 3.7K & 1.7K & 1.5K & 1237 & 1222 \\ \hline
    \myopicOne & 29K & 29K & 29K & 29K & 29K  \\ \hline
    \myopicTwo & 27K & 13K & 9.6K & 7.5K & 6.6K  \\ \hline
    \hline
    {\bf \epi} & $\kappa_u = 1$ & $2$ &$3$ & $4$ & $5$ \\ \hline
	\fastAlgo & 4.4K & 901 & 396 & 233 & 175  \\ \hline
    \irie & 3.1K & 826 & 393 & 251 & 183 \\ \hline
    \myopicOne & 76K & 76K & 76K & 76K & 76K  \\ \hline
    \myopicTwo & 55K & 28K & 19K & 15K & 13K  \\ \hline
  \end{tabular}
 \caption{Number of nodes targeted vs. attention bounds ($\lambda = 0$)}
 \label{table:numseeds-flix}
 \vspace{-4mm}
\end{table}

\begin{figure*}
\begin{tabular}{cccc}
    \includegraphics[width=.21\textwidth]{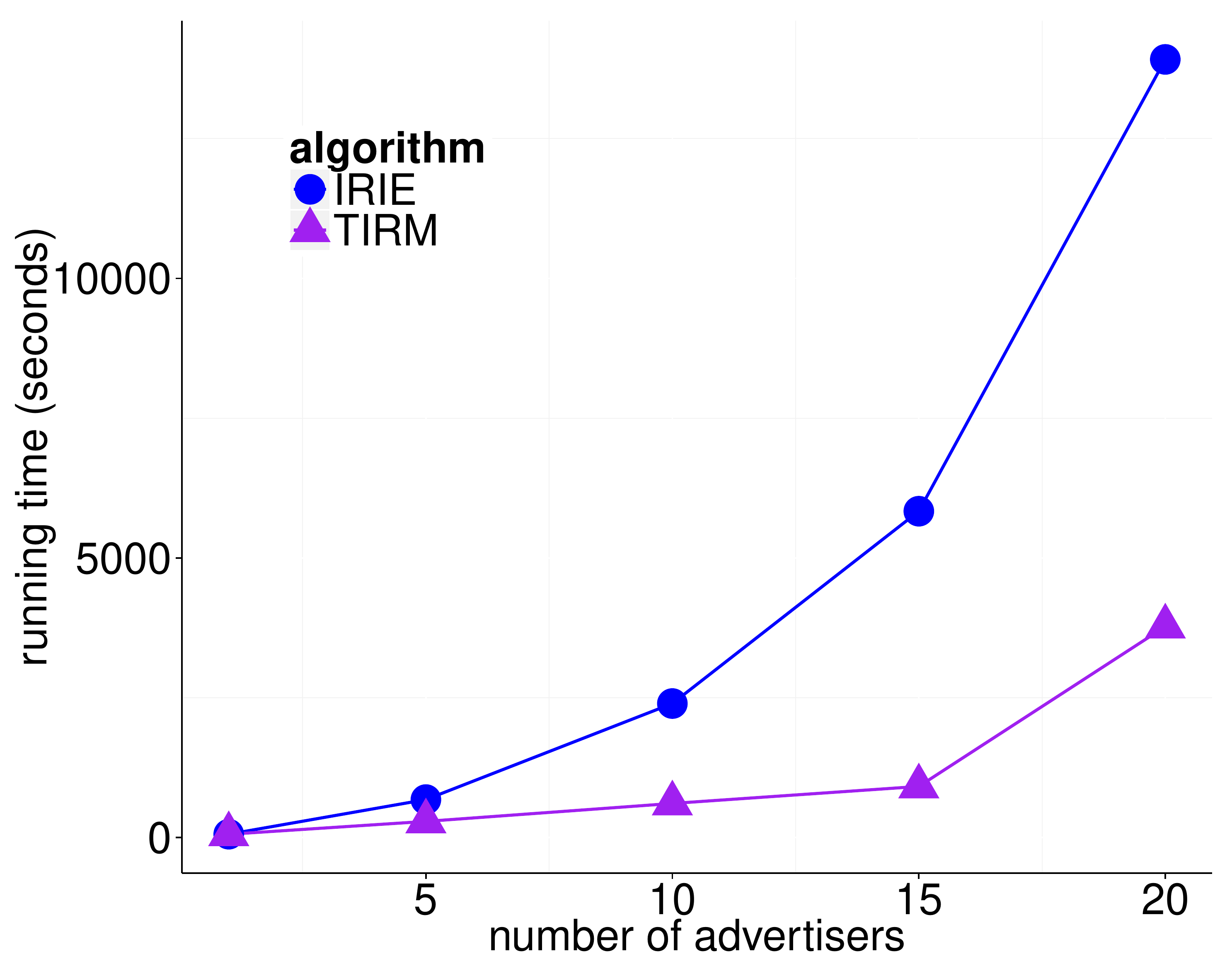}&
    \hspace{2mm}\includegraphics[width=.21\textwidth]{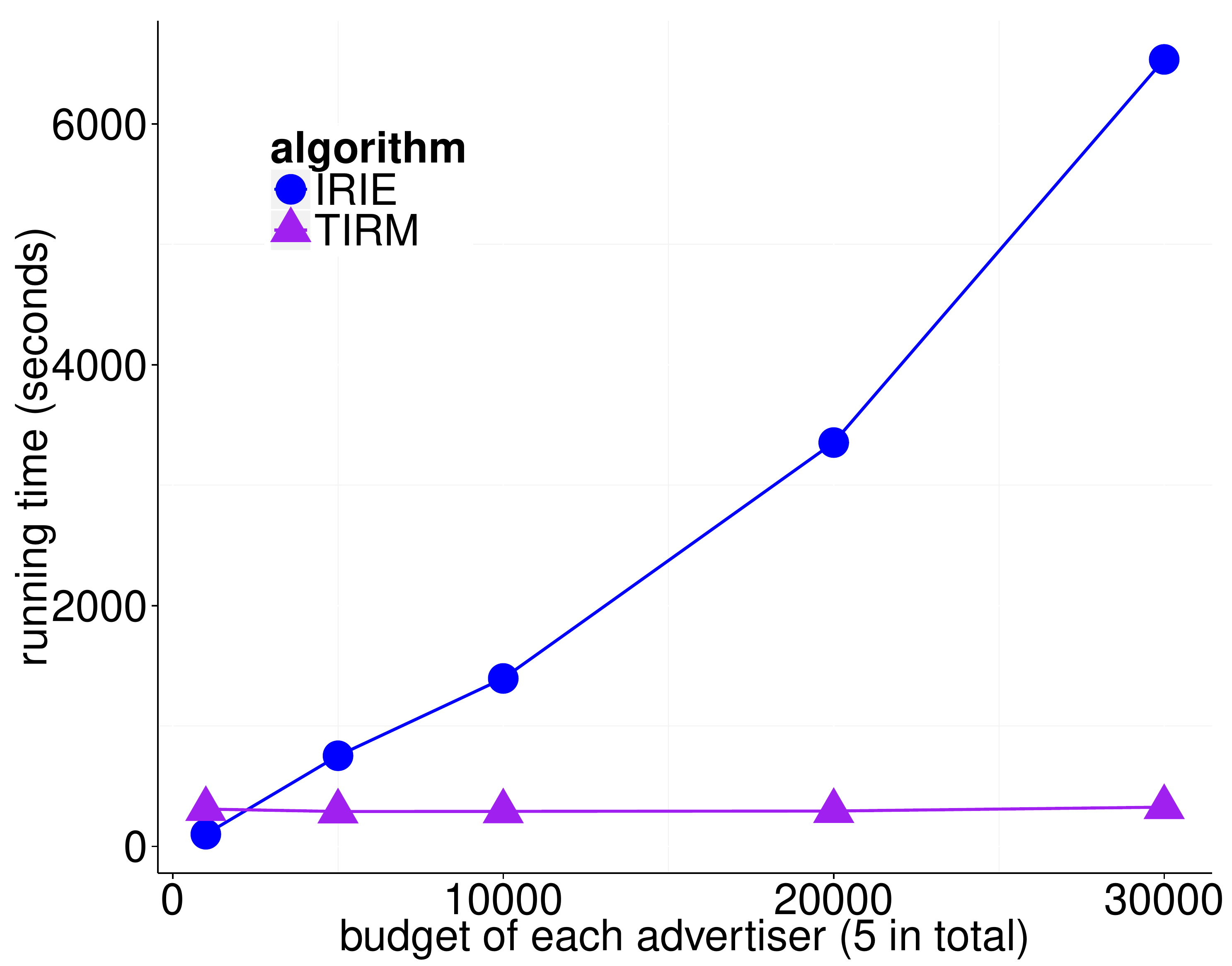}&
    \hspace{2mm}\includegraphics[width=.21\textwidth]{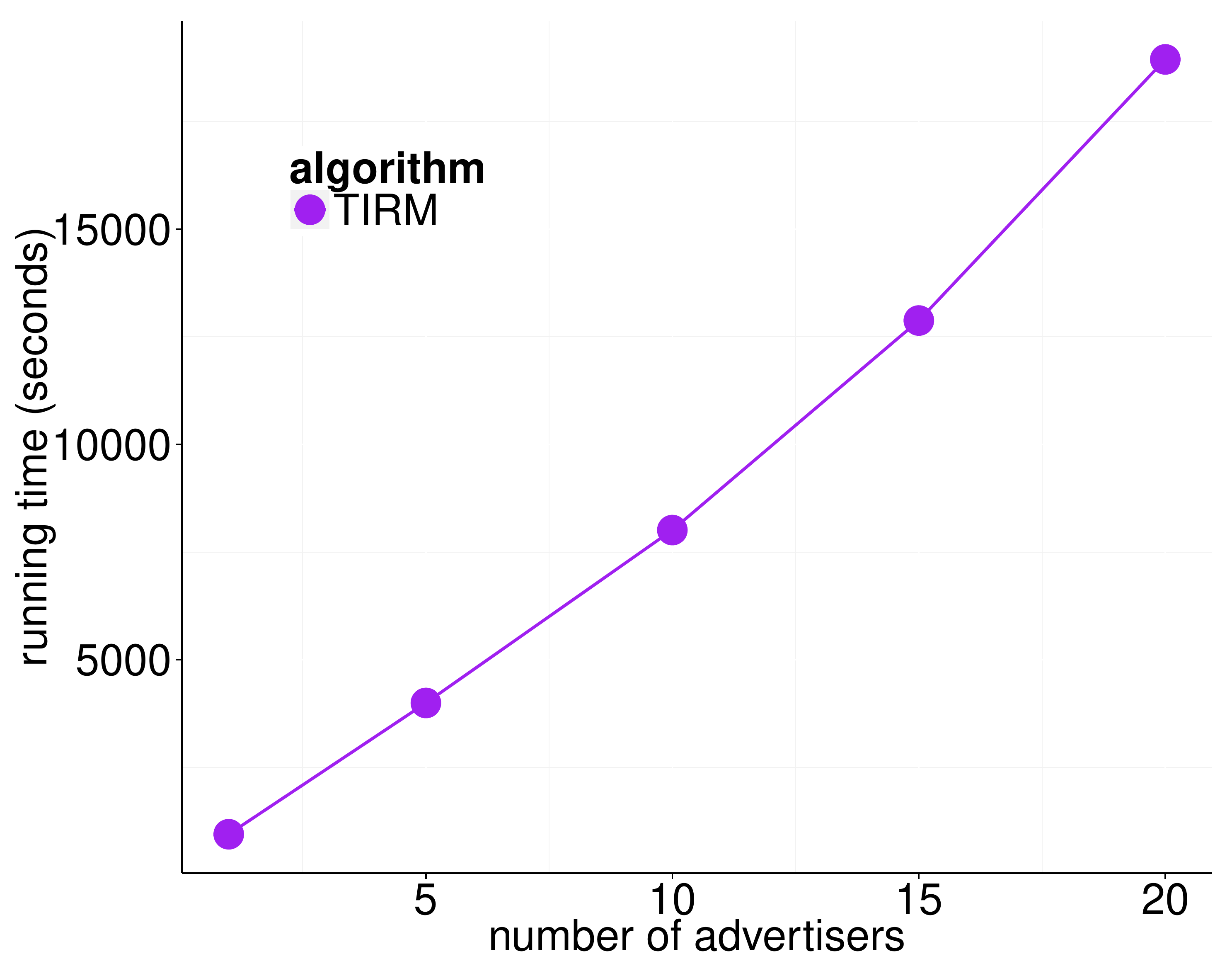}&
    \hspace{2mm}\includegraphics[width=.21\textwidth]{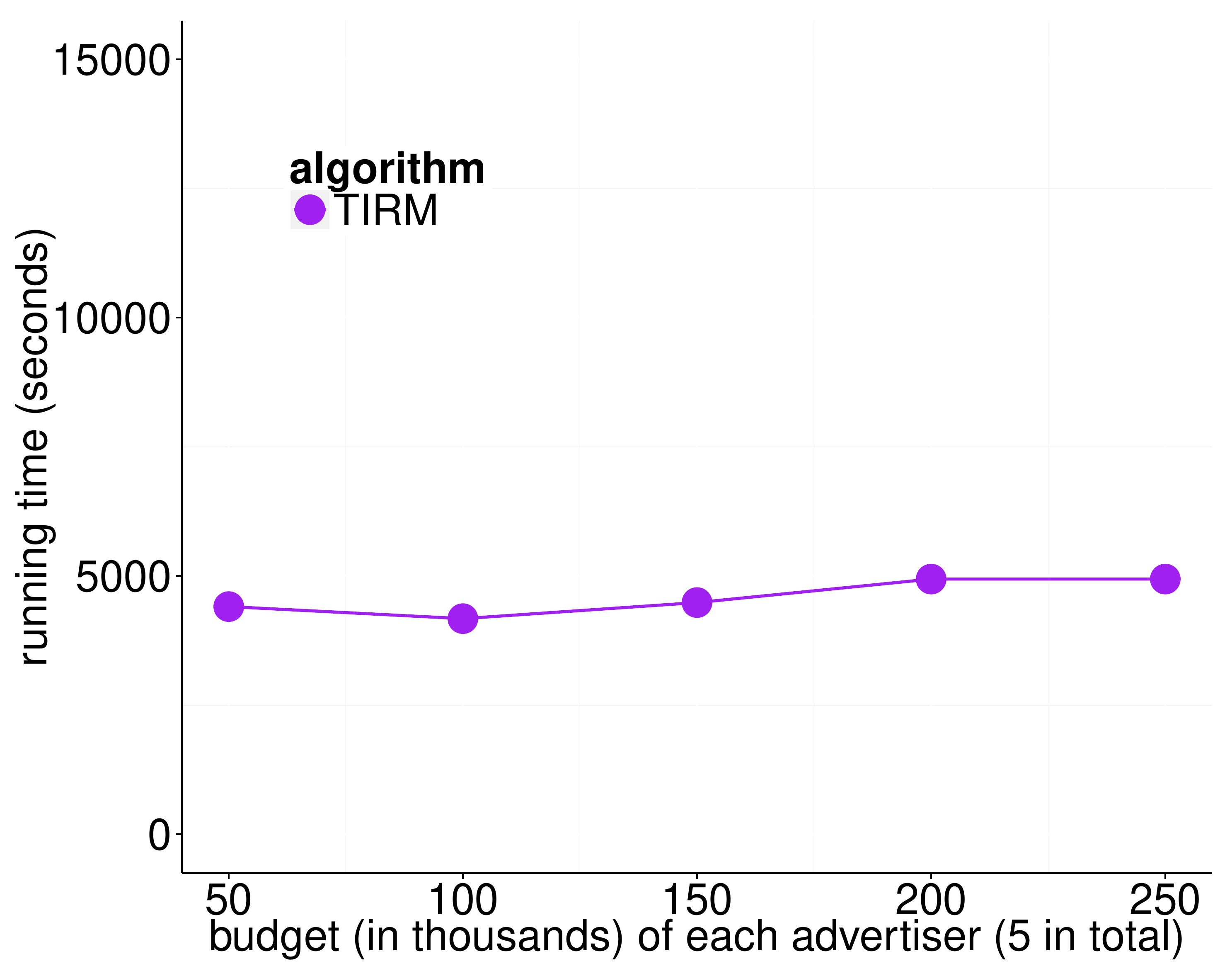}\\
	(a) \dblp ($h$) & (b) \dblp (budgets)  & (c) \livej ($h$) & (d) \livej (budgets)  \\
\end{tabular}
\vspace{-2mm}
\caption{Running time of \fastAlgo and \irie on \dblp and \livej}
\label{fig:time}
\end{figure*}

\subsection{Results of Scalability Experiments}\label{sec:scala}
We test the scalability of \fastAlgo and \irie on \dblp and \livej.
For simplicity, we set all CPEs and CTPs to $1$ and $\lambda$ to $0$, and the values of these parameters do not affect running time or memory usage. 
Influence probabilities on each edge $(u,v)\in E$ are computed using the Weighted-Cascade model~\cite{ChenWW10}: $p_{u,v}^i = \frac{1}{|N^{in}(v)|}$ for all ads $i$.
\WL{We set $\alpha=0.7$ for \irie and $\varepsilon = 0.2$ for \fastAlgo, in accordance with the settings in~\cite{tang14,jung12}}.
Attention bound $\kappa_u = 1$ for all users. 
We emphasize that our setting is fair and ideal for testing scalability as it simulates a fully competitive case: all advertisers compete for the same set of influential users (due to all ads having the same distribution over the topics) and the attention bound is at its lowest, which in turn will ``stress-test'' the algorithms by prolonging the seed selection process.

\CA{We test the running time of the algorithms in two dimensions: Fig.~\ref{fig:time}(a) \& \ref{fig:time}(c) vary $h$ (number of ads) with per-advertiser budgets $B_i$ fixed (5K for \dblp, 80K for \livej), while Fig.~\ref{fig:time}(b) \& \ref{fig:time}(d) vary $B_i$ when fixing $h=5$. Note that \irie results on \livej (Fig.~\ref{fig:time}(c) \& \ref{fig:time}(d)) are excluded due to its huge running time, details to follow.}

At the outset, notice that \fastAlgo significantly outperforms \irie in terms of running time. 
\CA{Furthermore, as shown in Fig.~\ref{fig:time}(a), the gap between \fastAlgo and \irie on \dblp becomes larger as $h$ increases.
For example, when $h=1$,  both algorithms finish in 60 secs, but when $h=15$, \fastAlgo is 6 times faster than \irie.} 
\eat{On \livej, \fastAlgo took about 5 hours to complete while \irie could not finish in 48 hours (so we exclude it from \livej plots). } 

\CA{On \livej, \fastAlgo scales almost linearly w.r.t. the number of advertisers,
It took about 16 minutes with $h=1$ (47 seeds chosen) and 5 hours with $h=20$ ($4649$ seeds).
\irie took about 6 hours to complete for $h=1$, and did not finish after 48 hours for $h \ge 5$. 
When budgets increase (Fig.~\ref{fig:time}(b)), \irie's time will go up (super-linearly) due to more iterations of seed selections, but \fastAlgo remains relatively stable (barring some minor fluctuations).
On \livej , \fastAlgo took less than 75 minutes with $B_i = 50K$ ($254$ seeds). 
Note that once $h$ is fixed, \fastAlgo's running time depends heavily on the required number of random RR-sets ($\theta$) for each advertiser rather than budgets, as seed selection is a linear-time operation for a given sample of RR-sets.
Thus, the relatively stable trend on Fig.~\ref{fig:time}(b) \& \ref{fig:time}(d) 
is due to the subtle interplay among the variables to compute $L(s, \varepsilon)$ (Eq.~\ref{eqn:timLB}); similar observations were made for TIM in \cite{tang14}.} 

%
%
%
%
%
%

Table~\ref{table:mem} shows the memory usage of \fastAlgo and \irie.
As \fastAlgo relies on generating a large number of random RR-sets for accurate estimation of influence spread, we observe high memory consumption by this algorithm, similar to the TIM algorithm~\cite{tang14}. The usage steadily increases with $h$. The memory usage of \irie is modest, as its computation requires merely the input graph and probabilities. However, \irie is a heuristic with no guarantees, which is reflected in its relatively poor regret performance compared to \fastAlgo. Furthermore, as seen earlier, \fastAlgo scales significantly better than \irie on all datasets.


\begin{table}
\centering
\small
  \begin{tabular}{|c|c|c|c|c|c|}
    \hline
    {\bf \dblp} & $h = 1$ & $5$ & $10$ & $15$ & $20$ \\ \hline
    \fastAlgo & 2.59 & 12.6 & 27.1 & 40.6 & 60.8  \\ \hline
    \irie & 0.16 & 0.30 & 0.48 & 0.54 & 0.84 \\ \hline
    \hline
    {\bf \livej} & $h = 1$ & $5$ & $10$ & $15$ & $20$ \\ \hline
    \fastAlgo & 3.72 & 15.6 & 32.5 & 47.7 & 60.9  \\ \hline
  \end{tabular}
 \caption{Memory usage (GB)}
 \label{table:mem}
 \vspace{-6mm}
\end{table}

\eat{

\subsection{The strategy and planning}
Some high level points.
\begin{itemize}
	\item Our main competition is with ``CTR based'' algorithm -- the
		one we use in Fig 1 to motivate our problem.
	\item We should play with the parameters we have in the problem
		stmt: especially $k_u$. One argument we make in the introduction
		section is that using virality helps in diluting attention
		bound. We should show it in experiments.
	\item For all the experiments in which we do not compare CTR
		specifically, use CTR = 0.01. It can be implemented easily, and
		our results will look better.
	\item While we may have multiple algorithms for seed selection,
		comparing the quality of the seeds must be done by SINGLE method
		-- in particular, MC simulations. Thus, we need a MC code that
		takes arbitrary seed set as input, and outputs revenue / regret.
\end{itemize}

Results we need:
\begin{enumerate}
	\item \textbf{Basic stats.} Dataset sizes (number of users, edges, ads).
		Influence probability distributions on all datasets, for some
		(may be a couple on each dataset) of the ads. Say, distribution
		looks similar for other ads. Wei/Amit can make these plots.

	\item \textbf{First exp.} Compare our algorithm (ITIM) with basic ``CTR
		based'' algorithm -- the one we used in Fig 1 -- assign ads
		to users just based on CTRs. Show that our algorithm is better,
		not only in terms of regret, but revenue as well. To do that, we
		would need CTRs -- randomize them uniformly in [0, 0.03].
		In this same experiment, we can play with the parameter $k_u$.
		Show that low values of $k_u$ hurt the CTR based approach much
		more than ITIM approach. Set $k_u = k = [1,5]$. Thus, on X-axis,
		we will have $k$, and on Y-axis, we will have revenue. Two lines
		for two algorithms (CTR vs ITIM). Two plots -- one for each
		dataset. We can also make similar plots with regret as well
		(optional).

	\item \textbf{Regret, Revenue and Fairness:} Analyze regret and
		revenue in detail. Whether the regret is mostly on +ve side, or
		on -ve side? Whether regret is balanced among various
		advertisers? To do that, make a scatter plot -- on X-axis, have
		regret for each advertiser (without absolute value), and on
		Y-axis, have revenue for the corresponding advertiser. If
		regrets are balanced, that implies that host is being ``fair''
		to advertisers.

	\item \textbf{ITIM vs other Seed selection algorithms.} Having
		established that our approach is better than simple CTR based
		approach, we move to show that our algorithm is better than
		other ``seed selection'' algorithms. What is the seed set size
		-- w.r.t many seed selection algorithms. Whats the regret?
		Revenue? Does our algorithm uses
		less seeds? If yes, then that's good as it uses less number of
		slots. Running time? Memory usage? MORE CLARITY TO COME HERE.

	\item Baseline seed selection algorithms: (i) random (@Wei, can you
		clarify how random would work?). (ii) Reverse greedy. (iii) High
		degree. MORE CLARITY TO COME HERE.

\end{enumerate}

\subsection{Real start of experiments section}
In this section we describe the experimental setup for evaluating the
effectiveness and efficiency of our regret minimization framework with
our Iter-TIM algorithm. The overall evaluation aims at:
\squishlist
\item \emph{Estimation Quality}: Evaluating the total influence spread!!
	estimation quality of the RR sampling based Iter-TIM and TIM
	algorithms via comparisons to the total coverages obtained by CELF.
	We also demonstrate the effectiveness of our Iter-TIM algorithm 
\item \emph{Allocation efficiency}: Evaluating the efficiency of our algorithm with IRIE 
\item \emph{Allocation effectiveness} Comparing the total regret
	achieved by minimum regret allocation versus myopic allocation.
\squishend

\subsection{Experimental setting and datasets.}  We perform our
experiments on two real-world datasets, provided with topic-aware
infuence parameters obtained from two prior
studies~\cite{barbi12,tang2009social}. 

The first real-world dataset is extracted from
Flixster,\footnote{\url{http://www.flixster.com/}} a social movie web
site, where users can discover new movies and share reviews and ratings
with their friends. The network is defined by roughly $30$k users and
$425$k unidirectional social links between them, while the propagation
log records the timestamp at which a user provided a rating on a
particular movie, out of a catalog of $12$k items. This dataset comes
with the social graph and a log of past propagations (ratings on
movies), and it has been widely used to test the effectiveness of social
influence propagation models and influence maximization problems
\cite{amitgoyal_vldb2012_p73,barbi12}. We focus on the influence episode
defined by a user $v$ rating a movie that is later on rated by one of
his friends $u$: in this case we see it as a potential influence of $v$
over $u$.  In the movie context, it is natural to assume that each item
can exhibit several topics (i.e. genres) and each user may exhibit
different degree of influence on different topics. We use the
topic-aware influence probabilities and the item specific topic
distributions, which are jointly learnt by maximum likelihood estimation
method for the TIC model with $Z=10$, provided by the authors of
\cite{barbi12}. 

The other real-world dataset is the coauthor relations extracted from
ArnetMiner,\footnote{\url{http://www.flixster.com/}} a free online
academic search system. The nodes of the ArnetMiner coauthorship network
correspond to authors, and two authors have an edge if they have
coauthored a paper. Tang \emph{et al.}~\cite{tang2009social} first
perform statistical topic modeling approach to assign author-specific
topic distributions to each author, and paper-specific topic
distributions to each paper. Using these author and paper specific topic
distributions, they apply factor graph analysis method to obtain the
influence probabilities for each topic. The resulting network contains
$5k$ nodes, $34k$ directed edges and $Z=8$ computer science related
topics. 

In order to test the scalability of our approach, we also use one
real-world network from SNAP Stanford with synthetically generated
topic-aware influence parameters. 

\subsection{Estimation Quality}
$\epsilon = 0.1$. 
Evaluating the quality of the total coverage estimated by the
incremental RR sampling approach in comparison to the coverage achieved
by TIM and CELF++ Also run-time and kendall tau.

Randomly selected 50 items and how we evaluated their spread by runign

\begin{table}[t!]
\centering
\caption{Avg. Expected Spread}\label{tab:totSpreadQuality}

\begin{tabular}{|r|r|r|r|}
\hline
Algorithm & Exp.Spread & RMSE & NRMSE \\
\hline
\hline
CELF & $1562.03 \pm 59.67$  & - & - \\
Iter-TIM & $1523.83 \pm 58.44$ & $39.99$ & $\mathbf{0.036}$\\
TIM & $1524.31 \pm 58.46$ & $39.49$ & $\mathbf{0.035}$\\
\hline
\end{tabular}
\end{table}


// seed set similariy will follow now coding
// explain also why tim not good in terms of regret
Having established the quality of influence spread estimations with our iterated RR sampling based TIM approach, we know move on to why TIM is not suitable for our case in addition to the unknown kappa requioremen.


%
%

\begin{table}[t!]
\centering
\caption{Regret Minimization Iter-TIM vs TIM.}
\label{regretTimCompare} 
\begin{tabular}{|c|c|c|c|}
\hline
\multicolumn{2}{|c|}{$k$ } & Iter-TIM & TIM \\
\hline
\multirow{5}{10mm}{\centering \\Alloc.\\Regret} & $1$ & $54.54$  & $2112.73$ \\
& $5$ & $124.53$ & $413.40$ \\
& $10$ & $179.02$ & $1550.44$ \\
& $15$ & $147.49$ & $317.43$ \\
& $20$ & $147.49$ & $349.11$ \\
\cline{2-4}
\hline
\multirow{5}{10mm}{\centering \\Run-time\\(sec.)} & $1$ & $234$ & $192$ \\
& \centering $5$ & $213$ & $129$ \\
& \centering $10$ & $212$ & $124$ \\
& \centering $15$ & $211$ & $120$ \\
& \centering $20$ & $208$ & $121$ \\
\cline{2-4}
\hline
\end{tabular}
\end{table}


\subsection{Allocation Efficiency and Quality}
Evaluating the efficiency of our algorithm in comparison to TIM and IRIE.  
How did we select the items and budgets and cpes - highly compatitve items in each topic, KL = 0, 4 levels of cpe

\subsection{Allocation Effectiveness}
Comparing the total regret achieved by minimum regret allocation versus myopic allocation and random allocation. 
Allocation style: One option is also to compare greedy allocation style: our vanilla versus random order, priority to the one with maximum regret  or minimum regret


}

\section{Conclusions and Future Work}
\label{sec:concl}
\eat{
Driven by 
the multi-billion dollar 
industry, the area of computational advertising has attracted a lot of interest during the last decade. The central problem 
is to find the ``best match'' between a given user in a given context and a suitable advertisement. 
With the advent of social advertising, the standard interest-driven allocation of ads to users has become inadequate as it fails to leverage the potential of social influence.
When online advertising is performed on social networking and microblogging platforms, the context of the user includes not just her interests or queries, but also 
the users she follows and is influenced by, and the users that follow her and are influenced by her.

We showed allocations that take viral ad propagation into account can achieve a significantly better performance than those that do not.
Unlike computational advertising, social advertising is still in its infancy.

On the other hand, significant scientific strides have been made in viral marketing with sophisticated models and optimization algorithms, but with limited market penetration. Part of the reason for this can be traced to the ideal settings assumed by the literature, with unclear business models. 
}

\eat{
we have viral marketing, which has also attracted a lot of interest in the last 10 years, but where the theoretical progresses have not be so tightly linked to real-world exploitation, as instead it has happened for computational advertising. 
Most of the literature on viral marketing  has focused on developing efficient algorithms for the influence maximization problem, under ideal settings and based on unclear business models. 
}  

In this work, we build a bridge between viral marketing and social advertising, by 
drawing on the viral marketing literature to study influence-aware ad allocation for social advertising,
under real-world business model, paying attention to 
important practical factors like relevance, social proof, user attention bound, and advertiser budget.
In particular, we study the problem of regret minimization from the host perspective, characterize its hardness and devise a simple scalable algorithm with quality guarantees w.r.t.\ the total budget.
Through extensive experiments we 
demonstrate its superior performance over natural baselines. 

Our work takes a first step toward enriching the framework of social advertising by integrating it with powerful ideas from viral marketing and making the latter more applicable to real online marketing problems. It opens up several interesting avenues for further research. 
\eat{We assumed a discrete-time propagation model, and that the social network and peer-to-peer influence probabilities are known beforehand. Furthermore, we did not enforce any \emph{hard} competition constraints (e.g., a user will buy at most one camera).} 
Studying continuous-time propagation models, possibly with the network and/or influence probabilities not known beforehand (and to be learned), and possibly in presence of hard competition constraints, is a direction that offers a wealth of possibilities for future work. 




\end{document}